\theoremstyle{thmstyleone}%
\newtheorem{theorem}{Theorem}
\newtheorem{proposition}[theorem]{Proposition}%
\newtheorem{lemma}[theorem]{Lemma}
\newtheorem{corollary}[theorem]{Corollary}
\theoremstyle{thmstyletwo}%
\newtheorem{remark}{Remark}%
\theoremstyle{thmstylethree}%
\newtheorem{definition}{Definition}%
\theoremstyle{plain}
\newcommand{\cH}{\mathcal{H}}
\newcommand{\cO}{\mathcal{O}}
\newcommand{\ket}[1]{\vert #1 \rangle}
\newcommand{\bra}[1]{\langle#1\vert}
\newcommand{\ketbra}[2]{\ket{#1}\!\bra{#2}}
\newcommand{\braket}[2]{\langle#1\vert#2\rangle}
\DeclareMathOperator{\tr}{Tr}
\newcommand{\id}{\text{id}}
\def\cH{\mathcal{H}}
\newcommand{\dm}[1]{\vert#1\rangle\langle#1\vert}
\newcommand{\bal}{\begin{equation}\begin{aligned}}
\newcommand{\eal}{\end{aligned}\end{equation}}
\newcommand{\mB}{\mathcal{B}}
\newcommand{\mH}{\mathcal{H}}
\newcommand{\base}[1]{#1}
\def\QED{\mbox{\rule[0pt]{1.5ex}{1.5ex}}}
\newenvironment{proofof}[1]{\vspace*{5mm} \par \noindent
\quad{\it Proof of #1: \hspace{2mm}}}{\hspace*{\fill}~\QED}
\def\Label{\label}
\begin{document}

\title[When quantum memory is useful]{When quantum memory is useful for dense coding}

\author*[1,2]{\fnm{Ryuji} \sur{Takagi}}\email{ryujitakagi.pat@gmail.com}

\author*[3,4,5]{\fnm{Masahito} \sur{Hayashi}}\email{hmasahito@cuhk.edu.cn}

\affil[1]{\orgdiv{Department of Basic Science, Graduate School of Arts and Sciences}, \orgname{The University of Tokyo}, \orgaddress{3-8-1 Komaba, Meguro-ku, \city{Tokyo}, \postcode{153-8902}, \country{Japan}}}
\affil[2]{\orgdiv{Nanyang Quantum Hub, School of Physical and Mathematical Sciences}, \orgname{Nanyang Technological University}, \orgaddress{\postcode{637371}, \country{Singapore}}}

\affil[3]{\orgdiv{School of Data Science, The Chinese University of Hong Kong, Shenzhen}, \orgaddress{ \city{Shenzhen}, \postcode{518172}, \country{China}}}

\affil[4]{\orgname{International Quantum
Academy (SIQA)}, \city{Shenzhen}, \postcode{518048}, \country{China}}

\affil[5]{\orgdiv{Graduate School of Mathematics}, \orgname{Nagoya University}, \orgaddress{\city{Nagoya}, \postcode{464-8602}, \country{Japan}}}


\abstract{We discuss dense coding with $n$ copies of a specific preshared state between the sender and the receiver when the encoding operation is limited to the application of group representation. Typically, to act on multiple local copies of these preshared states, the receiver needs quantum memory, because in general the multiple copies will be generated sequentially. Depending on available encoding unitary operations, we investigate what preshared state offers an advantage of using quantum memory on the receiver's side. 
}

\keywords{dense coding, preshared state, group representation, quantum memory}



\maketitle


\section{Introduction}

Quantum dense coding is a key subroutine in many quantum information processing tasks.   
Dense coding utilizes preshared entanglement among the sender and receiver to communicate more classical bits than the number of quantum bits sent from the sender to receiver \cite{bennett1992communication,hiroshima2001optimal,bowen2001classical,horodecki2001classical,winter2002scalable,bruss2004distributed,beran2008nonoptimality,horodecki2012quantum,datta2015second,das2019quantum,laurenza2019dense,wakakuwa2020superdense,hayashiWang}.
It is well known that one preshared maximally entangled qubit together with one bit of quantum communication admits communicating two bits of classical information. 
Although it is evident that preshared entanglement is a necessary resource to enable dense coding, \emph{how} entanglement helps the enhancement is still unclear --- the presence of entanglement in a preshared state does not necessarily guarantee to enhance classical communication for a given encoding strategy. 
It is therefore important to characterize when entanglement indeed becomes helpful for dense coding. 

This question can be approached by observing that holding entanglement during communication requires the receiver to store the preshared state in a quantum memory. 
If the receiver is not in possession of quantum memory but instead makes a measurement right after the preshared state is provided, entanglement is broken before the communication is completed.  
Therefore, characterizing when quantum memory enhances the classical communication rate gives insights into the role genuinely played by quantum entanglement.  
This question also has practical relevance, as preparing a long-term quantum memory may come with much technical difficulty~\cite{Zhao2009long-lived,Khabat2016quantum}. 

As the simplest setting, we restrict our encoding operations to a certain subgroup
under the presence of preshared states \cite{hayashiWang} ---
we assume that the encoding operation is given as a (projective) unitary representation $U$ of a group $G$ on $\cH_A$.
When the preshared entangled state is written as $\rho_{AB}$
and our coding operation is restricted to these unitaries,
our channel can be written as the classical-quantum (cq) channel
$g \mapsto U_g \rho_{AB} U_g^\dagger$.
Since this cq channel has a symmetric property for the group $G$,
we say that it is a {\it cq-symmetric} channel\footnote{The class of cq-symmetric channels 
has been studied well in the literature~\cite{korzekwa2019encoding}.
It is a quantum generalization of
a regular channel \cite{delsarte1982algebraic}, which was 
introduced as a useful class of channels in classical information theory.
This class of classical channels has various names, such as generalized additive \cite[Section V]{hayashi2011exponential} and 
conditional additive \cite[Section 4]{hayashi2011exponential} channels etc.
The class of additive channels is contained in the regular channels as a subclass.
The reference \cite[Section VII-A-2]{hayashi2015quantum} studied its quantum extension with an additive group
and discussed
the capacity and the wire-tap capacity with semantic security.
Since this class has a good symmetric property, it is known that
algebraic codes achieve the capacity \cite{delsarte1982algebraic,dobrushin1963asymptotic,elias1955coding,hayashi2020finite,hayashi2015quantum}.}. 
Recently, Ref. \cite{korzekwa2019encoding} studied such a model in the context of the resource theory of asymmetry
without considering shared entanglement.
Ref. \cite{hayashiWang} applied 
cq-symmetric channel to the above setting under the presence of shared entanglement.

In Ref.~\cite{hayashiWang}, two scenarios are considered, the first where Bob has to wait for the quantum communication from Alice before he can perform a global measurement on all systems, and a second where Bob is allowed to apply a global measurement on his local part of the shared states before he receives Alice's communication. While the second system reduces memory requirements at Bob, Bob still requires quantum memory when the preshared states are generated one by one.

In this work, in order to model the lack of quantum memory, we consider a third case where Bob measures his local part of the entangled states one by one, and investigate when the capacities of these three cases coincide.
The practical meaning of the first and third cases
are compared in Sec.~\ref{sec:standard capacity}, Remark \ref{Rem1}.
In particular, we study the conditions for a preshared entangled state when the capacities for the second and third case differ under which quantum memory becomes helpful in dense coding, whose encoding is realized by the action of a given group representation. 
We also relate our setting to private dense coding with the existence of an eavesdropper, extending the private capacity characterized for the first model, i.e., the case when the receiver is allowed to make a collective measurement on the joint system~\cite{Wu-Long-Hayashi}.

We begin by studying the setting involving an arbitrary preshared state and an arbitrary group representation, where we provide necessary (and sufficient) conditions for a quantum memory to be useful. 
Using a necessary condition, we show an example of an entangled state, in which the capacities are equal and
quantum memory is useless in this sense. 
We then focus on pure preshared states and give an explicit structure of the preshared state such that the receiver does not need to hold a quantum memory to realize the optimal decoding.  
Imposing the multiplicity-free condition on the group representation, 
we establish a novel connection between dense coding and the resource theory of speakable coherence~\cite{streltsov} --- equivalent to the resource theory of asymmetry with U(1) group with a multiplicity-free representation --- which brings us a further characterization of such pure preshared states.

In fact, the resource theory of asymmetry is a topic to study physical resources for information processing and has been under active investigation~\cite{bartlett2007reference,gour2008resource,gour2009measuring,bartlett2009quantum,marvian2012symmetry,marvian2013theory,marvian2014asymmetry,marvian2014extending,marvian2014modes,wakakuwa2017symmetrizing,wakakuwa2020superdense}.
In the context of quantum hypothesis testing, the references \cite{hiai1991proper} and \cite[Theorem 2.9]{hayashi2017group}
considered the difference with respect to the relative entropy between 
a state and the averaged state under a certain group action, where the averaged state is the output
of the noisy channel composed of a random unitary application. 
This relative entropy shows the difficulty of distinguishing the noiseless state and the noisy averaged state under a certain group action.
Here, we discuss this type of discrimination as a modification of quantum illumination, which was originally introduced for enhancing the capability of detecting a target object with low reflectivity using entanglement~\cite{Lloyd2008enhanced,Tan2008quantum}.
In particular, we apply our conditions to characterize when the entanglement in a given input state enhances the performance in guessing whether the noiseless or the group twirling channel acted on the entanglement half. 
We find that in the case of asymmetric discrimination with an abelian group, the performance with a maximally entangled input cannot be enhanced by local quantum memory, which can be contrasted to the conventional quantum illumination, in which quantum memory is essential when the input is a two-mode squeezing vacuum state~\cite{dePalma}.

The remainder of this paper is organized as follows.
Section \ref{S2} formulates the two cases of interest for quantum dense coding.
Section \ref{S2-1} formulates the case with quantum memory allowing for the global measurement, and 
Section \ref{S2-3} formulates the case without quantum memory
after 
Section \ref{S2-2} prepares the notations for group representation.
Section \ref{S3} discusses when quantum memory is useless for a general preshared state
and presents several examples of 
such entangled states.
Section \ref{S4} investigates simpler conditions for pure states
and presents relevant examples.
This analysis reveals the relation between the usefulness of quantum memory and the genuinely incoherent operations (GIO), a class of operations introduced in the resource theory of speakable coherence~\cite{deVicente}.
Section \ref{S5B} applies obtained results to the case of the maximally entangled states.
Section \ref{S5} applies our results to 
a modified version of quantum illumination.
Section \ref{S7} 
provides discussion for our results and suggests future studies.


\section{Dense coding with a preshared state}\Label{S2}
\subsection{Dense coding with quantum memory}\Label{S2-1}

\subsubsection{Standard capacity}

We formulate dense coding with a preshared state as in Ref.~\cite{hayashiWang},
which is based on a cq-symmetric channel \cite{korzekwa2019encoding}.
Assume that Alice and Bob share $n$ copies of the quantum state $\rho_{AB}$
on the system ${\cal H}:=\mathcal{H}_A \otimes \mathcal{H}_B $,
where $\mathcal{H}_B $ is a $d_B$-dimensional system.
We consider a (projective) unitary group representation of a group $G$ on
$\mathcal{H}_A$.
That is, for each $g \in G$, we have a unitary operator $U_g$ on $\mathcal{H}_A$ such that
for $g,g'\in G$, there exists a unit complex number $c(g,g')$ satisfying
\begin{align}
U_g U_{g'}= c(g,g') U_{gg'}.
\end{align}
When $c(g,g')$ is $1$, $\{U_g\}_{g\in G}$ is called a unitary group representation.
Otherwise, it is called a projective unitary group representation.
We also define the group twirling channel ${\cal G}(\rho):= \sum_{g \in G}\frac{1}{\vert G\vert}U_g \rho U_g^\dagger$.

Consider the case when encoding operation is given as an application of unitary.
In practical scenario, due to the device condition, 
available unitaries are limited to a subset of unitaries.
Since any combination of available unitaries is also available, 
it is natural to assume that the set of available unitaries is given as a (projective) unitary group representation of a group $G$ on
$\mathcal{H}_A$. 
In this scenario, we cannot exclude the case that $c(g,g')$ takes a non-identical element in general.
For example, when the $X$ and $Z$ operations are available on 
a qubit system,
$c(g,g')$ may take a non-identical element.
Thus, our problem is formulated as channel coding for 
the cq-channel $g (\in G)\mapsto 
U_g \rho_{AB} U_g^\dagger$.
Since this cq-channel has a group covariant form,
the channel capacity $C_{\rm c}(\rho_{AB})$ of this channel is calculated as~\cite{Holevo1998capacity,schumacher1997sending}
\begin{align}
C_{\rm c}(\rho_{AB})
=&
\sup_{p}
H\left( \sum_{g \in G} p(g)  U_g \rho_{AB} {U_g}^\dagger
\right)
-\sum_{g \in G} p(g)
H(U_g  \rho_{AB} {U_g}^\dagger) \nonumber \\
=&
\sup_{p}
H\left( \sum_{g \in G} p(g)  U_g \rho_{AB} {U_g}^\dagger
\right)
-
H( \rho_{AB} )\nonumber \\
=&
H( {\cal G} (\rho_{AB}))-H( \rho_{AB}) \nonumber \\
=& D(\rho_{AB} \| {\cal G} (\rho_{AB}))
\Label{E21},
\end{align}
where $H(\rho):= -\tr \rho \log \rho$, $D(\rho\|\sigma)=\tr \rho (\log \rho - \log \sigma)$.
In the following, we use the word ``capacity'' to denote the channel capacity.
Here, we used the following relation.
\begin{align*}
&
H\left( \sum_{g \in G} p(g)  U_g \rho_{AB} {U_g}^\dagger
\right)
=
\sum_{g'\in G} 
\frac{1}{\vert G\vert}
H\left( \sum_{g \in G} 
p(U_{g'} g U_{g'}^\dagger)  U_g \rho_{AB} {U_g}^\dagger
\right) \\
\le &
H\left( 
\sum_{g'\in G} 
\frac{1}{\vert G\vert}
\sum_{g \in G} 
p(U_{g'} g U_{g'}^\dagger)  U_g \rho_{AB} {U_g}^\dagger
\right) 
= 
H\left( 
\sum_{g \in G} 
\frac{1}{\vert G\vert}
U_g \rho_{AB} {U_g}^\dagger
\right) .
\end{align*}

We describe here the first case with quantum memory mentioned above.
In this case, Alice transmits a message to Bob with $n$ uses of this cq-channel.
The set of available encoding operations is characterized by the product group
$G^n=\overbrace{G\times \cdots \times G}^n$.
The $n$-tensor product representation of $U$ is given as
$U_{(g_1, \ldots, g_n)}:= U_{g_1} \otimes \cdots \otimes U_{g_n}$ for $(g_1, \ldots, g_n) \in G^n$.
By using the $n$-tensor product representation,
a classical message $k\in\mathcal{K}_n$ is encoded by using an encoder
\begin{equation}
\phi_{\mathrm{e},n}: K_n \to G^n
\end{equation}
and applying $U_{\phi_{\mathrm{e},n}(k)}$ to her local copies and sending them to Bob, thus producing the state
\begin{equation}
U_{\phi_{\mathrm{e},n}(k)} \rho_{AB} U_{\phi_{\mathrm{e},n}(k)}^\dagger
\end{equation}
at Bob, effectively producing the cq channel from Alice to Bob.
Using quantum memories to store the whole state, Bob then decodes the message using a decoder POVM
\begin{equation}
\phi_{\mathrm{d},n}= \{ \Pi_k\}_{k\in\mathcal{K}_n}
\end{equation}
on $\mathcal{H}_A \otimes \mathcal{H}_B$.


A pair of an encoder $\phi_{\mathrm{e},n}$ and a decoder $\phi_{\mathrm{d},n}$ is called
a code $\Phi_n$.
The performance of a code $\Phi_n$ is evaluated by the size of $\mathcal{K}_n$ denoted by 
$\vert\mathcal{K}_n\vert$
and the averaged decoding error probability $\epsilon(\Phi_n)$ given as
\begin{align}
\epsilon(\Phi_n):= \frac{1}{\vert\mathcal{K}_n\vert} \sum_{k \in {\cal K}_n}
\Big(1-\tr U_{\phi_{\mathrm{e},n}(k) }\rho_{AB}^{\otimes n}U_{\phi_{\mathrm{e},n}(k) }^\dagger \Pi_{k}
\Big).
\end{align}
Then, the channel coding theorem for cq-channels \cite{Holevo,Schumacher} 
states that the quantity in \eqref{E21} gives the operational capacity, i.e.,
\begin{align}
C_{\rm c}(\rho_{AB})=
\sup_{\{\Phi_n\}}\Big\{ \lim_{n \to \infty}\frac{1}{n}\log \vert\mathcal{K}_n\vert \Big\vert
\lim_{n \to \infty} \epsilon(\Phi_n)=0\Big\}.
\end{align}

\subsubsection{Private capacity}

In addition, the recent paper \cite{Wu-Long-Hayashi} considers private dense coding that covers 
a risk that the eavesdropper, Eve, might intercept the transmitted state 
and thus obtain (a part of) the information for the transmitted message.
In this case, if Eve wants to hide
her attack, she applies another quantum channel to the intercepted system,
sends the output to Bob,
and keeps its environment system.
However, if she does not need to hide her attack, she does not need to send the output to Bob.
We also assume
that
Eve has the environment system of the state $\rho_{AB}$, i.e., 
the joint state among Alice, Bob, and Eve is a pure state $\rho_{ABE}$.
In this case, the following two conditions are required.
\begin{description}
\item[(1)] When Eve does not intercept the transmitted state, Bob recovers the message with a probability of almost $1$.
\item[(2)] Even when Eve intercepts the transmitted state, Eve obtains no information about the message.
\end{description}
This problem was studied in the case with a preshared maximally entangled state~\cite{Deng2003} and 
a general noisy preshared state~\cite{Wu-Long-Hayashi}.
When Alice and Bob share the maximally entangled state,
Eve obtains no information about the message even when Eve intercepts the transmitted state \cite{Deng2003}.
In this case, the conventional dense coding satisfies the above two conditions.
In the general noisy case with a general state $\rho_{AB}$,
the reference~\cite{Wu-Long-Hayashi} derived the following lower bound $\underline{C}_\mathrm{c}^\mathrm{p}$ of the private capacity $C_\mathrm{c}^\mathrm{p}$ as
\begin{align}
C_\mathrm{c}^\mathrm{p}(\rho_{AB})\geq \underline{C}_\mathrm{c}^\mathrm{p}(\rho_{AB})=
D(\rho_{AB} \| {\cal G} (\rho_{AB}))-
D(\rho_{AE} \| {\cal G} (\rho_{AE})).
\end{align} 
In particular, it is known that 
the above value gives the private capacity when 
$\rho_{AB}$ is a maximally correlated state \cite[Appendix A]{Wu-Long-Hayashi}. 

\subsection{Notations for group representation}\Label{S2-2}
For further study on the capacity $C_{\rm c}(\rho_{AB})$ under the given symmetry, 
we prepare several notations for group representation.
We let $\hat{G}$ denote a set of irreducible projective unitary representations of $G$.
For all irreducible representations $\lambda\in\hat G$, let $\mathcal{H}_\lambda$ be the projective representation space and $d_\lambda$ be its dimension.
In general, the representation space ${\cal H}_A$
can be written as
\begin{align}
{\cal H}_A=\bigoplus_{\lambda \in \hat{G}'} 
{\cal H}_{\lambda} \otimes {\cal M}_\lambda,
\end{align}
where ${\cal M}_\lambda= \mathbb{C}^{n_\lambda}$
expresses the multiplicity space of the irreducible projective unitary 
presentation $\lambda$, the integer $n_\lambda$ is the multiplicity, 
and $\hat{G}'\subset \hat{G}$ is the set of irreducible representations with $n_\lambda>0$ for $\mathcal{H}_A$.
Since our representation is unitary, vectors in different irreducible components are orthogonal to each other.
Hence, we can write our space in the above way.
Therefore, 
recalling $\mathcal{H}_B=\mathbb{C}^{d_B}$, the joint system ${\cal H}_A\otimes {\cal H}_B $
is written as
\begin{align}
{\cal H}_A\otimes {\cal H}_B 
=\bigoplus_{\lambda \in 
\hat{G}'} {\cal H}_{\lambda} \otimes 
\mathbb{C}^{n_\lambda d_B}.
\end{align}

Let $\Pi_\lambda$ be the projection to ${\cal H}_{\lambda} \otimes {\cal M}_\lambda=  {\cal H}_{\lambda} \otimes 
\mathbb{C}^{n_\lambda}$.
Defining the probability distribution 
$P_\Lambda(\lambda) := \tr (\Pi_\lambda \otimes I_B) \rho_{AB}$ where we denote by $\Lambda$ the corresponding random variable and the state
$ \rho_{\lambda}:=
\frac{1}{P_\Lambda(\lambda)}
(\Pi_\lambda \otimes I_B) \rho_{AB}(\Pi_\lambda \otimes I_B)$, we have
\begin{align}
{\cal G} (\rho_{AB})=
\bigoplus_{\lambda \in \hat{G}'} P_\Lambda(\lambda)
\frac{1}{d_\lambda} I_{\lambda} \otimes 
\tr_{{\cal H}_{\lambda}} (\rho_{\lambda}),
\end{align}
where $I_B$ and $I_\lambda$ are the identity operators acting on $\mathcal{H}_B$ and $\mathcal{H}_\lambda$ respectively.
This gives 
\begin{align}
H({\cal G} (\rho_{AB}))=H(P_\Lambda)+
\sum_{\lambda \in \hat{G}'} P_\Lambda(\lambda)\Big(\log d_\lambda+ 
H(\tr_{{\cal H}_{\lambda}} \rho_{\lambda})\Big).
\end{align}
Therefore, the capacity $C_{\rm c}(\rho_{AB})$ admits the following form.
\begin{align}
C_{\rm c}(\rho_{AB})
=
H(P_\Lambda)+
\sum_{\lambda \in \hat{G}'} P_\Lambda(\lambda)\Big(\log d_\lambda+ 
H(\tr_{{\cal H}_{\lambda}} \rho_{\lambda})\Big)
-H( \rho_{AB}).
\label{E22}
\end{align}
When $\rho_{AB}$ is a pure state,
this reduces to \cite{hayashiWang}
\bal
C_{\rm c}(\rho_{AB})
&=
H(P_\Lambda)+
\sum_{\lambda \in \hat{G}'} P_\Lambda(\lambda) \Big(\log d_\lambda+ 
H(\tr_{{\cal H}_{\lambda}} \rho_{\lambda})\Big)\\
&=H(P_\Lambda)+
\sum_{\lambda \in \hat{G}'} P_\Lambda(\lambda) \Big(\log d_\lambda+ 
H(\tr_{M_{\lambda}} \rho_{\lambda})\Big)
\label{E23}
\eal
where the second equality (changing the system of the trace) follows because $\rho_{\lambda}$ is also a pure state.

To simplify our analysis,
let us introduce the following condition on a unitary representation.  

\begin{definition}[Multiplicity-free condition]\Label{assp:multiplicity-free}
We say that a unitary representation $U$ of $G$ on $\cH$ is
{\it multiplicity-free} when
there exists a subset $\hat{G}' \subset \hat{G}$
such that
the unitary representation $U$ is equivalent to
\begin{align}
\bigoplus_{\lambda \in \hat{G}'} \cH_{\lambda},
\Label{M3}
\end{align}
namely, when all multiplicities are either 0 or 1, and $\hat{G}'$ is the set of all irreducible representations such that $n_\lambda=1$.
\end{definition}

\subsection{Dense coding without quantum memory}\Label{S2-3}
\subsubsection{Standard capacity}\label{sec:standard capacity}
To achieve the capacity $C_{\rm c}(\rho_{AB})$, 
Bob needs to keep his state in his quantum memory 
while first $\mathcal{H}_B^{\otimes n}$ and then $\mathcal{H}_A^{\otimes n}$ are collected.
Since the cost to keep quantum memory is expensive,
it is practical to replace it with classical memory by performing a projective measurement on each system $\mathcal{H}_B$ as they are received.
To this end, let ${\cal B}:=\{ \vert \base{k}\rangle\}_{k=1}^{d_B}$ be Bob's measurement basis on $\mathcal{H}_B$.
We also let ${\cal B}$ denote the measurement channel with respect to the basis $\mathcal{B}$.
When Bob applies the measurement ${\cal B}$, 
the initial preshared state is given by
\bal
{\cal B}(\rho_{AB}):=
\sum_{k} \langle \base{k} \vert \rho_{AB}\vert \base{k}\rangle \otimes \vert \base{k} \rangle \langle \base{k}\vert = \sum_k P_K(k)\rho_{A\vert k}, 
\eal
where 
\bal
P_K(k)&:= \tr_A \langle \base{k} \vert \rho_{AB}\vert \base{k}\rangle\\
\rho_{A\vert k}&:=\frac{1}{P_K(k)}  
\langle \base{k} \vert \rho_{AB}\vert \base{k}\rangle\otimes \dm{\base{k}}
\eal

When Bob makes a measurement with the basis ${\cal B}$ 
on every quantum system ${\cal H}_B$
in advance,
the capacity of \eqref{E21} satisfies (notice that $\mathcal{G}$ acts on $A$ and thus commutes with $B$)
\begin{align}
C_{\rm c}({\cal B}(\rho_{AB}))
=
D({\cal B}(\rho_{AB})\| {\cal G}\circ {\cal B}(\rho_{AB}))
=
D({\cal B}(\rho_{AB})\|  {\cal B}\circ {\cal G}(\rho_{AB})).
\end{align}
Here, since the support of  
${\cal B}\circ {\cal G}(\rho_{AB})$
contains the support of 
${\cal B}(\rho_{AB})$,
the above quantity does not diverge.

Indeed, even when Bob sends his measurement outcome 
to Alice before Alice sends her message,
the capacity has the same value 
because it does not distinguish whether $k$ is at Alice or Bob, as can be seen in the following. 
We have the capacity of this setting as
\begin{align}
\sum_{k=1}^{d_B} 
P_K(k)
C_{\rm c}(\rho_{A\vert k})=&
\sum_{k=1}^{d_B} 
P_K(k)
(H( {\cal G} (\rho_{A\vert k}))-H( \rho_{A\vert k})) \nonumber \\
=&
H( {\cal G} ({\cal B}(\rho_{AB})))-H( {\cal B}(\rho_{AB}))\nonumber\\
=&D({\cal B}(\rho_{AB})\| {\cal G}\circ {\cal B}(\rho_{AB}))\nonumber\\
=& C_{\rm c}({\cal B}(\rho_{AB})).
\end{align}

The data-processing inequality for the relative entropy implies that for all measurements $\mathcal{B}$
\begin{align}
C_{\rm c}(\rho_{AB})
=
D(\rho_{AB}\|  {\cal G}(\rho_{AB}))
\ge 
D({\cal B}(\rho_{AB})\| {\cal B}\circ {\cal G}(\rho_{AB}))
=
C_{\rm c}({\cal B}(\rho_{AB}))
\Label{XL1}.
\end{align}
When the equality in \eqref{XL1} holds with a measurement $\mathcal{B}$,
Bob's quantum memory is useless with respect to the measurement $\mathcal{B}$. 
We call such a state $\rho_{AB}$ 
{\it ${\cal B}$-q-memory useless}.
In particular, 
when there exists such a measurement $\mathcal{B}$,
we call such a state $\rho_{AB}$ 
{\it q-memory useless}.
Otherwise, we call the state $\rho_{AB}$ {\it q-memory useful}.
Our aim is to characterize the condition when the equality holds. 

Here, it is instructive to clarify the difference between the setting in Ref.~\cite{hayashiWang} and ours.
Ref.~\cite{hayashiWang} discussed the channel capacity when 
Bob is allowed to make a measurement across $n$-tensor product space 
$\mathcal{H}_{B}^{\otimes n}$ before Bob receives the transmitted systems.
In contrast, we allow Bob to apply a measurement 
only on a single system $\mathcal{H}_{B}$ $n$ times.
That is, in our setting, Bob is not allowed to apply 
quantum measurements over multiple copies of preshared states on Bob's side before the communication. 
The measurement in Ref.~\cite{hayashiWang} is called 
a collective measurement and our measurement here is called 
an individual measurement.
Ref.~\cite{hayashiWang} studied the setting with a collective measurement at the asymptotic limit $n \to \infty$.
They showed that the optimal transmission rate under their setting becomes $C_{\rm c}(\rho_{AB})$ when the group $G$ is abelian, the representation $U$ is multiplicity-free, and the state $\rho_{AB}$ is pure.
Although their setting does not ask Bob to keep the quantum system 
before receiving the other parts from Alice, it still requires quantum memory in Bob's system.
This is because when Bob receives these $n$ quantum systems one by one, he needs to store these $n$ quantum systems before making a measurement across his $n$ quantum systems 
$\mathcal{H}_{B}^{\otimes n}$. 
On the other hand, our setting described above prohibits the use of quantum memory. 

\begin{remark}\label{Rem1}
In our setting, we allow a collective measurement across many quantum systems received by Alice at the decoding stage, although a collective measurement is not allowed on shared quantum systems distributed to Bob.
This assumption is justified by considering the situation where the communication from Alice to Bob is done within a short period of time while it takes much longer for quantum systems to be distributed to Bob before the communication.
Indeed, if the use of the quantum communication channel is expensive, this scenario is reasonable for the following reasons.
The distribution step (before the communication) can take a longer time because it is a preparation for the communication between Alice and Bob.
Since fast communication channels are expensive, it is reasonable to use slower channels for distributing shared quantum systems.
In such a case, to make a collective measurement on shared quantum systems distributed to Bob,
Bob needs to have a quantum memory to keep his quantum systems for a long time.
On the other hand, the communication from Alice to Bob typically needs to be accomplished quickly, where they are expected to use faster communication channels. 
There, Bob does not need to hold a good quantum memory lasting for a long time to make a collective measurement over the quantum systems received in a short period of time from Alice.  
\end{remark}

\subsubsection{Private capacity}
Next, we discuss the private dense coding
when Bob makes a measurement with the basis ${\cal B}$.
We have the following lower bound for the private capacity~\cite{Wu-Long-Hayashi}
\begin{align}
\underline{C}_{\rm c}^{\rm p}(\rho_{AB},{\cal B})=
D({\cal B}(\rho_{AB}) \| {\cal G} \circ {\cal B}(\rho_{AB}))-
D(\rho_{AE} \| {\cal G} (\rho_{AE})).
\end{align} 
The difference between 
$\underline{C}_{\rm c}^{\rm p}(\rho_{AB})$ and $\underline{C}_{\rm c}^{\rm p}(\rho_{AB},{\cal B})$
is given by
\begin{align}
\underline{C}_{\rm c}^{\rm p}(\rho_{AB})-\underline{C}_{\rm c}^{\rm p}(\rho_{AB},{\cal B})
&=
D(\rho_{AB}\|  {\cal G}(\rho_{AB}))
-D({\cal B}(\rho_{AB})\| {\cal B}\circ {\cal G}(\rho_{AB}))\nonumber \\
&=
C_{\rm c}(\rho_{AB})
-C_{\rm c}({\cal B}(\rho_{AB})) \ge 0 \Label{XL1B}.
\end{align}
Therefore, 
the equality condition in \eqref{XL1} 
is useful for comparing the lower bounds for these two private capacities.

\section{General preshared states}\Label{S3}
\subsection{Conditions for q-memory uselessness}\Label{S3-1}
We begin our investigation 
of conditions for q-memory uselessness
with a general mixed preshared resource state. 
We denote with $\rho^{-1}$ the pseudoinverse of an operator $\rho$ on $\mathcal{H}$ and its support $\mathcal{H}_\rho$.
For the general case, we consider $\left[{\cal G}\circ {\cal B}(\rho_{AB})\right]^{-1}$ of
${\cal G}\circ {\cal B}(\rho_{AB})$ and denote the support of 
the operator ${\cal G}\circ {\cal B}(\rho_{AB})$ as
${\cal H}_{{\cal B},\mathcal{G},\rho_{AB}}$. 
Then, the supports of ${\cal B}(\rho_{AB})$,
${\cal G}(\rho_{AB})$, and $\rho_{AB} $
are included in ${\cal H}_{{\cal B},\mathcal{G},\rho_{AB}}$.
Applying the equality condition for the information processing inequality for relative entropy \cite{Petz} to \eqref{XL1}, 
we have the following characterizations.

\begin{theorem}\Label{TH1}
Given a preshared state $\rho_{AB}$ and the basis ${\cal B}$, 
the following conditions are equivalent.
\begin{description}
\item[(A1)]
The equality in \eqref{XL1} holds, i.e., $\rho_{AB}$ is $\mathcal{B}$-q-memory useless.
\item[(A2)]
The following relation holds as an operator on ${\cal H}_{{\cal B},\mathcal{G},\rho_{AB}}$.
\begin{align}
\rho_{AB}= \sqrt{ {\cal G}(\rho_{AB})} 
\sqrt{{\cal G}\circ {\cal B}(\rho_{AB})}^{-1}
{\cal B}(\rho_{AB})
\sqrt{{\cal G}\circ {\cal B}(\rho_{AB})}^{-1}
\sqrt{ {\cal G}(\rho_{AB})} .
\Label{eq:equality condition general}
\end{align}
\item[(A3)]
There exists a CPTP map $\Gamma$ such that
\begin{align}
\Gamma({\cal B}(U_g \rho_{AB} U_g^\dagger))=U_g\rho_{AB}
U_g^\dagger
\end{align}
for $g \in G$.
\end{description}
\end{theorem}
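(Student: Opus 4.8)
The plan is to derive all three equivalences from the equality case of the data‑processing inequality (DPI) for the relative entropy, combined with the $G$‑covariance of the twirling channel ${\cal G}$ and of the measurement channel ${\cal B}$. Two elementary structural facts will be used repeatedly. First, ${\cal B}$ is the pinching channel associated with the rank‑one projections $\{I_A\otimes\ketbra{k}{k}\}_k$ of the measurement basis $\cal B$ on ${\cal H}_B$, so it is self‑adjoint, ${\cal B}^\dagger={\cal B}$; it commutes with ${\cal G}$ (which acts on ${\cal H}_A$); and it fixes every operator that is block‑diagonal in the basis $\cal B$ on ${\cal H}_B$, and both ${\cal B}(\rho_{AB})$ and ${\cal G}\circ {\cal B}(\rho_{AB})$ are of this form. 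Second, since $|c(g,g')|=1$, conjugation by $U_g$ leaves ${\cal G}(\sigma)$ invariant for every $\sigma$; hence $\sqrt{{\cal G}(\rho_{AB})}$ and $\sqrt{{\cal G}\circ {\cal B}(\rho_{AB})}^{-1}$ commute with every $U_g$, and ${\cal B}(U_gXU_g^\dagger)=U_g{\cal B}(X)U_g^\dagger$ for all $X,g$. Throughout, pseudoinverses and square roots are taken on ${\cal H}_{{\cal B},\mathcal{G},\rho_{AB}}$, which by construction dominates the supports of $\rho_{AB}$, ${\cal B}(\rho_{AB})$ and ${\cal G}(\rho_{AB})$, so every expression below is well defined and the relative entropies involved are finite.

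For (A1)$\Leftrightarrow$(A2): by \eqref{E21} and its counterpart $C_{\rm c}({\cal B}(\rho_{AB}))=D({\cal B}(\rho_{AB})\|{\cal G}\circ {\cal B}(\rho_{AB}))$, condition (A1) is exactly the saturation of the DPI \eqref{XL1} for the channel ${\cal B}$ applied to the pair $(\rho_{AB},{\cal G}(\rho_{AB}))$. By Petz's equality theorem \cite{Petz}, this saturation holds if and only if the Petz recovery map
\[
\mathcal{R}(X):=\sqrt{{\cal G}(\rho_{AB})}\;{\cal B}^\dagger\!\Big(\sqrt{{\cal G}\circ {\cal B}(\rho_{AB})}^{-1}\,X\,\sqrt{{\cal G}\circ {\cal B}(\rho_{AB})}^{-1}\Big)\sqrt{{\cal G}(\rho_{AB})}
\]
reverses ${\cal B}$ on $\rho_{AB}$, i.e. $\mathcal{R}({\cal B}(\rho_{AB}))=\rho_{AB}$. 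Substituting ${\cal B}^\dagger={\cal B}$ and using that $\sqrt{{\cal G}\circ {\cal B}(\rho_{AB})}^{-1}{\cal B}(\rho_{AB})\sqrt{{\cal G}\circ {\cal B}(\rho_{AB})}^{-1}$ is block‑diagonal in $\cal B$ and hence fixed by ${\cal B}$, the identity $\mathcal{R}({\cal B}(\rho_{AB}))=\rho_{AB}$ becomes verbatim \eqref{eq:equality condition general}. Hence (A1)$\Leftrightarrow$(A2).

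For (A2)$\Rightarrow$(A3): take $\Gamma$ to be the map $\mathcal{R}$ above, extended off ${\cal H}_{{\cal B},\mathcal{G},\rho_{AB}}$ to a genuine CPTP map in the standard way; this extension is immaterial, since ${\cal B}(U_g\rho_{AB}U_g^\dagger)=U_g {\cal B}(\rho_{AB})U_g^\dagger$ is supported on the $U_g$‑invariant subspace ${\cal H}_{{\cal B},\mathcal{G},\rho_{AB}}$. Using the commutation facts above, $\Gamma({\cal B}(U_g\rho_{AB}U_g^\dagger))=\Gamma(U_g {\cal B}(\rho_{AB})U_g^\dagger)=U_g\,\Gamma({\cal B}(\rho_{AB}))\,U_g^\dagger=U_g\rho_{AB}U_g^\dagger$, the last equality being (A2). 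For (A3)$\Rightarrow$(A1): averaging the identity of (A3) over $g\in G$ and using linearity of ${\cal B}$ and $\Gamma$ gives $\Gamma({\cal B}\circ {\cal G}(\rho_{AB}))={\cal G}(\rho_{AB})$, while $g=e$ gives $\Gamma({\cal B}(\rho_{AB}))=\rho_{AB}$; monotonicity of the relative entropy under the CPTP map $\Gamma$ then yields $D({\cal B}(\rho_{AB})\|{\cal G}\circ {\cal B}(\rho_{AB}))\ge D(\rho_{AB}\|{\cal G}(\rho_{AB}))$, which together with \eqref{XL1} forces equality in \eqref{XL1}, i.e. (A1). This closes the cycle (A1)$\Rightarrow$(A2)$\Rightarrow$(A3)$\Rightarrow$(A1), and combined with (A1)$\Leftrightarrow$(A2) it shows all three are equivalent.

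The main obstacle I anticipate is the support bookkeeping: one must check carefully that ${\cal H}_{{\cal B},\mathcal{G},\rho_{AB}}$ really contains all the supports involved (so that the pseudoinverses act as honest inverses on the subspace where Petz's theorem is applied, and so that $\mathcal{R}$ is trace‑preserving there), and that $\mathcal{R}$ can be promoted to a global CPTP map $\Gamma$ without spoiling the covariance identities used in (A2)$\Rightarrow$(A3). Everything else is a direct application of Petz's equality condition and of the covariance of ${\cal G}$ and ${\cal B}$ under the representation $U$.
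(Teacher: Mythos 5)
Your proof is correct and follows essentially the same route as the paper: Petz's equality condition for the data-processing inequality gives (A1)$\Leftrightarrow$(A2), the covariance of $\sqrt{{\cal G}(\rho_{AB})}$, $\sqrt{{\cal G}\circ{\cal B}(\rho_{AB})}^{-1}$ and ${\cal B}$ under conjugation by $U_g$ gives (A2)$\Rightarrow$(A3), and monotonicity of relative entropy under the recovery map gives (A3)$\Rightarrow$(A1). The only (cosmetic) differences are that you keep the full Petz map with ${\cal B}^\dagger$ where the paper uses the bare conjugation $T(\cdot)T^\dagger$ (equivalent here because the sandwiched operator is already block-diagonal in the basis ${\cal B}$), and that you spell out the DPI argument for (A3)$\Rightarrow$(A1), which the paper simply calls trivial.
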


\begin{proof}
The equivalence between (A1) and (A2) follows from \cite[Theorem 3]{Hayden}.
The relation (A3) $\Rightarrow$ (A1) is trivial. 
Assume (A2). 
We have $U_g{\cal G}(\rho_{AB})={\cal G}(\rho_{AB})U_g$
and $U_g{\cal G}\circ {\cal B}(\rho_{AB})={\cal G}\circ {\cal B}(\rho_{AB})U_g$.
We denote the space orthogonal to
${\cal H}_{{\cal B},\mathcal{G},\rho_{AB}}$ by 
${\cal H}_{{\cal B},\rho_{AB}}^{\perp}$.
We denote the projection to
${\cal H}_{{\cal B},\rho_{AB}}^{\perp}$ by $I^\perp$.
We consider $\sqrt{ {\cal G}(\rho_{AB})} 
\sqrt{{\cal G}\circ {\cal B}(\rho_{AB})}^{-1}$
as a map on ${\cal H}_{{\cal B},\mathcal{G},\rho_{AB}}$.
Then, we can define 
the map $T:=
\sqrt{ {\cal G}(\rho_{AB})} 
\sqrt{{\cal G}\circ {\cal B}(\rho_{AB})}^{-1}
\oplus I^\perp
$ on ${\cal H}$.

Since $
U_g\sqrt{ {\cal G}(\rho_{AB})} U_g^{-1}
=\sqrt{ {\cal G}(\rho_{AB})} 
$ and
$U_g\sqrt{{\cal G}\circ {\cal B}(\rho_{AB})}^{-1}U_g^{-1}
=\sqrt{{\cal G}\circ {\cal B}(\rho_{AB})}^{-1}$, we have
$
\sqrt{ {\cal G}(\rho_{AB})} 
\sqrt{{\cal G}\circ {\cal B}(\rho_{AB})}^{-1}
U_g=U_g
\sqrt{ {\cal G}(\rho_{AB})} 
\sqrt{{\cal G}\circ {\cal B}(\rho_{AB})}^{-1}$.
Thus, we have
\begin{align}
&T U_g {\cal B}(\rho_{AB})U_g^\dagger
T^\dagger\nonumber \\
&=\sqrt{ {\cal G}(\rho_{AB})} 
\sqrt{{\cal G}\circ {\cal B}(\rho_{AB})}^{-1}
U_g {\cal B}(\rho_{AB})U_g^\dagger 
\sqrt{{\cal G}\circ {\cal B}(\rho_{AB})}^{-1}
\sqrt{ {\cal G}(\rho_{AB})} \nonumber \\
&=U_g \sqrt{ {\cal G}(\rho_{AB})} 
\sqrt{{\cal G}\circ {\cal B}(\rho_{AB})}^{-1}
{\cal B}(\rho_{AB})
\sqrt{{\cal G}\circ {\cal B}(\rho_{AB})}^{-1}
 \sqrt{ {\cal G}(\rho_{AB})} U_g^\dagger
\nonumber \\
&=U_g \rho_{AB} U_g^\dagger,
\end{align}
where the final equation follows from the condition (A2).
Defining $\Gamma (\rho):= T\rho T^\dagger$, we have 
(A3).
\end{proof}

We remark that the above results can be extended to a general quantum channel $\mathcal{B}$ following the same proof.
Apart from \eqref{eq:equality condition general}, we obtain an alternative necessary and sufficient condition for a general resource state $\rho_{AB}$ using the result in Ref.~\cite{Hayden}. 

\begin{theorem}\Label{thm:measurement useless}
    Let $\tilde B$ be a system isomorphic to $B$ and ${\rm CX}_{B\rightarrow\tilde B}\coloneqq \sum_{k,k'=1}^{d_B} \dm{\base{k}}_B\otimes \ketbra{\base{k'+k}}{\base{k'}}$, 
    where $k'+k$ is implicitly $k'+k\ \mathrm{mod}\ d_B$, be the CNOT gate with respect to the basis $\mathcal{B}$ controlled on $B$.
    Let $\rho'_{AB\tilde B}$ be the state on $AB\tilde B$ defined as 
    \begin{equation}\begin{aligned}
     \rho'_{AB\tilde B} \coloneqq {\rm CX}_{B\rightarrow\tilde B}\,(\rho_{AB}\otimes\dm{\base{0}}_{\tilde B})\,{\rm CX}_{B\rightarrow\tilde B}^\dagger.
     \Label{eq:CNOT resource}
    \end{aligned}\end{equation} 
Then, the condition $C_{\rm c}(\rho_{AB})=C_{\rm c}(\mathcal{B}(\rho_{AB}))$ holds if and only if 
there exist the following six items satisfying the condition \eqref{eq:diagonalize necessary sufficient}.
\begin{description}
\item[1)] an index set ${\cal J}$, which is possibly an infinite set,
\item[2)] 
a decomposition $\mathcal{H}_A\otimes\mathcal{H}_B=\bigoplus_{j \in {\cal J}} \mathcal{H}_{L_j}\otimes\mathcal{H}_{R_j}$ for some subspaces $\mathcal{H}_{L_j}$ and $\mathcal{H}_{R_j}$,
\item[3)] states 
    $\eta_{L_j,g}$ on $\mathcal{H}_{L_j}$ that can depend on $g$, 
\item[4)]
states $\eta_{R_j\tilde B}$ on $\mathcal{H}_{R_j}\otimes\mathcal{H}_{\tilde B}$ independent of $g$,
\item[5)]
a probability distribution $\{p_{j\vert g}\}_j$ for each $g\in G$,
\item[6)]
a unitary $V_{AB}$ on $AB$.
\end{description}
 
\if0
\begin{description}
\item[(B1)]
There exist
an index set ${\cal J}$, which is possibility an infinite set,
    and
a decomposition $\mathcal{H}_A\otimes\mathcal{H}_B=\bigoplus_{j \in {\cal J}} \mathcal{H}_{L_j}\otimes\mathcal{H}_{R_j}$ for some subspaces $\mathcal{H}_{L_j}$ and $\mathcal{H}_{R_j}$.
     
\item[(B2)]
    Then, there exist states 
    $\eta_{L_j,g}$ on $\mathcal{H}_{L_j}$ that can depend on $g$ and $\eta_{R_j\tilde B}$ on $\mathcal{H}_{R_j}\otimes\mathcal{H}_{\tilde B}$ independent of $g$, a probability distribution $\{p_{j\vert g}\}_j$ for each $g\in G$, and a unitary $V_{AB}$ on $AB$ such that
\fi
     \begin{equation}\begin{aligned}
U_g\rho_{AB\tilde B}'U_g^\dagger = V_{AB}\left(\bigoplus_j p_{j\vert g} \eta_{L_j,g}\otimes\eta_{R_j \tilde B}\right)V_{AB}^\dagger,\ \forall g\in G.
\Label{eq:diagonalize necessary sufficient}
\end{aligned}
\end{equation}
\end{theorem}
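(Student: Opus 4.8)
The plan is to identify the capacity identity with the reversibility of the pinching channel $\mathcal{B}$ on the family $\mathcal{F}:=\{U_g\rho_{AB}U_g^\dagger\}_{g\in G}$, to dilate that pinching to a partial trace via the CNOT construction \eqref{eq:CNOT resource}, and to read off the claimed form from the structure theorem for sufficient channels of Ref.~\cite{Hayden}. Concretely, by the equivalence of (A1) and (A3) in Theorem~\ref{TH1}, $C_{\rm c}(\rho_{AB})=C_{\rm c}(\mathcal{B}(\rho_{AB}))$ holds if and only if there is a CPTP map $\Gamma$ with $\Gamma(\mathcal{B}(\sigma))=\sigma$ for every $\sigma\in\mathcal{F}$, i.e.\ the channel $\mathcal{B}$ acting on $B$ is sufficient for $\mathcal{F}$.

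Next I would dilate $\mathcal{B}$. The pinching in the basis $\{\ket{\base{k}}\}$ has Stinespring isometry $W:\ket{\base{k}}\mapsto\ket{\base{k}}\otimes\ket{\base{k}}$ from $\mathcal{H}_B$ into $\mathcal{H}_B\otimes\mathcal{H}_{\tilde B}$, and one checks directly that $(I_A\otimes W)(\cdot)(I_A\otimes W)^\dagger={\rm CX}_{B\to\tilde B}\big((\cdot)\otimes\dm{\base{0}}_{\tilde B}\big){\rm CX}_{B\to\tilde B}^\dagger$, so $\rho'_{AB\tilde B}=(I_A\otimes W)\rho_{AB}(I_A\otimes W)^\dagger$ and, since $U_g$ acts on $A$ and hence commutes with $W$, $(I_A\otimes W)(U_g\rho_{AB}U_g^\dagger)(I_A\otimes W)^\dagger=U_g\rho'_{AB\tilde B}U_g^\dagger$; moreover $\mathcal{B}(\sigma)=\tr_{\tilde B}\big[(I_A\otimes W)\sigma(I_A\otimes W)^\dagger\big]$. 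Because $W$ is an isometry, invertible on its range with inverse implemented by applying ${\rm CX}_{B\to\tilde B}^\dagger$ and discarding $\tilde B$, one gets the elementary equivalence: $\mathcal{B}$ is sufficient for $\mathcal{F}$ if and only if the partial trace $\tr_{\tilde B}$ is sufficient for the dilated family $\mathcal{F}':=\{U_g\rho'_{AB\tilde B}U_g^\dagger\}_{g\in G}$, the recovery maps for the two problems being obtained from one another by composing with $(I_A\otimes W)(\cdot)(I_A\otimes W)^\dagger$ and its inverse.

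It then remains to characterize sufficiency of the environment-discarding channel $\tr_{\tilde B}$ for $\mathcal{F}'$, which is exactly what the structure theorem of Ref.~\cite{Hayden} does — equivalently, via the equality case of strong subadditivity for $\Omega:=\sum_g\mu(g)\,\dm{g}_G\otimes U_g\rho'_{AB\tilde B}U_g^\dagger$ with $\mu$ a faithful probability measure on $G$, the condition being $I(G:\tilde B\,|\,AB)_\Omega=0$, which forces a block decomposition $\mathcal{H}_A\otimes\mathcal{H}_B=\bigoplus_{j\in\mathcal{J}}\mathcal{H}_{L_j}\otimes\mathcal{H}_{R_j}$ (realized by a unitary $V_{AB}$ on $\mathcal{H}_A\otimes\mathcal{H}_B$, with a trivial block for the complement of $\supp\Omega_{AB}$) together with states $\Omega^{(j)}_{GL_j}$ and $\Omega^{(j)}_{R_j\tilde B}$ and weights $q_j$ such that $\Omega=V_{AB}\big(\bigoplus_j q_j\,\Omega^{(j)}_{GL_j}\otimes\Omega^{(j)}_{R_j\tilde B}\big)V_{AB}^\dagger$. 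Conditioning on $G=g$ gives $U_g\rho'_{AB\tilde B}U_g^\dagger=V_{AB}\big(\bigoplus_j p_{j\vert g}\,\eta_{L_j,g}\otimes\eta_{R_j\tilde B}\big)V_{AB}^\dagger$ with $\eta_{L_j,g}$ the conditioned $L$-leg states, $\eta_{R_j\tilde B}:=\Omega^{(j)}_{R_j\tilde B}$ manifestly independent of $g$, and $p_{j\vert g}$ conditional weights — this is precisely \eqref{eq:diagonalize necessary sufficient}. For the converse, \eqref{eq:diagonalize necessary sufficient} exhibits each $U_g\rho'_{AB\tilde B}U_g^\dagger$ as block diagonal on $\mathcal{H}_A\otimes\mathcal{H}_B$ with every $R_j$-leg in the $g$-independent state $\eta_{R_j\tilde B}$, so the map ``conjugate by $V_{AB}^\dagger$, nondestructively read the block label $j$, keep the $L_j$-leg while discarding the $R_j$-leg and re-preparing $\eta_{R_j\tilde B}$, then conjugate back by $V_{AB}$'' recovers $\mathcal{F}'$ from its $\tr_{\tilde B}$-images, and chaining the equivalences backward yields the capacity identity.

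I expect the main obstacle to be the last step when $G$ or $\mathcal{H}_A$ is infinite: the reduction to the single classical-register state $\Omega$ requires a faithful $\mu$, and the structure theorem must be applied in its operator-algebraic formulation, with $\mathcal{J}$ indexing the central projections of the minimal sufficient subalgebra — which is why $\mathcal{J}$ is allowed to be infinite. One must also check that the decomposition obtained is of all of $\mathcal{H}_A\otimes\mathcal{H}_B$ rather than of some $g$-dependent subspace, which is precisely why a full unitary $V_{AB}$, and not merely an isometry, appears. The remaining verifications — the commutation $[U_g,W]=0$, the explicit recovery maps in the dilation step, and the bookkeeping matching the blocks of $\Omega$ to $\eta_{L_j,g}\otimes\eta_{R_j\tilde B}$ — are routine.
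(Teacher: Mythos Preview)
Your proposal is correct and follows essentially the same route as the paper: dilate the pinching $\mathcal{B}$ to the partial trace $\tr_{\tilde B}$ via the CNOT isometry, reduce the capacity identity to the SSA-equality condition $I(G:\tilde B\,|\,AB)=0$ for the classical--quantum state on $GAB\tilde B$, invoke the structure theorem of Ref.~\cite{Hayden} to obtain the block decomposition, and then condition on $G=g$ to pass from the global $\Omega$ to the per-$g$ form \eqref{eq:diagonalize necessary sufficient}. The only cosmetic difference is that you reach the SSA-equality condition via Theorem~\ref{TH1} (A1)$\Leftrightarrow$(A3) and the language of channel sufficiency, whereas the paper writes the capacities directly as relative entropies $D(\tau'_{GAB\tilde B}\|\tau_G'\otimes\tau_{AB\tilde B}')$ and $D(\tau'_{GAB}\|\tau_G'\otimes\tau_{AB}')$ and notes invariance under the CNOT isometry; both arguments are equivalent and use the same key lemma.
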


\begin{figure}[htbp]
    \centering
\includegraphics[width=0.7\columnwidth]{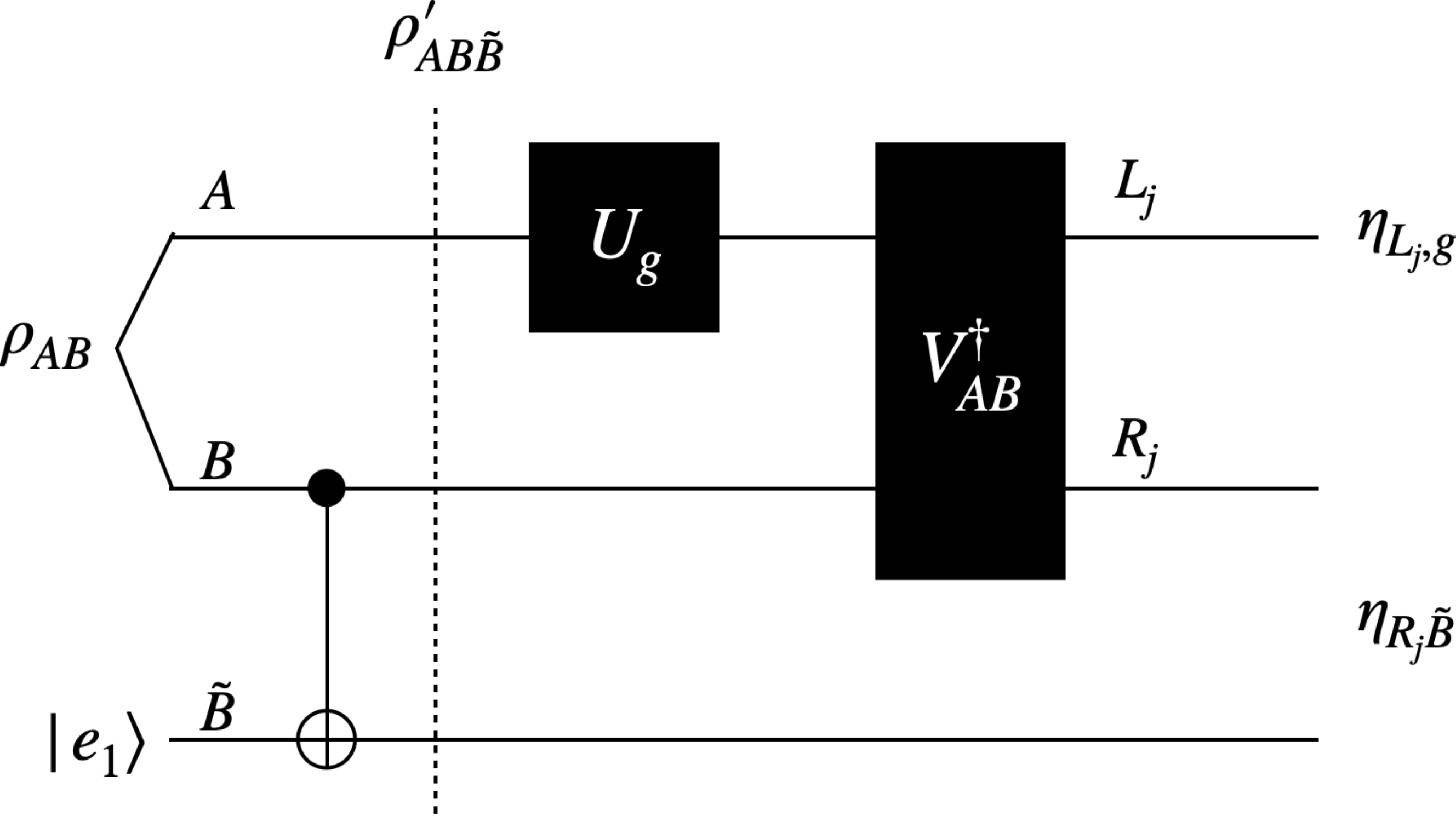}
    \caption{Circuit representation of the conditions in Theorem~\ref{thm:measurement useless}.}
    \Label{fig:necessary-sufficient}
\end{figure}

\begin{proof}
\textbf{Step 1.} 
First, we introduce another useful necessary and sufficient condition for 
the equation $C_{\rm c}(\rho_{AB})=C_{\rm c}(\mB(\rho_{AB}))$.
Define the classical-quantum state
\bal
 \tau_{GAB}&\coloneqq \sum_{g\in G}\frac{1}{\vert G\vert }\dm{g}\otimes U_g\rho_{AB} U_g^\dagger.
\eal
To take into account the action of Bob's measurement, we also consider a classical-quantum state 
\bal
 \tau_{GAB\tilde B}'&\coloneqq {\rm CX}_{B\rightarrow \tilde B} \, \tau_{GAB}\otimes\dm{\base{0}}_{\tilde B} {\rm CX}_{B\rightarrow \tilde B}^\dagger
 =\sum_{g\in G}\frac{1}{\vert G\vert }\dm{g}\otimes U_g\rho_{AB\tilde B}' U_g^\dagger,
 \Label{eq:classical quantum final state meas}
\eal
where $\rho_{AB\tilde B}'$ is the state defined in \eqref{eq:CNOT resource}.
This gives 
\bal
 \tau_{GAB}'&\coloneqq\tr_{\tilde B}(\tau'_{GAB\tilde B})
=\sum_{g\in G}\frac{1}{\vert G\vert }\dm{g}\otimes U_g\mB(\rho_{AB}) U_g^\dagger.
\Label{eq:dephased}
\eal
Note that due to the invariance of the relative entropy under isometries like in \eqref{eq:classical quantum final state meas}, we have
\bal
 C_{\rm c}(\rho_{AB})&=D(\tau_{GAB}\|\tau_G\otimes\tau_{AB})
 =D(\tau_{GAB\tilde B}'\|\tau'_G\otimes\tau'_{AB\tilde B}).
\eal
On the other hand, Eq.~\eqref{eq:dephased} implies that
\bal
 C_{\rm c}(\mB(\rho_{AB}))=D(\tau'_{GAB}\|\tau_G'\otimes\tau'_{AB}).
\eal
Therefore, $C_{\rm c}(\rho_{AB})=C_{\rm c}(\mB(\rho_{AB}))$ if and only if 
\bal
 D(\tau'_{GAB\tilde B}\|\tau_G'\otimes\tau_{AB\tilde B}')=D(\tau'_{GAB}\|\tau_G'\otimes \tau_{AB}').
 \Label{eq:relative entropy equality meas}
\eal

\textbf{Step 2.} 
Next, 
employing condition \eqref{eq:relative entropy equality meas},
we show that the existence of the six items is equivalent to the equation $C_{\rm c}(\rho_{AB})=C_{\rm c}(\mB(\rho_{AB}))$.
Applying the result from Ref.~\cite[Theorem 6]{Hayden}, we get that \eqref{eq:relative entropy equality meas} holds if and only if there exists a decomposition of Hilbert space $\mH_A\otimes\mH_B$ as 
\bal
 \mH_A\otimes\mH_B=\bigoplus_j \mH_{L_j}\otimes \mH_{R_j}
 \label{NVCT}
\eal
such that 
\bal
\tau_{GAB\tilde B}' = V_{AB}\left(\bigoplus_j q_j \eta_{GL_j}\otimes\eta_{R_j \tilde B}\right)V_{AB}^\dagger
\Label{eq:decomposition state meas}
\eal
for some unitary $V_{AB}$ on $AB$ and for some states $\eta_{GL_j}$ on $\mH_G\otimes\mH_{L_j}$ and $\eta_{R_j\tilde B}$ on $\mH_{R_j}\otimes \mH_{\tilde B}$. 
Since Eq.~\eqref{NVCT} corresponds to 
the items 1) and 2),
it is sufficient to show the equivalence between 
\eqref{eq:decomposition state meas} and 
the condition \eqref{eq:diagonalize necessary sufficient}
with the items 3), 4), 5), and 6), which does not come with the subsystem $G$ holding a classical state. 

Suppose $\tau_{GAB\tilde B}'$ has the form \eqref{eq:decomposition state meas}.
Because of \eqref{eq:classical quantum final state meas}, we have 
\bal
\sum_{g\in G} \dm{g}\tau_{GAB\tilde B}'\dm{g}=\tau_{GAB\tilde B}'.
\eal
This implies that 
\bal
\sum_{g\in G}\dm{g}\eta_{GL_j}\dm{g}=\eta_{GL_j}\ \forall j.
\eal
This condition requires $\eta_{GL_j}$ to have classical states on the subsystem $G$, which has the classical-quantum form
\bal
 \eta_{GL_j}=\sum_{g\in G}\dm{g}\otimes c_{j,g}\, \eta_{L_j, g},\quad \forall j
 \Label{eq:left state}
\eal
for some coefficient $c_{j,g}$ and some state $\eta_{L_j,g}$ on $\mH_{L_j}$.
Since $\tr\bra{g}\tau_{GAB\tilde B}'\ket{g}=1/\vert G\vert $ by \eqref{eq:classical quantum final state meas}, we have $\sum_j q_j c_{j,g}=1/\vert G\vert ,\ \forall g\in G$. 
Together with the form in \eqref{eq:classical quantum final state meas}, we get 
\bal
 U_g\rho_{AB\tilde B}'U_g^\dagger=V_{AB}\left(\bigoplus_j p_{j\vert g} \eta_{L_j,g}\otimes \eta_{R_j \tilde B}\right)V_{AB}^\dagger,\quad \forall g\in G
 \Label{eq:decomposition intermediate}
\eal
where $p_{j\vert g}\coloneqq \vert G\vert \,q_jc_{j,g}$ constructs a probability distribution labeled by $g$ with $\sum_j p_{j\vert g}=1$. 
Since this is the form advertised in \eqref{eq:diagonalize necessary sufficient}, we just showed that \eqref{eq:decomposition state meas} implies \eqref{eq:diagonalize necessary sufficient}.

On the other hand, one can recover the form of \eqref{eq:decomposition state meas} by applying $\frac{1}{\vert G \vert}\sum_g \dm{g}\otimes$ from left in both sides of \eqref{eq:diagonalize necessary sufficient}, which means that \eqref{eq:diagonalize necessary sufficient} implies \eqref{eq:decomposition state meas}.
Therefore, \eqref{eq:diagonalize necessary sufficient} and \eqref{eq:decomposition state meas} are equivalent, which concludes the proof.
\end{proof}

Let us consider a mixed-state example that satisfies the conditions in 
Theorem~\ref{thm:measurement useless}. 
Clearly, 
when the state $\rho_{AB}$ has the following separable form
\begin{align}
\rho_{AB}= \sum_{k=1}^{d_B} P_K(k) 
\vert \psi_{k}\rangle \langle \psi_{k}\vert
 \otimes \vert \base{k}\rangle \langle \base{k}\vert,
\Label{ZCXOT}
\end{align}
we have ${\cal B}(\rho_{AB})=\rho_{AB}$, which implies 
the equality of the capacities in \eqref{XL1}.
That is, the state $\rho_{AB}$ is ${\cal B}$-q-memory useless.
However, a general separable state cannot be written in the form \eqref{ZCXOT}
even when the basis $\{\ket{\base{k}}\}_{k=1}^{d_B}$ can be freely chosen
because a general separable state is written with the sum of non-orthogonal product states.
Hence, it is, in general, unclear whether a general separable state $\rho_{AB}$ is q-memory useless.

Let us now investigate whether there exists an entangled q-memory useless state.

\begin{proposition}\Label{LLA}
Let $\rho_{AB}$ be a state having the following form
\begin{align}
\rho_{AB}= \sum_{j}P_J(j) \vert \Psi_j\rangle \langle \Psi_j\vert ,\Label{ZCXO}
\end{align}
where the state $\vert \Psi_j\rangle$ has the form 
\begin{align}
\vert \Psi_j\rangle
=\sum_{k=1}^{d_B}
e^{i \theta_{k,j}}
\sqrt{P_{K\vert J}(k\vert j)}
\sum_{\lambda\in \hat{G}'}
e^{i \theta_{\lambda,k}}
\sqrt{P_{\Lambda}(\lambda)}
\vert \psi_{\lambda,k}\rangle \otimes \vert \base{k}\rangle,
\Label{eq:equalityB}
\end{align}
and the vector $\vert \psi_{\lambda,k}\rangle \in  {\cal H}_{\lambda} \otimes {\cal M}_\lambda$ satisfies
\begin{align}
\tr_{{\cal M}_{\lambda}} \vert \psi_{\lambda,k}\rangle\langle \psi_{\lambda,k}\vert 
=\tr_{{\cal M}_{\lambda}} \vert \psi_{\lambda,k'}\rangle\langle \psi_{\lambda,k'}\vert 
\Label{AC8B}
\end{align}
for every $k,k'$.
Then, there is a basis $\mathcal{B}$ such that the conditions for Theorem~\ref{thm:measurement useless} hold.
That is, the state $\rho_{AB}$ is q-memory useless.
\end{proposition}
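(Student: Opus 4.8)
The plan is to take $\mathcal B=\{\vert\base{k}\rangle\}_{k=1}^{d_B}$ — the very basis appearing in \eqref{eq:equalityB} — and to reduce the whole claim to one structural identity: that the state in \eqref{ZCXO}--\eqref{eq:equalityB} can be written as
\bal
 \rho_{AB}=W\big(\vert\xi\rangle\langle\xi\vert_A\otimes\omega_B\big)W^\dagger
\eal
for a unit vector $\vert\xi\rangle\in\cH_A$, a state $\omega_B$ on $\cH_B$, and a unitary $W$ on $\cH_A\otimes\cH_B$ that is block-diagonal in Bob's basis $\mathcal B$ (hence commutes with the measurement channel $\mathcal B$ and with ${\rm CX}_{B\to\tilde B}$) and commutes with $U_g\otimes I_B$ for every $g\in G$. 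Granting this identity, the conclusion is immediate by feeding it into Theorem~\ref{thm:measurement useless}.

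To build $W$, I would start from the hypothesis \eqref{AC8B}. Set $\sigma_\lambda:=\tr_{\cM_\lambda}\vert\psi_{\lambda,0}\rangle\langle\psi_{\lambda,0}\vert$; then \eqref{AC8B} says that for each $k$ the vectors $\vert\psi_{\lambda,k}\rangle$ and $\vert\psi_{\lambda,0}\rangle$ are purifications of the same operator $\sigma_\lambda$ on $\cH_\lambda$ with the same purifying space $\cM_\lambda$, so there is a unitary $w_{\lambda,k}$ on $\cM_\lambda$ with $\vert\psi_{\lambda,k}\rangle=(I_{\cH_\lambda}\otimes w_{\lambda,k})\vert\psi_{\lambda,0}\rangle$ (with $w_{\lambda,0}=I$). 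Define, on $\cH_A=\bigoplus_{\lambda\in\hat{G}'}\cH_\lambda\otimes\cM_\lambda$,
\bal
 W_k:=\bigoplus_{\lambda\in\hat{G}'}e^{i\theta_{\lambda,k}}\big(I_{\cH_\lambda}\otimes w_{\lambda,k}\big),\qquad \vert\xi\rangle:=\bigoplus_{\lambda\in\hat{G}'}\sqrt{P_\Lambda(\lambda)}\,\vert\psi_{\lambda,0}\rangle .
\eal
Each block of $W_k$ commutes with the corresponding block $u^\lambda_g\otimes I_{\cM_\lambda}$ of $U_g$, so $W_kU_g=U_gW_k$ for all $g$, and the conditional vector $\vert\phi_k\rangle:=\sum_\lambda e^{i\theta_{\lambda,k}}\sqrt{P_\Lambda(\lambda)}\vert\psi_{\lambda,k}\rangle$ hidden in \eqref{eq:equalityB} equals $W_k\vert\xi\rangle$. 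Now set $W:=\sum_k W_k\otimes\vert\base{k}\rangle\langle\base{k}\vert_B$, which is a unitary on $AB$, block-diagonal in $\mathcal B$, and commuting with $U_g\otimes I_B$, and put $\vert v_j\rangle:=\sum_k e^{i\theta_{k,j}}\sqrt{P_{K\vert J}(k\vert j)}\,\vert\base{k}\rangle_B$. A one-line computation gives $\vert\Psi_j\rangle=W(\vert\xi\rangle_A\otimes\vert v_j\rangle_B)$, whence $\rho_{AB}=W(\vert\xi\rangle\langle\xi\vert_A\otimes\omega_B)W^\dagger$ with $\omega_B:=\sum_j P_J(j)\vert v_j\rangle\langle v_j\vert$ — the desired identity.

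Finally I would read off the six items of Theorem~\ref{thm:measurement useless} for $\mathcal B=\{\vert\base{k}\rangle\}$. Since $W$ commutes with ${\rm CX}_{B\to\tilde B}$ (and is trivial on $\tilde B$), the state \eqref{eq:CNOT resource} is $\rho'_{AB\tilde B}=W(\vert\xi\rangle\langle\xi\vert_A\otimes\omega_{B\tilde B})W^\dagger$ with $\omega_{B\tilde B}:={\rm CX}_{B\to\tilde B}(\omega_B\otimes\vert\base{0}\rangle\langle\base{0}\vert_{\tilde B}){\rm CX}_{B\to\tilde B}^\dagger$, and since $W$ commutes with $U_g\otimes I_B$,
\bal
 U_g\rho'_{AB\tilde B}U_g^\dagger=W\big(U_g\vert\xi\rangle\langle\xi\vert U_g^\dagger\otimes\omega_{B\tilde B}\big)W^\dagger,\qquad \forall g\in G .
\eal
This is exactly \eqref{eq:diagonalize necessary sufficient} with the trivial one-block decomposition: $\cJ=\{0\}$, $\cH_{L_0}=\cH_A$, $\cH_{R_0}=\cH_B$, $\eta_{L_0,g}:=U_g\vert\xi\rangle\langle\xi\vert U_g^\dagger$, $\eta_{R_0\tilde B}:=\omega_{B\tilde B}$ (independent of $g$), $p_{0\vert g}:=1$, and $V_{AB}:=W$. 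Hence $C_{\rm c}(\rho_{AB})=C_{\rm c}(\mathcal B(\rho_{AB}))$ and $\rho_{AB}$ is q-memory useless. (Equivalently, one can verify condition (A3) of Theorem~\ref{TH1} directly: with $\Gamma:={\rm Ad}_W\circ(\mathrm{id}_A\otimes\mathcal{R}_B)\circ{\rm Ad}_{W^\dagger}$ and $\mathcal{R}_B(\cdot):=\tr(\cdot)\,\omega_B$, the identity $\mathcal B\circ{\rm Ad}_W={\rm Ad}_W\circ\mathcal B$ gives $\Gamma(\mathcal B(U_g\rho_{AB}U_g^\dagger))=U_g\rho_{AB}U_g^\dagger$ for all $g$.)

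The only non-routine step is the first one: recognizing that \eqref{AC8B} exactly encodes the statement that every conditional vector $\vert\phi_k\rangle$ is obtained from a single fixed reference vector $\vert\xi\rangle$ by a unitary lying in the commutant of $\{U_g\}_{g\in G}$ (the block phases $e^{i\theta_{\lambda,k}}$ are in the commutant as well), so that the entire $k$-dependence of $\rho_{AB}$ gets packaged into the $\mathcal B$-block-diagonal, $G$-covariant unitary $W$. Everything after that is bookkeeping — checking that $W$ commutes with $\mathcal B$, with ${\rm CX}_{B\to\tilde B}$, and with $U_g\otimes I_B$, and then matching the resulting product form against \eqref{eq:diagonalize necessary sufficient}.
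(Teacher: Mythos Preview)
Your proof is correct and follows essentially the same approach as the paper: both use condition \eqref{AC8B} to produce, for each $k$, a unitary on $\cM_\lambda$ relating $\vert\psi_{\lambda,k}\rangle$ to a fixed reference (the paper via a common Schmidt basis on $\cH_\lambda$, you via the purification lemma), package these together with the phases $e^{i\theta_{\lambda,k}}$ into a $\mathcal B$-controlled unitary on $A$, and then verify the product form \eqref{eq:diagonalize necessary sufficient} with the trivial decomposition $L=A$, $R=B$. Your explicit observation that this controlled unitary lies in the commutant of $\{U_g\otimes I_B\}$ and commutes with ${\rm CX}_{B\to\tilde B}$ is a clean way to organize the computation the paper carries out directly, and your parenthetical verification of (A3) via the replacement channel is a valid alternative endpoint.
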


\begin{proof}
\textbf{Step 1.} 
First, we observe that for each $\ket{\psi_{\lambda,k}}$ in \eqref{eq:equalityB} there exists a Schmidt decomposition
\begin{align}
    \ket{\psi_{\lambda,k}} = \sum_j \sqrt{c_{\lambda, j}}\,\ket{r_{\lambda, j}}_{\cal H_\lambda}\ket{m_{\lambda, k, j}}_{\mathcal{M}_\lambda}
    \Label{eq:Alice pure state}
\end{align} 
where neither $c_{\lambda, j}$ nor $\ket{r_{\lambda, j}}$ depends on $k$.
This is because the marginal states for $\ket{\psi_{\lambda,k}}$ on $\mathcal{H}_\lambda$ do not depend on $k$ because of \eqref{AC8B}, and thus we can use the same eigenbasis for $\mathcal{H}_\lambda$ in the Schmidt decomposition, recalling that any eigenbasis of the marginal state gives a Schmidt basis.

\textbf{Step 2.}
Next, employing the form in \eqref{eq:Alice pure state}, we show that $\rho_{AB}$ in \eqref{ZCXO} satisfies the conditions in Theorem~\ref{thm:measurement useless}. 
Let us define $\vert \Psi'_j\rangle$ as
\begin{align}
\vert \Psi'_j\rangle:= {\rm CX}_{B\rightarrow\tilde B} \vert \Psi_j\rangle\ket{\base{0}}_{\tilde B},
\end{align}
where $\ket{\base{0}}_{\tilde B}$ is a computational basis state on $\tilde B$.
Then, using \eqref{eq:equalityB} and \eqref{eq:Alice pure state}, we can write
\begin{align}
 \ket{\Psi'_j}=& \sum_\lambda \sqrt{P_\Lambda(\lambda)} 
 \sum_{k=1}^{d_B} \sqrt{P_{K\vert J}(k\vert j)}
 e^{i(\theta_{\lambda,k}+\theta_{k,j})} 
    \sum_l \sqrt{c_{\lambda, l}}\ket{r_{\lambda, l}}_{\cal H_\lambda}
    \ket{m_{\lambda, k, l}}_{\mathcal{M}_\lambda}\ket{\base{k}}_B\ket{\base{k}}_{\tilde B}.
\end{align}
Applying $U_g=\bigoplus_\lambda U_{\lambda, g}\otimes I_{\mathcal{M}_\lambda}$ to $\ket{\Psi'}$, we get 
\begin{align}
 &U_g\ket{\Psi'_j } \nonumber \\
 =& \sum_\lambda \sqrt{P_\Lambda(\lambda)} \sum_{k=1}^{d_B} \sqrt{P_{K\vert J}(k\vert j)}
 e^{i(\theta_{\lambda,k}+\theta_{k,j})} 
  \sum_l \sqrt{c_{\lambda, l}}\ket{r_{\lambda, g, l}}_{\cal H_\lambda}
    \ket{m_{\lambda, k, l}}_{\mathcal{M}_\lambda}\ket{\base{k}}_B\ket{\base{k}}_{\tilde B}
\end{align}
where $\ket{r_{\lambda, g, l}}\coloneqq U_{\lambda, g}\ket{r_{\lambda, l}}$.
Let $V_k\coloneqq \bigoplus_\lambda e^{-i\theta_{\lambda,k}} I_{\cal H_\lambda}\otimes 
\sum_l \ketbra{m_{\lambda, l}}{m_{\lambda, k, l}}_{\mathcal{M}_\lambda}$ be a unitary on $A$ 
where $\{\ket{m_{\lambda, l}}\}_l$ is an arbitrary basis independent of $k$. 
We then define a controlled unitary $V_{AB}\coloneqq \sum_{k=1}^{d_B} V_k\otimes\dm{\base{k}}_B$.
We get 
\begin{align}
& V_{AB}U_g\ket{\Psi'_j}\nonumber \\
    &= \sum_\lambda \sqrt{P_\Lambda(\lambda)} \sum_{k=1}^{d_B} e^{i\theta_{k,j}}\sqrt{P_{K\vert J}(k\vert j)} \sum_l \sqrt{c_{\lambda, l}}\ket{r_{\lambda, g, l}}_{\cal H_\lambda}
    \ket{m_{\lambda, l}}_{\mathcal{M}_\lambda}\ket{\base{k}}_B\ket{\base{k}}_{\tilde B}\nonumber \\
    &=\left(\sum_\lambda \sqrt{P_\Lambda(\lambda)} 
    \sum_l \sqrt{c_{\lambda, l}}\ket{r_{\lambda, g, l}}_{\cal H_\lambda}
    \ket{m_{\lambda, l}}_{\mathcal{M}_\lambda}\right)
    \otimes\left(\sum_{k=1}^{d_B} e^{i\theta_{k,j}} \sqrt{P_{K\vert J}(k\vert j)}\ket{\base{k}}_B\ket{\base{k}}_{\tilde B}\right)\\
    &=:\ket{\Psi_{A,g}}\otimes \ket{\Psi_{B\tilde B,j}}
\end{align}
Therefore, we have for the state $\rho_{AB}=\sum_j P_J(j)\dm{\Psi_j}$ that
\begin{equation}\begin{aligned}
 V_{AB}U_g\rho_{AB}  U_g^\dagger V_{AB}^\dagger=\vert  \Psi_{A,g}\rangle\langle \Psi_{A,g}\vert \otimes \sum_j P_J(j)\dm{\Psi_{B\tilde B,j}} .
\end{aligned}\end{equation}
This is the form in \eqref{eq:diagonalize necessary sufficient}, 
where we choose $L_j=A$, $R_j=B$ for item 2) in Theorem~\ref{thm:measurement useless}. This concludes the proof that the state $\rho_{AB}$ in \eqref{ZCXO} satisfies the conditions in Theorem~\ref{thm:measurement useless}.
\end{proof}

\subsection{Examples of entangled q-memory useless state
based on Proposition \ref{LLA}}\Label{S3-2}
We derive examples of entangled q-memory useless states
using Proposition \ref{LLA}.
This construction contains mixed states as well as pure states.
To simplify the condition in Proposition \ref{LLA},
we assume that our unitary representation $U_g$ of $G$ on $\cH$ is multiplicity-free.
Then,
the condition \eqref{eq:equalityB} is rewritten as 
\begin{align}
\vert \Psi_j\rangle
=\sum_{k=1}^{d_B} 
e^{i \theta_{k,j}}
\sqrt{P_{K\vert J}(k\vert j)}
\Big(\sum_{\lambda\in \hat{G}'}
e^{i \theta_{\lambda,k}}
\sqrt{P_{\Lambda}(\lambda)}
\vert \psi_{\lambda}\rangle \Big)\otimes \vert \base{k}\rangle.
\Label{eq:equalityB2}
\end{align}
where $\hat{G}'$ with $\vert \hat G'\vert =l$ is a subset of irreducible representations.
We write the set $\hat{G}'$ as
$\{\lambda_s\}_{s=1}^l$ and remember that 
the basis ${\cal B}$ is written as $\{ \vert \base{k}\rangle\}_{k=1}^{d_B} $.
As a special case, let ${d_B}=l$,
$P_{K\vert J}(k\vert j)=
P_{\Lambda}(\lambda_s)=\frac{1}{l}$,
$\theta_{\lambda_s,k }=\frac{2\pi s k}{l}$,
and
$\theta_{k,j }=\frac{2\pi j k}{l}$.
The states
\bal
\ket{\phi_{k}}:= \sum_{\lambda_s\in \hat{G}'}
e^{i 
\frac{2\pi s k}{l}
}
\sqrt{P_{\Lambda}(\lambda_s)}
\vert \psi_{\lambda_s}\rangle,\quad k=1, \ldots, d_B
\eal
form normalized orthogonal vectors.
The states $\vert \Psi_j\rangle$ 
 with $j=1,\ldots,l $ are simplified as
\begin{align}
\vert \Psi_j\rangle
=\sum_{k=1}^l 
e^{i \frac{2\pi j k}{l}}
\frac{1}{\sqrt{l}}
\ket{\phi_{k}}\otimes \vert \base{k}\rangle.
\Label{eq:equalityB3}
\end{align}
We find that the states $\vert \Psi_j\rangle$ with $j=1,\ldots,l $ also form normalized orthogonal vectors and are all examples of entangled q-memory useless states.
Then, the state 
$\rho_{AB}= \sum_{j=1}^lP_J(j) \vert \Psi_j\rangle \langle \Psi_j\vert $
is separable if and only if $P_J$ is the uniform distribution on $\{1, \ldots, l\}$~\cite{Rains1999bound}.
In this case, the state $\rho_{AB}$ is written as
\bal
\rho_{AB}&=\frac{1}{l}\sum_{k=1}^l \dm{\Psi_j}=\frac{1}{l}\sum_{k=1}^l
\ketbra{\phi_{k}}{\phi_{k}}\otimes \ketbra{\base{k}}{\base{k}}.
\eal

In addition, the state $\rho_{AB}$ in the example is 
a maximally correlated state with respect to the basis 
$\{\ket{\phi_{k}}\otimes \vert \base{k'}\rangle\}_{k,k'=1}^l$.
Hence, the lower bound $\underline{C}_{\rm c}^{\rm p}(\rho_{AB})$ of the private capacity
equals the private capacity in this case \cite[Appendix A]{Wu-Long-Hayashi}. 
The above example also guarantees that 
$\underline{C}_{\rm c}^{\rm p}(\rho_{AB})=\underline{C}_{\rm c}^{\rm p}(\rho_{AB},{\cal B})$.
Thus, $\underline{C}_{\rm c}^{\rm p}(\rho_{AB},{\cal B})$ also attains the private capacity.
In this case, even when Bob makes a measurement in the basis ${\cal B}$,
Alice and Bob can attain the private capacity.
That is, Bob's quantum memory is not needed even for 
private dense coding.

\subsection{Examples of entangled q-memory useless state
based on Condition (A2)}\Label{S3-3}

Let $G$ be commutative and $\mathcal{H}_A$ be multiplicity free, and consider the maximally entangled state 
\bal
\ket{\Phi}=\sum_{s=1}^l \frac{1}{\sqrt{l}} \vert\psi_{\lambda_s}\rangle 
\vert v_{\lambda_s}\rangle
\label{VG1}
\eal
where the set $\hat{G}'$ of irreducible representations appearing in 
${\cal H}_A$ is given as $\{\lambda_s\}_{s=1}^l$
and $\{\vert v_{\lambda_s}\rangle\}_{s=1}^l$ are orthogonal states on ${\cal H}_B$ defined by
\begin{align}
\vert v_{\lambda_s}\rangle= 
\frac{1}{\sqrt{l}}\sum_{k=1}^l e^{-2\pi k {\lambda_s} i/l }
\vert \base{k}\rangle.\Label{VG2}
\end{align}
Although the previous section shows the fact that
the maximally entangled state $\vert \Phi \rangle$
is q-memory useless by using Proposition \ref{LLA},
this section shows the same fact by using the condition (A2).

Note that
\begin{align}
\vert \Phi \rangle=
\frac{1}{l}\sum_{k=1}^l \sum_{s=1}^l 
e^{-2\pi k {\lambda_s} i/l }
\vert\psi_{\lambda_s}\rangle \vert \base{k}\rangle
\end{align}
Therefore, 
defining $\vert\phi_k \rangle:=
\frac{1}{\sqrt{l}}\sum_{s=1}^l 
\vert\psi_{\lambda_s}\rangle$, we obtain the following lemma.
\begin{lemma}
The maximally entangled state 
$\vert \Phi \rangle$ has the following form.
\begin{align}
\vert \Phi \rangle=
\frac{1}{\sqrt{l}}\sum_{k=1}^l 
\vert\phi_k \rangle 
\vert \base{k}\rangle.
\end{align}
    \end{lemma}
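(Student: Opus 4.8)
The plan is to read off the claim directly from the displayed identity
$\vert \Phi \rangle = \frac{1}{l}\sum_{k=1}^l \sum_{s=1}^l e^{-2\pi k \lambda_s i/l}\,\vert\psi_{\lambda_s}\rangle\,\vert \base{k}\rangle$
that precedes the lemma statement. That identity is itself obtained by inserting the explicit expression \eqref{VG2} for $\vert v_{\lambda_s}\rangle$ into the definition \eqref{VG1} of $\vert\Phi\rangle$ and interchanging the two finite summations, so it may be taken as given.

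From there I would peel off one factor $1/\sqrt{l}$ together with the outer sum over $k$ and group the remaining sum over $s$, writing $\vert\Phi\rangle = \frac{1}{\sqrt{l}}\sum_{k=1}^{l}\bigl(\tfrac{1}{\sqrt{l}}\sum_{s=1}^{l} e^{-2\pi k\lambda_s i/l}\,\vert\psi_{\lambda_s}\rangle\bigr)\otimes\vert\base{k}\rangle$. The bracketed vector is, by construction, the vector $\vert\phi_k\rangle$ introduced immediately before the statement, with the dependence on $k$ entering through the phases $e^{-2\pi k\lambda_s i/l}$; substituting it in yields $\vert\Phi\rangle = \frac{1}{\sqrt{l}}\sum_{k=1}^{l}\vert\phi_k\rangle\,\vert\base{k}\rangle$, which is the assertion.

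Since this is merely a rearrangement of finitely many terms, there is no genuine obstacle; the only point worth keeping in mind is that the $\vert\phi_k\rangle$ must retain their $k$-dependence through the phase factors, so that the expression is a bona fide Schmidt-type decomposition along the basis $\{\vert\base{k}\rangle\}$ of $\cH_B$ and not an accidental product state. As a consistency check — not needed for the lemma itself — one may note that the $\{\vert\phi_k\rangle\}_{k=1}^{l}$ are orthonormal, this being the discrete Fourier orthogonality $\frac{1}{l}\sum_{s=1}^{l} e^{2\pi i (k-k')\lambda_s/l}=\delta_{k,k'}$, the same identity that underlies the orthonormality of the $\{\vert v_{\lambda_s}\rangle\}$ built into \eqref{VG2}.
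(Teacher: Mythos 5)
Your proof is correct and is essentially the paper's own argument: substitute \eqref{VG2} into \eqref{VG1}, interchange the two finite sums, and group the inner sum over $s$ into $\ket{\phi_k}$. One point in your favor: you rightly insist that the phase factors $e^{-2\pi k\lambda_s i/l}$ belong inside $\ket{\phi_k}$ — the paper's inline definition of $\ket{\phi_k}$ immediately before the lemma omits them (evidently a typo, since without the $k$-dependence $\ket{\Phi}$ would collapse to a product state, contradicting both the lemma's role as a Schmidt decomposition and the phase-bearing definition of $\ket{\phi_k}$ used in Section~\ref{S3-2}), so your reading is the correct one.
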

While the capacity is $\log l$, it can be achieved even when Bob measures the system $B$ 
with the basis
$\{\vert e_k\rangle\}_k$ 
because 
the resultant state on $A$ is 
$\vert\phi_k\rangle$ and 
$H({\cal G}
(\vert\phi_k\rangle \langle \phi_k\vert))=\log l$.
Hence, $\vert \Phi \rangle$
is q-memory useless. 

This q-memory useless property can be seen
from the fact that
the state 
$\vert \Phi \rangle$ satisfies the condition (A2) as follows.
We have
\begin{align}
{\cal B}(\dm{\Phi} )&=
\frac{1}{l}\sum_{k=1}^l 
\ketbra{\phi_k }{\phi_k } \otimes 
\ketbra{\base{k} }{\base{k} } \Label{NM3}\\
{\cal G}(\dm{\Phi})&=
\frac{1}{l}\sum_{s=1}^l 
\ketbra{\psi_{\lambda_s} }{\psi_{\lambda_s} } \otimes 
\ketbra{v_{\lambda_s} }{v_{\lambda_s} } \Label{NM4}\\
{\cal G}\circ {\cal B}(\dm{\Phi} )&=\frac{1}{l^2}I.\Label{NM1}
\end{align}
The matrices $l {\cal B}(\dm{\Phi} )$
and $l {\cal G}(\dm{\Phi} )$ are projections.
The intersection of the range of these projections is
the one-dimensional space generated by $\ket{ \Phi }$.
Hence, 
\begin{align}
\sqrt{ {\cal G}(\dm{\Phi} )}
 {\cal B}(\dm{\Phi} )
\sqrt{ {\cal G}(\dm{\Phi})}=\frac{1}{l^2}\dm{\Phi}.\Label{NM2}
\end{align}
The combination of \eqref{NM1} and \eqref{NM2} guarantees 
the condition (A2).
Therefore, the state $\ketbra{ \Phi }{ \Phi}$ 
is q-memory useless.

This fact can be extended to the dephased state
$\rho_{AB,p}
=(1-p)\ketbra{ \Phi }{ \Phi}+p {\cal G}(\ketbra{ \Phi }{ \Phi} )$~\cite{Devetak2005capacity}
as follows.
Instead of \eqref{NM3}, we have
\begin{align}
{\cal B}(\rho_{AB,p} )&=
(1-p) \frac{1}{l}\sum_{k=1}^l 
\ketbra{\phi_k }{\phi_k } \otimes 
\ketbra{\base{k} }{\base{k} }+ \frac{p}{l^2}I
\Label{NM3+B}.
\end{align}
Even when $\rho_{AB,0}$ is replaced by $\rho_{AB,p}$,
the relations \eqref{NM4} and \eqref{NM1} still hold as follows.
\begin{align}
{\cal G}(\rho_{AB,p})&=
(1-p)
{\cal G}(\ketbra{ \Phi }{ \Phi})+p {\cal G}(\ketbra{ \Phi }{ \Phi} )
=
\frac{1}{l}\sum_{\lambda=1}^l 
\ketbra{\psi_\lambda }{\psi_\lambda } \otimes 
\ketbra{v_\lambda }{v_\lambda } \Label{NM4Y}\\
{\cal G}\circ {\cal B}(\rho_{AB,p} )&=
(1-p) \frac{1}{l}\sum_{k=1}^l 
{\cal G}(\ketbra{\phi_k }{\phi_k }) \otimes 
\ketbra{e_k }{e_k }+ \frac{p}{l^2}I
\nonumber\\
&=
(1-p) \frac{1}{l}\sum_{k=1}^l 
I\otimes 
\ketbra{e_k }{e_k }+ \frac{p}{l^2}I
=\frac{1}{l^2}I.\Label{NM1Y}
\end{align}
Combining \eqref{NM2} and \eqref{NM3+B} with the above relations,
we have
\begin{align}
&\sqrt{ {\cal G}(\rho_{AB,p} )}
\sqrt{ {\cal G}\circ {\cal B}(\rho_{AB,p} )}^{-1}
 {\cal B}(\rho_{AB,p} )
\sqrt{ {\cal G}\circ {\cal B}(\rho_{AB,p} )}^{-1}
\sqrt{ {\cal G}(\rho_{AB,p})}\nonumber \\
=&
(1-p)\ketbra{ \Phi }{ \Phi}+p {\cal G}(\ketbra{ \Phi }{ \Phi} ).
\end{align}
Since the condition (A2) holds,
the dephased state $\rho_{AB}=(1-p)\ketbra{ \Phi }{ \Phi}+p {\cal G}(\ketbra{ \Phi }{ \Phi} )$ 
serves as another example of an entangled q-memory useless state.

\section{Pure preshared states}\Label{S4}
\subsection{General conditions for q-memory uselessness}\Label{S4-1}
Next, we consider the case with a pure state $\rho_{AB}= \vert \Psi\rangle\langle \Psi\vert $.
For ${\cal B}=\{ \vert \base{k}\rangle\}_{k=1}^{d_B} $ on $\mathcal{H}_B $,
without loss of generality, we can assume that 
$ \langle \base{k}\vert  \rho_B\vert \base{k} \rangle > 0$.
(Otherwise, we restrict $k$ to the index to satisfy 
$ \langle \base{k} \vert  \rho_B\vert \base{k} \rangle > 0$.)
Then, we have 
\begin{align}
\vert \Psi\rangle_{AB}=\sum_{k=1}^{d_B}\sqrt{P_K(k)}\vert \psi_k\rangle_A\vert \base{k}\rangle_B.
\end{align}

\begin{theorem}\Label{TH1P}
Assume that the preshared state $\rho_{AB}$
is a pure state $\vert \Psi\rangle\langle \Psi\vert $.
Given the basis ${\cal B}$, 
the following condition is equivalent to 
Conditions (A1), (A2), and (A3).
\begin{description}
\item[(C1)]
The state $\vert \Psi\rangle$ has the form as
\begin{align}
\vert \Psi\rangle
=\sum_{k=1}^{d_B} \sqrt{P_K(k)}\sum_{\lambda\in \hat{G}'}\sqrt{P_{\Lambda}(\lambda)}
e^{i \theta_{\lambda,k}}
\vert \psi_{\lambda,k}\rangle \otimes \vert \base{k}\rangle,\Label{eq:equality condition 1 pure}
\end{align}
where for all $k,k'$ the vector $\vert \psi_{\lambda,k}\rangle \in  {\cal H}_{\lambda} \otimes {\cal M}_\lambda$ 
satisfies the condition;
\begin{align}
\tr_{{\cal M}_{\lambda}} \vert \psi_{\lambda,k}\rangle\langle \psi_{\lambda,k}\vert 
=\tr_{{\cal M}_{\lambda}} \vert \psi_{\lambda,k'}\rangle\langle \psi_{\lambda,k'}\vert ,
\Label{AC8}
\end{align}
and $P_K$ and $P_\Lambda$ are independent probability distributions.
\end{description}
\end{theorem}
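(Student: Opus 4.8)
The plan is to prove Theorem~\ref{TH1P} by showing the equivalence $\text{(C1)} \Leftrightarrow \text{(A1)}$, leveraging the already-established equivalences among (A1), (A2), (A3) from Theorem~\ref{TH1}. The most direct route is to connect (C1) with condition (A2), the operator identity \eqref{eq:equality condition general}, since for a pure state the relevant averaged states have a transparent block structure. First I would compute the three operators appearing in (A2) for $\rho_{AB}=\ketbra{\Psi}{\Psi}$ with $\ket{\Psi}=\sum_k\sqrt{P_K(k)}\ket{\psi_k}_A\ket{\base{k}}_B$. We have ${\cal B}(\rho_{AB})=\sum_k P_K(k)\ketbra{\psi_k}{\psi_k}\otimes\ketbra{\base{k}}{\base{k}}$, and applying the group twirling (which acts only on $A$ and decomposes $\ket{\psi_k}$ over the isotypic blocks) gives ${\cal G}\circ{\cal B}(\rho_{AB})=\sum_k P_K(k)\bigoplus_\lambda \frac{1}{d_\lambda}I_\lambda\otimes\tr_{{\cal H}_\lambda}(\Pi_\lambda\ketbra{\psi_k}{\psi_k}\Pi_\lambda)\otimes\ketbra{\base{k}}{\base{k}}$. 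Similarly ${\cal G}(\rho_{AB})=\bigoplus_\lambda P_\Lambda(\lambda)\frac{1}{d_\lambda}I_\lambda\otimes\tr_{{\cal H}_\lambda}(\rho_\lambda)$ where $\rho_\lambda$ is the normalized block of $\rho_{AB}$.

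For the direction (C1) $\Rightarrow$ (A2), I would substitute the special form \eqref{eq:equality condition 1 pure} into these expressions. The key simplification is that condition \eqref{AC8} forces $\tr_{{\cal M}_\lambda}\ketbra{\psi_{\lambda,k}}{\psi_{\lambda,k}}$ to be independent of $k$, so the $A$-marginals of the twirled states factor through a $k$-independent operator, and the product structure $P_K(k)P_\Lambda(\lambda)$ makes ${\cal G}\circ{\cal B}(\rho_{AB})$ a tensor product of a fixed operator on $A$ with $\sum_k P_K(k)\ketbra{\base{k}}{\base{k}}$ on $B$. Then the right-hand side of \eqref{eq:equality condition general} can be computed block by block; using the Schmidt decomposition of each $\ket{\psi_{\lambda,k}}$ along the $\mathcal{H}_\lambda\otimes\mathcal{M}_\lambda$ split (with $k$-independent Schmidt vectors and coefficients on the $\mathcal{H}_\lambda$ side, exactly as in the proof of Proposition~\ref{LLA}, Step~1), one checks the sandwiched product collapses back to $\ketbra{\Psi}{\Psi}$. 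Alternatively, and perhaps more cleanly, I would invoke Proposition~\ref{LLA} directly: the form \eqref{eq:equality condition 1 pure} with the phase $e^{i\theta_{\lambda,k}}$ absorbed is a special case of \eqref{eq:equalityB} (taking a single index $j$, so $P_J$ trivial and $\theta_{k,j}=0$), so (C1) $\Rightarrow$ (A1) is immediate from that proposition, modulo checking that the extra phases $\theta_{\lambda,k}$ fit within the allowed freedom — they do, since \eqref{eq:equalityB} already carries independent phases $e^{i\theta_{\lambda,k}}$.

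The harder direction is (A1) $\Rightarrow$ (C1): given that the equality in \eqref{XL1} holds, deduce the structural form. Here I would start from (A3): there is a CPTP map $\Gamma$ with $\Gamma({\cal B}(U_g\ketbra{\Psi}{\Psi}U_g^\dagger))=U_g\ketbra{\Psi}{\Psi}U_g^\dagger$ for all $g$. Taking $g=e$, $\Gamma$ recovers the pure state $\ketbra{\Psi}{\Psi}$ from the dephased (hence generally mixed) state ${\cal B}(\rho_{AB})=\sum_k P_K(k)\ketbra{\psi_k}{\psi_k}\otimes\ketbra{\base{k}}{\base{k}}$; since a CPTP map sending a mixed state to a pure state must send every state in the support to that same pure state, each $\ket{\psi_k}_A\ket{\base{k}}_B$ is mapped to $\ket{\Psi}$. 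Then applying $\Gamma$ to the $g$-rotated versions and using that $\Gamma$ commutes appropriately with the structure, one extracts that $U_g\ket{\psi_k}$ must, after the recovery, reproduce $U_g\ket{\Psi}$; tracking this through the isotypic decomposition and comparing $B$-registers forces the cross terms between different $\lambda$'s to have the rigid phase-times-amplitude form, and forces the reduced states on $\mathcal{H}_\lambda$ to be $k$-independent (condition \eqref{AC8}) and the amplitude to factorize as $\sqrt{P_K(k)P_\Lambda(\lambda)}$.

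I expect the main obstacle to be exactly this last extraction: turning the abstract existence of the recovery map $\Gamma$ into the explicit independence of $P_K$ and $P_\Lambda$ and the $k$-independence of the $\mathcal{H}_\lambda$-marginals. A cleaner alternative I would try first is to work entirely within condition (A2): write $\ket{\Psi}=\sum_k\sqrt{P_K(k)}\ket{\psi_k}\ket{\base{k}}$ with general $\ket{\psi_k}$, plug into \eqref{eq:equality condition general}, and demand equality. Because ${\cal G}\circ{\cal B}(\rho_{AB})$ is block-diagonal in $k$ (the $B$-register is classical throughout ${\cal B}$ and ${\cal G}$ commutes with it) while ${\cal G}(\rho_{AB})$ couples different $k$'s only through the shared $A$-blocks, matching the $(k,k')$ off-diagonal blocks on both sides yields algebraic constraints: the diagonal blocks force each $\tr_{\mathcal{M}_\lambda}\ketbra{\psi_{\lambda,k}}{\psi_{\lambda,k}}$ (after normalization) to equal the corresponding marginal of $\rho_\lambda$, hence to be $k$-independent, giving \eqref{AC8}; and the off-diagonal blocks force the weights to factorize and pin down the phases, giving \eqref{eq:equality condition 1 pure} with $P_K\perp P_\Lambda$. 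This computational route avoids the recovery-map argument and reduces everything to linear algebra on the isotypic blocks, which is the step I'd be most confident carrying to completion.
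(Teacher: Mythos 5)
Your forward direction is fine: the form \eqref{eq:equality condition 1 pure} with \eqref{AC8} is indeed the single-$j$ special case of \eqref{eq:equalityB} with \eqref{AC8B}, so (C1) $\Rightarrow$ (A1) follows from the proof of Proposition~\ref{LLA} exactly as you say, and this is consistent with the paper's own remark that Theorem~\ref{TH1P} is the converse of Proposition~\ref{LLA} for pure states.

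The gap is in (A1) $\Rightarrow$ (C1), which is the substantive direction, and your proposal does not actually contain an argument for it. Both of your suggested routes stop at the point where the work begins. The recovery-map route correctly observes that $\Gamma$ must send each $\ket{\psi_k}\ket{\base{k}}$ to $\ket{\Psi}$, but ``tracking this through the isotypic decomposition \ldots forces the cross terms to have the rigid phase-times-amplitude form'' is precisely the claim to be proven, and nothing in the existence of $\Gamma$ obviously delivers it. The (A2) route is similarly asserted rather than executed: to ``match $(k,k')$ blocks'' in \eqref{eq:equality condition general} you must first compute $\sqrt{\mathcal{G}(\rho_{AB})}$ for a pure state of \emph{a priori unknown} structure, which involves the square root of a non-diagonal marginal (the appendix shows how delicate this is even in the forward direction, where the structure is already known). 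Neither sketch isolates the mechanism that actually forces $P_{K\Lambda}=P_K P_\Lambda$ and the $k$-independence of $\tr_{\mathcal{M}_\lambda}\ketbra{\psi_{\lambda,k}}{\psi_{\lambda,k}}$.

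The idea you are missing is that for pure states one should not touch the operator identity at all: both capacities have closed-form entropy expressions. Comparing \eqref{E23} with the analogous expansion \eqref{ZX7} of $C_{\rm c}(\mathcal{B}(\rho_{AB}))$, the gap $C_{\rm c}(\rho_{AB})-C_{\rm c}(\mathcal{B}(\rho_{AB}))$ becomes a sum of two nonnegative concavity defects of the von Neumann entropy, namely $H(P_\Lambda)-\sum_k P_K(k)H(P_{\Lambda\vert k})$ and $\sum_\lambda P_\Lambda(\lambda)\bigl(H(\tr_{M_\lambda}\rho_\lambda)-\sum_k P_{K\vert\lambda}(k)H(\tr_{M_\lambda}\rho_{\lambda,k})\bigr)$, as in \eqref{XPP1}--\eqref{XPP2}. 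The strict concavity equality condition then forces $P_{\Lambda\vert k}=P_\Lambda$ for all $k$ (the independence of $P_K$ and $P_\Lambda$, hence the factorized amplitudes in \eqref{eq:equality condition 1 pure}) and $\tr_{M_\lambda}\rho_{\lambda,k}=\tr_{M_\lambda}\rho_{\lambda}$ for all $k,\lambda$ (which is \eqref{AC8}), and both implications are immediate. This reduces the hard direction to two lines; your operator-algebraic routes, even if completable, would reconstruct this conclusion by much heavier means.
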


When a state $\rho_{AB}$ is a pure state,
Condition (C1) is equivalent to the condition in Proposition \ref{LLA}.
That is, the above theorem shows the inverse direction of 
Proposition \ref{LLA} in the case with pure states. 

\begin{proof}
It is sufficient to show 
the equivalence between (C1) and (A1).
Defining 
\bal
 \rho_{\lambda,k}\coloneqq\frac{1}{P_K(k)P_{\Lambda\vert k}(\lambda)}(P_\lambda\otimes \dm{k})\rho_{AB}(P_\lambda\otimes\dm{k})
\eal
and recalling \eqref{E22}, we have
\begin{align}
&C_{\rm c}({\cal B}(\rho_{AB}))\nonumber \\
=&
\sum_{\lambda \in \hat{G}'} P_{\Lambda}(\lambda)\log d_\lambda
+\sum_{k=1}^{d_B}
P_K(k)
\Big(H(P_{\Lambda\vert k})+
\sum_{\lambda \in \hat{G}'} P_{\Lambda\vert k}(\lambda)
H(\tr_{{\cal H}_{\lambda}} \rho_{\lambda,k})\Big) \nonumber \\
=&
\sum_{\lambda \in \hat{G}'} P_{\Lambda}(\lambda)\log d_\lambda
+\sum_{k=1}^{d_B} 
P_K(k) H(P_{\Lambda\vert k})
+\sum_{\lambda \in \hat{G}'} P_{\Lambda}(\lambda)
\sum_{k=1}^{d_B} P_{K\vert \lambda}(k)
H(\tr_{{\cal H}_{\lambda}} \rho_{\lambda,k}) \nonumber \\
=&
\sum_{\lambda \in \hat{G}'} P_{\Lambda}(\lambda)\log d_\lambda
+\sum_{k=1}^{d_B} 
P_K(k) H(P_{\Lambda\vert k})
+\sum_{\lambda \in \hat{G}'} P_{\Lambda}(\lambda)
\sum_{k=1}^{d_B} P_{K\vert \lambda}(k)
H(\tr_{M_{\lambda}} \rho_{\lambda,k}) .
\Label{ZX7}
\end{align}
The final equation follows from the fact that $\rho_{\lambda,k}$ is a pure state because $\rho_{AB}$ is pure.

In this case, 
by comparing \eqref{E23} and \eqref{ZX7},
Condition (A1), i.e., $C_{\rm c}(\rho)=C_{\rm c}(\mathcal{B}(\rho))$, is equivalent to
\begin{equation}\begin{aligned}
& H(P_\Lambda)+
\sum_{\lambda \in \hat{G}'} P_\Lambda(\lambda) 
H(\tr_{M_{\lambda}} \rho_{\lambda})\\
= &
\sum_{k=1}^{d_B} 
P_K(k) H(P_{\Lambda\vert k})
+\sum_{\lambda \in \hat{G}'} P_{\Lambda}(\lambda)
\sum_{k=1}^{d_B} P_{K\vert \lambda}(k)
H(\tr_{M_{\lambda}} \rho_{\lambda,k}). \label{XPP}
\end{aligned}\end{equation}
Since $P_{\Lambda}=  \sum_{k=1}^{d_B} 
P_K(k) P_{\Lambda\vert k}$ and $\tr_{M_{\lambda}} \rho_{\lambda}
=\sum_{k=1}^{d_B} P_{K\vert \lambda}(k)
\tr_{M_{\lambda}} \rho_{\lambda,k}$, 
the concavity of entropy implies that
\begin{align}
H(P_\Lambda) \ge &
\sum_{k=1}^{d_B} 
P_K(k) H(P_{\Lambda\vert k}) \Label{XPP1} \\
H(\tr_{M_{\lambda}} \rho_{\lambda})
\ge &
\sum_{k=1}^{d_B} P_{K\vert \lambda}(k)
H(\tr_{M_{\lambda}} \rho_{\lambda,k})\quad \forall \lambda 
 \Label{XPP2}
\end{align}
and therefore, \eqref{XPP} is equivalent to 
\begin{align}
P_{\Lambda}
=P_{\Lambda\vert k}, ~
\tr_{M_{\lambda}} \rho_{\lambda}
=
\tr_{M_{\lambda}} \rho_{\lambda,k}
\Label{ZXP2}
\end{align}
for all $k$ and $\lambda$.
Since the condition \eqref{ZXP2} 
is equivalent to Condition (C1), we obtain 
the equivalence between (C1) and (A1).
\end{proof}

Although we have already proven the equivalence between (C1) and all the conditions in Theorem~\ref{TH1}, it may not be immediately clear how the form of (C1) can induce the form in (A2). In the Appendix, we explicitly derive how the form in (A2) can be reached starting from the form in (C1).

Theorem~\ref{TH1} and the proof of Proposition~\ref{LLA} give
the following characterization for pure resource states. 
The following is a summary of the results for general pure states.
\begin{theorem}
For an arbitrary pure resource state $\rho_{AB}=\dm{\Psi}$, the following are equivalent.
\begin{description}
\item[(D1)]
The equality in \eqref{XL1} holds, i.e., $\rho_{AB}$ is $\mathcal{B}$-q-memory useless.
\item[(D2)]
$\ket{\Psi}$ satisfies \eqref{eq:equality condition 1 pure} and \eqref{AC8}.
\item[(D3)]
The following relation holds.
\begin{align}
\rho_{AB}= &\sqrt{ {\cal G}(\rho_{AB})} 
\sqrt{{\cal G}\circ {\cal B}(\rho_{AB})}^{-1}
{\cal B}(\rho_{AB})\sqrt{{\cal G}\circ {\cal B}(\rho_{AB})}^{-1}
\sqrt{ {\cal G}(\rho_{AB})}.
\end{align}
\item[(D4)] \Label{item:Hayden condition pure}
    There exists a controlled unitary $V_{AB}=\sum_{k=1}^{d_B} V_k\otimes\dm{\base{k}}_B$ and a set $\{\ket{\eta_g}_A\}_g$ of states on $\mathcal{H}_{A}$ and some state $\ket{\phi}_B$ on $\mathcal{H}_{B}$ such that
\begin{equation}\begin{aligned}
V_{AB}\,U_g\ket{\Psi}= \ket{\eta_g}_A\otimes\ket{\phi}_B, \forall g\in G.
\end{aligned}\end{equation} 
\end{description}
\end{theorem}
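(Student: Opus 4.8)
The plan is to exploit what is already in hand. Conditions (D1), (D2) and (D3) are, respectively, conditions (A1), (C1) and (A2), so by Theorem~\ref{TH1} and Theorem~\ref{TH1P} these three are already known to be mutually equivalent. Hence it remains only to attach (D4) to this circle, and the cleanest route is to prove (D2)$\Rightarrow$(D4)$\Rightarrow$(D1), the latter by verifying condition (A3). (The direct implication (D2)$\Rightarrow$(D3), i.e.\ reaching the form (A2) from the form (C1) without passing through operational quantities, is the content of the Appendix and is not needed here.)

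For (D2)$\Rightarrow$(D4) I would rerun the computation in the proof of Proposition~\ref{LLA}, now specialized to a single pure state, so that there is no averaging index $j$ and no auxiliary system $\tilde B$. Starting from \eqref{eq:equality condition 1 pure}, hypothesis \eqref{AC8} says that the reduced state of $\ket{\psi_{\lambda,k}}$ on $\cH_\lambda$ does not depend on $k$, so each $\ket{\psi_{\lambda,k}}$ has a Schmidt decomposition $\ket{\psi_{\lambda,k}}=\sum_l\sqrt{c_{\lambda,l}}\,\ket{r_{\lambda,l}}_{\cH_\lambda}\ket{m_{\lambda,k,l}}_{\cM_\lambda}$ in which $c_{\lambda,l}$ and $\ket{r_{\lambda,l}}$ are $k$-independent (Step~1 of that proof). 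Writing $U_g=\bigoplus_\lambda U_{\lambda,g}\otimes I_{\cM_\lambda}$ and setting $V_k:=\bigoplus_\lambda e^{-i\theta_{\lambda,k}}I_{\cH_\lambda}\otimes\sum_l\ketbra{m_{\lambda,l}}{m_{\lambda,k,l}}_{\cM_\lambda}$ for a fixed basis $\{\ket{m_{\lambda,l}}\}_l$ and $V_{AB}:=\sum_k V_k\otimes\dm{\base{k}}_B$, a one-line computation gives $V_{AB}U_g\ket{\Psi}=\ket{\eta_g}_A\otimes\ket{\phi}_B$ with $\ket{\eta_g}_A:=\sum_\lambda\sqrt{P_\Lambda(\lambda)}\sum_l\sqrt{c_{\lambda,l}}\,U_{\lambda,g}\ket{r_{\lambda,l}}_{\cH_\lambda}\ket{m_{\lambda,l}}_{\cM_\lambda}$ and $\ket{\phi}_B:=\sum_k\sqrt{P_K(k)}\ket{\base{k}}_B$; here the independence of $P_K$ and $P_\Lambda$ is exactly what lets the $B$-register factor out. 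Both $\ket{\eta_g}$ and $\ket{\phi}$ are unit vectors since $\sum_l c_{\lambda,l}=\sum_\lambda P_\Lambda(\lambda)=\sum_k P_K(k)=1$, so (D4) holds.

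For (D4)$\Rightarrow$(D1) I would verify condition (A3). The key observation is that $V_{AB}$ is block-diagonal with respect to $\cB$ on $B$, hence commutes with every projector $I_A\otimes\dm{\base{k}}_B$ and therefore with the measurement channel: $\cB(V_{AB}\sigma V_{AB}^\dagger)=V_{AB}\cB(\sigma)V_{AB}^\dagger$. From (D4), $U_g\rho_{AB}U_g^\dagger=V_{AB}^\dagger(\dm{\eta_g}_A\otimes\dm{\phi}_B)V_{AB}$, so $\cB(U_g\rho_{AB}U_g^\dagger)=V_{AB}^\dagger(\dm{\eta_g}_A\otimes\cB(\dm{\phi}_B))V_{AB}$. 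Define the CPTP map $\Gamma(X):=V_{AB}^\dagger(\tr_B(V_{AB}XV_{AB}^\dagger)\otimes\dm{\phi}_B)V_{AB}$, which is a composition of unitary conjugation, partial trace, adjoining a fixed state, and unitary conjugation. Then, using $\tr(\cB(\dm{\phi}_B))=1$, one gets $\Gamma(\cB(U_g\rho_{AB}U_g^\dagger))=V_{AB}^\dagger(\dm{\eta_g}_A\otimes\dm{\phi}_B)V_{AB}=U_g\rho_{AB}U_g^\dagger$ for every $g\in G$, which is precisely (A3). Theorem~\ref{TH1} then yields (A3)$\Rightarrow$(A1)$=$(D1), closing the equivalence.

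The only real work is the bookkeeping in the middle step: one must make sure the Schmidt weights and vectors on the $\cH_\lambda$ factors can be chosen uniformly in $k$ — which is exactly where \eqref{AC8} enters — and that each $V_k$ is genuinely unitary (its $\cM_\lambda$-block is a unitary change of orthonormal basis, the $\cH_\lambda$-block a phase times the identity). There is no conceptual obstacle beyond this, since all of it is already present, in slightly more general form, in the proof of Proposition~\ref{LLA}.
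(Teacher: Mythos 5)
Your proposal is correct and follows essentially the same route as the paper, which obtains this theorem by assembling Theorem~\ref{TH1} (for (D1)$\Leftrightarrow$(D3)), Theorem~\ref{TH1P} (for (D1)$\Leftrightarrow$(D2)), and the computation in the proof of Proposition~\ref{LLA} specialized to a single pure state (for (D2)$\Rightarrow$(D4)). Your explicit recovery map $\Gamma(X)=V_{AB}^\dagger\bigl(\tr_B(V_{AB}XV_{AB}^\dagger)\otimes\dm{\phi}_B\bigr)V_{AB}$ verifying (A3) is a clean way to close the loop (D4)$\Rightarrow$(D1), a step the paper leaves implicit (it could alternatively be routed through Theorem~\ref{thm:measurement useless} using that $V_{AB}$ commutes with ${\rm CX}_{B\rightarrow\tilde B}$); just note that the $\mathcal{M}_\lambda$-block of $V_k$ must be completed to a full unitary when the Schmidt rank is less than $\dim\mathcal{M}_\lambda$, exactly as in the paper's own construction.
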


To obtain a further characterization, we recall the multiplicity-free condition.
When the multiplicity-free condition is satisfied, \eqref{eq:equality condition 1 pure} and \eqref{AC8} can be represented by the following single form 
\begin{align}
\vert \Psi\rangle
=\sum_{\lambda\in \hat{G}'}\sqrt{P_{\Lambda}(\lambda)}
\vert \psi_{\lambda}\rangle \otimes \sum_{k=1}^{d_B} \sqrt{P_K(k)}e^{i \theta_{\lambda,k}}\vert \base{k}\rangle,
\Label{eq:equality condition multiplicity free}
\end{align}
for some vector $\vert \psi_{\lambda}\rangle \in  {\cal H}_{\lambda}$ independent of $k$.
On the other hand, if $G$ is abelian (but not necessarily multiplicity-free), then $\dim {\cal H}_\lambda = 1$ for all $\lambda\in \hat{G}'$, implying that \eqref{AC8} is always satisfied and thus the condition \eqref{eq:equality condition 1 pure} solely serves as the necessary and sufficient condition for Bob's quantum memory to be useless.  

Therefore, we obtain the following corollary.

\begin{corollary}\Label{Cor1}
Let the preshared state $\rho_{AB}$ be a pure state given by $\ket{\Psi}=\sum_{\lambda\in\hat G'}\sqrt{P_\Lambda(\lambda)}\ket{\psi_\lambda}\ket{v_\lambda}$, where $\hat G'$ with $\vert \hat G'\vert =l$ is a subset of 
irreducible representations for which $P_\Lambda(\lambda)\neq 0\ \forall \lambda\in\hat G'$, and $\ket{v_1}, \ldots, \ket{v_l}$ are
some quantum states on $\mathcal{H}_B$.
Suppose also that the multiplicity-free condition holds.
Then, the following conditions are equivalent.

\begin{description}
\item[(E1)]
The state $\vert \Psi\rangle$
is q-memory useless.
\item[(E2)]
There exist a basis $\{\vert \base{k}\rangle\}_{k=1}^{d_B}$ of ${\cal H}_B$, a probability distribution $P_K$ on $\{1,\dots,d_B\}$ independent of $\lambda$,
and real numbers $\{\theta_{\lambda,k}\}_{\lambda,k}$ such that
\begin{align}
\vert v_\lambda\rangle= \sum_{k=1}^{d_B} \sqrt{P_K(k)}e^{i \theta_{\lambda,k}}\vert \base{k}\rangle.
\Label{eq:unitary condition}
\end{align}
\end{description}
\end{corollary}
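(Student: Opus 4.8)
The plan is to reduce the whole statement to Theorem~\ref{TH1P} (the equivalence of (A1) and (C1)) combined with the multiplicity-free simplification recorded just above the corollary, which says that under the multiplicity-free condition the conditions \eqref{eq:equality condition 1 pure}--\eqref{AC8} collapse to the single product form \eqref{eq:equality condition multiplicity free}. Once that reduction is in place, what is left is a pure matching argument: comparing, block by block over the mutually orthogonal irreducible subspaces $\mathcal{H}_\lambda$, the given decomposition $\ket{\Psi}=\sum_{\lambda\in\hat G'}\sqrt{P_\Lambda(\lambda)}\ket{\psi_\lambda}\ket{v_\lambda}$ with the canonical form \eqref{eq:equality condition multiplicity free}.

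For (E2) $\Rightarrow$ (E1) I would simply substitute \eqref{eq:unitary condition} into the given expression for $\ket{\Psi}$; this literally reproduces the form \eqref{eq:equality condition multiplicity free} with the stated $P_\Lambda$, the ($\lambda$-independent) $P_K$, and the phases $\theta_{\lambda,k}$, so Theorem~\ref{TH1P} in its multiplicity-free form makes $\ket{\Psi}$ $\mathcal{B}$-q-memory useless, hence q-memory useless.

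For (E1) $\Rightarrow$ (E2) I would start from q-memory uselessness, which furnishes some basis $\mathcal{B}=\{\ket{\base{k}}\}_{k=1}^{d_B}$ for which $\ket{\Psi}$ is $\mathcal{B}$-q-memory useless; Theorem~\ref{TH1P} together with multiplicity-freeness then forces $\ket{\Psi}$ into the form \eqref{eq:equality condition multiplicity free} for some probability distribution $\tilde P_K$, some unit vectors $\ket{\psi'_\lambda}\in\mathcal{H}_\lambda$, and some phases $\theta_{\lambda,k}$. Projecting both this expression and the given one onto $\mathcal{H}_\lambda\otimes\mathcal{H}_B$ and comparing norms shows that the weight in the given decomposition coincides with the one in \eqref{eq:equality condition multiplicity free}; here the hypothesis $P_\Lambda(\lambda)\neq 0$ for all $\lambda\in\hat G'$ is exactly what lets me cancel $\sqrt{P_\Lambda(\lambda)}$ from both sides. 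I am then left with an identity of unit product vectors $\ket{\psi_\lambda}_A\ket{v_\lambda}_B=\ket{\psi'_\lambda}_A\otimes\bigl(\sum_{k}\sqrt{\tilde P_K(k)}\,e^{i\theta_{\lambda,k}}\ket{\base{k}}\bigr)_B$, and the essential uniqueness of a product decomposition forces the two $B$-factors to agree up to a global phase $e^{-i\alpha_\lambda}$, which is absorbed by the substitution $\theta_{\lambda,k}\mapsto\theta_{\lambda,k}-\alpha_\lambda$. This delivers exactly \eqref{eq:unitary condition} with $P_K:=\tilde P_K$ independent of $\lambda$.

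I do not expect a genuine obstacle; the argument is bookkeeping resting on Theorem~\ref{TH1P}. The two points requiring care are: (i) the zero-weight indices --- some $\tilde P_K(k)$ may vanish, and in the setup of Theorem~\ref{TH1P} one works only with the indices $k$ satisfying $\langle\base{k}|\rho_B|\base{k}\rangle>0$, so to present \eqref{eq:unitary condition} over the full basis $\{\ket{\base{k}}\}_{k=1}^{d_B}$ one sets $P_K(k)=0$ (with arbitrary phases) on the complementary indices; and (ii) the phase bookkeeping just described, matching the $A$-side phases of the product vectors into the $\theta_{\lambda,k}$. Neither affects the logical skeleton.
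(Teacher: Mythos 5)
Your proposal is correct and follows essentially the same route as the paper, which obtains Corollary~\ref{Cor1} directly from Theorem~\ref{TH1P} via the observation that under the multiplicity-free condition the pair \eqref{eq:equality condition 1 pure}--\eqref{AC8} collapses to the single product form \eqref{eq:equality condition multiplicity free}. Your added bookkeeping (block-by-block matching over the orthogonal $\mathcal{H}_\lambda$, cancellation using $P_\Lambda(\lambda)\neq 0$, absorption of the product-decomposition phases, and the treatment of zero-weight indices $k$) merely fills in details the paper leaves implicit.
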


Since Condition (E2) can be considered as a condition for 
the $l$ vectors $V :=(\ket{v_1},\dots,\ket{v_l})$ in ${\cal H}_B$,
the problem of q-memory uselessness
is reduced to the analysis for the $l$ vectors $V$.

For example, when 
the vectors $\ket{v_1}, \ldots, \ket{v_l}$ are orthogonal,
Condition (E2) holds by choosing 
the basis $\{\vert \base{k}\rangle\}_{k=1}^{d_B}$ of ${\cal H}_B$ as follows.
The orthogonality guarantees that $l \le d_B$.
We set 
$\vert \base{1}\rangle, \ldots, \vert \base{l}\rangle$ as 
\bal
 \ket{k} = \frac{1}{\sqrt{l}}\sum_{s=1}^l e^{i2\pi ks/l}\ket{v_s} ,\quad k=1,\dots,l
\eal
and 
$\vert \base{l+1}\rangle, \ldots, \vert \base{r}\rangle$
to be vectors orthogonal to
$\ket{v_1}, \ldots, \ket{v_l}$.
Eq.~\eqref{eq:unitary condition} is then satisfied by choosing $P_K(k)=1/l$ for $k=1,\dots,l$ and $\theta_{\lambda,k}=-2\pi\lambda k/l$.

The fact that Condition (E2) always holds when $l=2$ can be seen as follows.
We choose $\theta$ and $\theta'$ as $e^{i\theta'}\cos \theta= \langle v_1 \vert v_2\rangle $.
Then, we choose the basis $\{\vert \base{1}\rangle,\vert \base{2}\rangle\}$ to satisfy 
\begin{align}
\vert v_1\rangle &=\frac{1}{\sqrt{2}}(\vert \base{1}\rangle +\vert \base{2}\rangle) \\
\vert v_2\rangle &=\frac{1}{\sqrt{2}}(e^{i(\theta'+\theta)}\vert \base{1}\rangle +e^{i(\theta'-\theta)}\vert \base{2}\rangle) .
\end{align}

\subsection{Characterization by genuinely incoherent operations}\Label{S4-2}
Under the multiplicity-free condition,
we characterize our conditions for q-memory uselessness for pure states
by using genuinely incoherent operations (GIO)~\cite{deVicente}, one of the major classes of operations considered in the resource theory of coherence~\cite{streltsov}.
The GIO, which we denote as $\cO_{\rm GIO}$, is defined as the set of operations that preserve all incoherent states, $\Gamma_{\rm GIO}(\sigma)=\sigma$ for every incoherent state $\sigma$, where a state is called incoherent when it is diagonal with a given preferred orthogonal basis.
Then, it was shown in \cite[Thm.2]{deVicente} that $\Gamma$ is in $\cO_{\rm GIO}$ if and only if $\Gamma$ can be written as $\Gamma(\rho) = A\odot\rho$ with some positive semidefinite matrix $A$ with $A_{\lambda\lambda}=1\,\forall \lambda$, 
where $(X\odot Y)_{\lambda\eta}:=X_{\lambda\eta}Y_{\lambda\eta}$ is the Hadamard product. 
 
We find that GIO is a useful tool for analyzing the condition (E2) for general $l$
through the Gram matrix of $V=(\ket{v_1},\dots,\ket{v_l})$ defined as $J(V)_{\lambda\eta}\coloneqq \langle v_\lambda\vert v_\eta\rangle$, $\lambda,\eta = 1,\dots, l$, 
or equivalently, $J(V)\coloneqq V^\dagger V$. 
Since the diagonal element of the Gram matrix $J(V)$ is $1$,
the CPTP map $\Gamma_{J(V)}$ defined as
$ \Gamma_{J(V)}(\rho):= J(V)\odot\rho$ is also in $\cO_{\rm GIO}$.
Then, we have the following characterizations.

\begin{theorem}\Label{pro1}
Suppose $\mathcal{H}_A$ satisfies the multiplicity-free condition. 
Let $\ket{\Psi}=\sum_{\lambda\in\hat G'}\sqrt{P_\Lambda(\lambda)}\ket{\psi_\lambda}\ket{v_\lambda}$ be a given resource state shared by Alice and Bob, where $\hat G'$ with $\vert \hat G'\vert =l$ is a subset of 
irreducible representations for which $P_\Lambda(\lambda)\neq 0\ \forall \lambda\in\hat G'$.
The following conditions 
for $l$ vectors $V :=(\ket{v_1},\dots,\ket{v_l})$ in ${\cal H}_B$
are equivalent.
\begin{description}
\item[(F1)]
The state $\ket{\Psi}$ is q-memory useless.
\item[(F2)]
There exist a basis $\{\ket{\base{k}}\}_{k=1}^{d_B}$ of ${\cal H}_B$, a probability distribution $P_K$ on $\{1,\dots,d_B\}$, and real numbers $\{\theta_{\lambda,k}\}_{\lambda,k}$ that satisfy \eqref{eq:unitary condition}.
\item[(F3)]
There exist a basis $\{\ket{\base{k}}\}_{k=1}^{d_B}$ of ${\cal H}_B$, a probability distribution $P_K$ on $\{1,\dots,d_B\}$, and real numbers $\{\theta_{\lambda,k}\}_{\lambda,k}$ that satisfy 
\begin{align}
    J(V)=J(U)
\Label{eq:condition gram equal}
\end{align}
where $l$ vectors 
$U :=(\ket{u_1},\dots,\ket{u_l})$
are defined as
\begin{align}
    \ket{u_\lambda}\coloneqq \sum_{k=1}^{d_B}\sqrt{P_K(k)}e^{i \theta_{\lambda,k}}\vert \base{k}\rangle
\end{align}
for $\lambda=1,\dots, l$.
\item[(F4)]
There exist a probability distribution $P_K$ on $\{1,\dots,d_B\}$ and real numbers $\{\theta_{\lambda,k}\}_{\lambda,k}$ that satisfy 
\begin{align}
    \frac{1}{l}J(V) = \sum_k P_K(k) Z_k \ket{+}\bra{+} Z_k^\dagger,
\Label{eq:probabilistic incoherent gram}
\end{align}
where
$\ket{+}\coloneqq \frac{1}{\sqrt{l}}\sum_{\lambda=1}^l\ket{\lambda}$ is the maximally coherent state and 
\bal
Z_k\coloneqq \sum_\lambda e^{i\theta_{\lambda,k}}\ket{\lambda}\bra{\lambda}
\eal
is an (incoherent) diagonal unitary. 
\item[(F5)]
There exist a probability distribution $P_K$ on $\{1,\dots,d_B\}$ and real numbers $\{\theta_{\lambda,k}\}_{\lambda,k}$ such that $\Gamma_{J(V)}\in\cO_{\rm GIO}$ can be written as
\bal
\Gamma_{J(V)}(\rho)=\sum_{k} P_K(k) Z_k \rho Z_k^\dagger.
\eal 
\end{description}
\end{theorem}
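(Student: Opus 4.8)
The plan is to establish the chain of equivalences (F1) $\Leftrightarrow$ (F2) $\Leftrightarrow$ (F3) $\Leftrightarrow$ (F4) $\Leftrightarrow$ (F5), exploiting that under the multiplicity-free condition the entire question has already been reduced, by Corollary~\ref{Cor1}, to a statement about the $l$ vectors $V=(\ket{v_1},\dots,\ket{v_l})$ in $\mathcal{H}_B$. The equivalence (F1) $\Leftrightarrow$ (F2) is precisely Corollary~\ref{Cor1} (condition (E2) there is the same as (F2) here), so nothing new is needed for that link; I would simply cite it.

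For (F2) $\Leftrightarrow$ (F3): the point is that whether there is a basis $\{\ket{\base{k}}\}$, a distribution $P_K$ and phases $\theta_{\lambda,k}$ with $\ket{v_\lambda}=\sum_k\sqrt{P_K(k)}e^{i\theta_{\lambda,k}}\ket{\base{k}}$ depends only on the Gram matrix $J(V)=V^\dagger V$. Indeed, two tuples of vectors with the same Gram matrix are related by an isometry on the span, so if $U=(\ket{u_1},\dots,\ket{u_l})$ has the prescribed form and $J(V)=J(U)$, then applying the isometry carries the form of the $\ket{u_\lambda}$ over to the $\ket{v_\lambda}$ (possibly after enlarging the basis $\{\ket{\base{k}}\}$ by vectors orthogonal to the span, which is harmless because $P_K$ can be taken to vanish there). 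Conversely (F2) trivially gives such a $U$, namely $U=V$. I would write this out as a short two-line argument in each direction.

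For (F3) $\Leftrightarrow$ (F4): one computes $J(U)_{\lambda\eta}=\langle u_\lambda\vert u_\eta\rangle = \sum_k P_K(k)e^{i(\theta_{\eta,k}-\theta_{\lambda,k})}$. On the other hand the $(\lambda,\eta)$ entry of $Z_k\ket{+}\bra{+}Z_k^\dagger$ is $\frac{1}{l}e^{i(\theta_{\lambda,k}-\theta_{\eta,k})}$, so $\sum_k P_K(k)\,Z_k\ket{+}\bra{+}Z_k^\dagger$ has $(\lambda,\eta)$ entry $\frac{1}{l}\sum_k P_K(k)e^{i(\theta_{\lambda,k}-\theta_{\eta,k})} = \frac{1}{l}\overline{J(U)_{\lambda\eta}}$. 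Since $J(V)$ is Hermitian, $J(V)=J(U)$ is equivalent to $\frac{1}{l}J(V)$ equalling this sum up to the harmless overall complex conjugation (which can be absorbed by replacing $\theta_{\lambda,k}$ by $-\theta_{\lambda,k}$); hence (F3) and (F4) are the same statement. Finally, (F4) $\Leftrightarrow$ (F5) is immediate from the definition of the Hadamard-product channel $\Gamma_{J(V)}(\rho)=J(V)\odot\rho$: apply both sides of \eqref{eq:probabilistic incoherent gram} as a Hadamard multiplier, noting $Z_k\ket{+}\bra{+}Z_k^\dagger$ acts by Hadamard multiplication exactly as $\rho\mapsto \frac{1}{l}Z_k\rho Z_k^\dagger$ does — more directly, $(A\odot\rho)$ with $A=\sum_k P_K(k)Z_k\ket{+}\bra{+}Z_k^\dagger/(1/l)$ unwinds to $\sum_k P_K(k)Z_k\rho Z_k^\dagger$ since $Z_k$ diagonal means $Z_k\rho Z_k^\dagger=(Z_k\ket{+}\bra{+}Z_k^\dagger\cdot l)\odot\rho$. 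I would also remark that $\Gamma_{J(V)}\in\cO_{\rm GIO}$ automatically, as noted before the theorem, since $J(V)$ has unit diagonal.

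The main obstacle I anticipate is the bookkeeping in (F2) $\Leftrightarrow$ (F3): one must be careful that the basis in (F3) may need to be larger than $\operatorname{span}\{\ket{v_\lambda}\}$ and that the isometry/extension argument is stated cleanly, and one must track the harmless complex conjugation passing between $J(U)$ and the sum of rank-one terms so that the signs of the phases $\theta_{\lambda,k}$ are handled consistently across (F3), (F4), (F5). None of this is deep, but it is where a sloppy write-up would introduce a real gap, so I would spell out the isometry step and fix the phase convention once and for all at the start.
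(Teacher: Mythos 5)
Your proposal is correct and follows essentially the same route as the paper: (F1)$\Leftrightarrow$(F2) by Corollary~\ref{Cor1}, (F2)$\Leftrightarrow$(F3) via invariance of the Gram matrix under a unitary change of basis, (F3)$\Leftrightarrow$(F4) by writing $J(U)$ as $l\sum_k P_K(k)Z_k\ket{+}\bra{+}Z_k^\dagger$, and (F4)$\Leftrightarrow$(F5) via the Hadamard-product identity $Z_k\rho Z_k^\dagger=(lZ_k\ket{+}\bra{+}Z_k^\dagger)\odot\rho$. Your explicit tracking of the complex conjugation between $J(U)_{\lambda\eta}$ and the rank-one sum (absorbed by $\theta\mapsto-\theta$) is a point the paper's write-up glosses over, but it does not change the argument.
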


For Condition (F4), Reference \cite[Thm.3]{deVicente} found that, for $l=2, 3$, any genuinely incoherent operation can be written as a probabilistic incoherent diagonal unitary, while for every $l\geq 4$, there exists a genuinely incoherent operation that cannot be written as a probabilistic incoherent diagonal unitary. 
In addition, for any $\Gamma\in\cO_{\rm GIO}$, there exist 
$l$ vectors $V :=(\ket{v_1},\dots,\ket{v_l})$ such that 
$ \Gamma= \Gamma_{J(V)}$.
Therefore, we obtain the following result.

\begin{corollary}\Label{cor:memory useless dimension}
For every choice of  $\ket{v_1},\dots,\ket{v_l}$, 
there exist a basis $\{\ket{\base{k}}\}_{k=1}^{d_B}$, a probability distribution $\{P_K(k)\}_{k=1}^{d_B}$, 
and real numbers $\{\theta_{\lambda,k}\}_{\lambda,k}$ that satisfy \eqref{eq:unitary condition} for $l=2,\, 3$.
On the other hand, for every $l$ with $l\geq 4$, there exist $\ket{v_1},\dots,\ket{v_l}$ for which there is no choice of $\{\ket{\base{k}}\}_{k=1}^{d_B}$, $\{P_K(k)\}_{k=1}^{d_B}$, and $\{\theta_{\lambda,k}\}_{\lambda,k}$ that satisfies \eqref{eq:unitary condition}. 
In other words,
Bob's quantum memory is not useful to achieve the capacity
$C_{\rm c}(\ket{\Phi})$ for $l=2,3$.
However, when $l \ge 4$, there exists a state 
$\ket{\Phi}$ such that
Bob's quantum memory is useful to achieve the capacity
$C_{\rm c}(\ket{\Phi})$.
\end{corollary}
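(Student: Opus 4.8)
The plan is to read off both halves of the statement from the equivalence between Conditions (F1) and (F5) in Theorem~\ref{pro1}, combined with the structural dichotomy for genuinely incoherent operations established in \cite[Thm.~3]{deVicente}. By Theorem~\ref{pro1}, a pure multiplicity-free resource state $\ket{\Psi}=\sum_{\lambda\in\hat G'}\sqrt{P_\Lambda(\lambda)}\ket{\psi_\lambda}\ket{v_\lambda}$ is q-memory useless if and only if the genuinely incoherent operation $\Gamma_{J(V)}(\rho)=J(V)\odot\rho$ attached to the Gram matrix $J(V)$ of $V=(\ket{v_1},\dots,\ket{v_l})$ can be written as a convex mixture $\sum_k P_K(k)\,Z_k\rho Z_k^\dagger$ of incoherent diagonal unitaries. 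So the whole question collapses to: for which $l$ is every such $\Gamma_{J(V)}$ a probabilistic incoherent diagonal unitary?

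For $l=2,3$ I would argue as follows. Given any states $\ket{v_1},\dots,\ket{v_l}$, the Gram matrix $J(V)$ is positive semidefinite with unit diagonal, so $\Gamma_{J(V)}\in\cO_{\rm GIO}$ by \cite[Thm.~2]{deVicente}. By \cite[Thm.~3]{deVicente}, every genuinely incoherent operation on $\mathbb C^{l}$ with $l\le 3$ is a probabilistic incoherent diagonal unitary; in particular $\Gamma_{J(V)}$ has exactly the form required by Condition (F5), choosing the basis $\{\ket{\base{k}}\}$ (equivalently the index range of $P_K$) large enough to carry the mixture. Theorem~\ref{pro1} then returns Condition (F2), i.e.\ the existence of a basis, distribution $P_K$, and phases satisfying \eqref{eq:unitary condition}; hence $\ket{\Psi}$ is q-memory useless. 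This is the first assertion.

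For $l\ge 4$ I would run the argument backwards. By \cite[Thm.~3]{deVicente} there is a genuinely incoherent operation $\Gamma^{\ast}$ on $\mathbb C^{l}$ that is \emph{not} a probabilistic incoherent diagonal unitary; writing $\Gamma^{\ast}(\rho)=A^{\ast}\odot\rho$, the matrix $A^{\ast}$ is positive semidefinite with $A^{\ast}_{\lambda\lambda}=1$, hence a correlation matrix, so $A^{\ast}=V^{\dagger}V$ for some unit vectors $\ket{v_1},\dots,\ket{v_l}$ and $\Gamma_{J(V)}=\Gamma^{\ast}$. To package this into a genuine preshared state I would fix a concrete multiplicity-free representation carrying $l$ inequivalent one-dimensional irreducibles — for instance $G=\mathbb Z_{l}$ acting on $\mathcal H_A=\mathbb C^{l}$ by the regular representation, so $\hat G'=\{\lambda_1,\dots,\lambda_l\}$ with $\mathcal H_{\lambda_s}=\mathbb C$ — and set $\ket{\Phi}=\tfrac{1}{\sqrt l}\sum_{s=1}^{l}\ket{\psi_{\lambda_s}}\otimes\ket{v_s}$ with $P_\Lambda$ uniform, which is normalized since the $\ket{v_s}$ are unit vectors. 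For this $\ket{\Phi}$, Condition (F5) fails (its associated genuinely incoherent operation is $\Gamma^{\ast}$), so Theorem~\ref{pro1} forces Condition (F1) to fail; that is, $\ket{\Phi}$ is q-memory useful. This produces, for each $l\ge 4$, a state whose capacity $C_{\rm c}(\ket{\Phi})$ is unreachable without Bob's quantum memory, which is the ``in other words'' statement.

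The appeals to \cite[Thm.~2,~3]{deVicente} and to Theorem~\ref{pro1} are routine. The part I expect to need real care is the realization used for $l\ge 4$: one must convert the abstract counterexample $\Gamma^{\ast}$ into a bona fide shared pure state, i.e.\ verify that the Gram factorization $A^{\ast}=V^{\dagger}V$ produces unit vectors $\ket{v_\lambda}$ that, together with a chosen $(G,U,P_\Lambda)$ satisfying the multiplicity-free condition, assemble into a normalized $\ket{\Phi}$ of the form in Corollary~\ref{Cor1}. A secondary point is the bookkeeping of measurement outcomes in Conditions (F2) and (F5): one should check that the decomposition furnished by \cite[Thm.~3]{deVicente} for $l=3$ is not obstructed by the number of terms, which is handled by letting $\mathcal H_B$ (hence the index set of $P_K$) be as large as needed.
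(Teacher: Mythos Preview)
Your proposal is correct and follows essentially the same route as the paper: both reduce the question to the equivalence (F1)$\Leftrightarrow$(F5) in Theorem~\ref{pro1} and then invoke \cite[Thm.~3]{deVicente} for the dichotomy at $l=3$ versus $l\ge 4$, together with the observation that every $\Gamma\in\cO_{\rm GIO}$ arises as $\Gamma_{J(V)}$ for some tuple of unit vectors. Your explicit realization of the $l\ge 4$ counterexample via Gram factorization and a concrete cyclic group is a natural fleshing-out of the paper's one-line remark that ``for any $\Gamma\in\cO_{\rm GIO}$, there exist $l$ vectors $V$ such that $\Gamma=\Gamma_{J(V)}$.''
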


Therefore, when 
the preshared state $\rho_{AB}$
is a pure state $\vert \Psi\rangle\langle \Psi\vert $ as in Theorem~\ref{pro1},
the multiplicity-free condition holds,
and $\vert \hat{G}'\vert$ is 2 or 3,
the quantum memory ${\cal H}_B$ can be replaced by a classical memory.
However, 
when $\vert\hat{G}'\vert>3$ there is a pure state $\vert \Psi\rangle\langle \Psi\vert $ even with 
the multiplicity-free condition such that 
the quantum memory ${\cal H}_B$ enhances
the communication speed.

\begin{proofof}{Theorem~\ref{pro1}} 
Corollary \ref{Cor1} shows the equivalence between 
Conditions (F1) and (F2). 
Since the application of a unitary operator to a basis
does not change the Gram matrix,
Conditions (F2) and (F3) are equivalent.

Next, we show the equivalence between (F3) and (F4).
Since the Gram matrix $J(U)$ can be written by
\begin{align}
    J(U)_{\lambda\eta}=\sum_{k=1}^{d_B} P_K(k) e^{i\theta_{\lambda,k}-i\theta_{\eta,k}},
    \Label{eq:gram matrix target}
\end{align}
it is written as
\begin{align}
 J(U) = l\sum_{k=1}^{d_B} P_K(k) \ket{w_k}\bra{w_k},
\end{align}
where 
we define $\ket{w_k}=\frac{1}{\sqrt{l}}\sum_{\lambda=1}^{l} e^{i\theta_{\lambda,k}}\ket{\lambda}$
for every $k$.
We further note that $\ket{w_k}$ can be written by $\ket{w_k}=Z_k \ket{+}$, 
This fact implies
\begin{align}
    \frac{1}{l} J(U) = \sum_{k=1}^{d_B} P_K(k) Z_k \ket{+}\bra{+} Z_k^\dagger,
\end{align}
which shows the equivalence between (F3) and (F4).

Next, we show the equivalence between (F4) and (F5).
We assume Condition (F5). 
Then, we have
\begin{align}
 &\frac{1}{l} J(V) =J(V)\odot \ket{+}\bra{+}\nonumber \\
 =& \Gamma_{J(V)}(\ket{+}\bra{+})=
  \sum_{k=1}^{d_B} P_K(k) Z_k \ket{+}\bra{+} Z_k^\dagger,
\end{align}
which implies Condition (F4).

We assume Condition (F4). 
Then, the GIO $\Gamma_{J(V)}$ satisfies
\begin{align}
& \Gamma_{J(V)}(\rho)=J(V) \odot \rho
=\sum_k P_K(k) Z_k \ket{+}\bra{+} Z_k^\dagger \odot \rho \nonumber \\
=&\sum_k P_K(k) \Big(Z_k \ket{+}\bra{+} Z_k^\dagger \odot \rho\Big)
=\sum_k P_K(k) Z_k \rho Z_k^\dagger ,
\end{align}
which implies Condition (F5).
\end{proofof}

\subsection{Construction of q-memory useful pure state}\Label{S4-3}
We provide a systematic method to construct a q-memory useful pure state by using the results of 
GIO.
Since an arbitrary $\frac{1}{l} J({\bf v})$ can be obtained by applying some genuinely incoherent operation to $\ket{+}_l$, 
our problem is reduced to finding whether such an operation in $\cO_{\rm GIO}$ can be implemented by a probabilistic incoherent unitary. 
The following result, which gives the extremality condition for $\cO_{\rm GIO}$, is useful to present examples for which quantum memory is useful for all Bob's bases.

\begin{lemma}[\protect{\cite[Proof of Theorem~21]{deVicente}}]\Label{lem:GIO extremal condition}
 Let $\Gamma\in\cO_{\rm GIO}$ be represented by a set $\{K_t\}_{t=1}^{r'}$ of Kraus operators with $r'>1$. Then, $\Gamma$ 
 cannot be realized by a probabilistic incoherent unitary if 
 ${r'}^2$ operators $\{K^\dagger_t K_s\}_{t,s=1}^{r'}$ are linearly independent.
\end{lemma}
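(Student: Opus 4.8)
The plan is to reduce the claim to Choi's characterization of the extreme points of the convex set of quantum channels. First I would observe that the hypothesis already forces $\{K_t\}_{t=1}^{r'}$ to be a \emph{minimal} Kraus representation of $\Gamma$: if $\sum_t c_t K_t = 0$, then for each fixed $s$ one has $\sum_t \overline{c_t}\, K_t^\dagger K_s = \bigl(\sum_t c_t K_t\bigr)^\dagger K_s = 0$, and the assumed linear independence of $\{K_t^\dagger K_s\}_{t,s=1}^{r'}$ (in particular of the subfamily with that $s$ fixed and $t$ varying) yields $c_t = 0$ for all $t$. Hence $r'$ is exactly the Choi rank of $\Gamma$.

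Next I would invoke Choi's theorem: a channel with minimal Kraus operators $\{K_t\}_{t=1}^{r'}$ is an extreme point of the set of all CPTP maps if and only if the $r'^2$ operators $\{K_t^\dagger K_s\}_{t,s=1}^{r'}$ are linearly independent. Under our hypothesis this is precisely the case, so $\Gamma$ is extremal among all channels.

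Finally I would argue by contradiction. Suppose $\Gamma(\rho) = \sum_k p_k\, Z_k \rho Z_k^\dagger$ with $p_k > 0$, $\sum_k p_k = 1$, and each $Z_k$ an incoherent unitary. Each map $\rho \mapsto Z_k \rho Z_k^\dagger$ is itself a CPTP map, so this display writes $\Gamma$ as a convex combination of channels; extremality of $\Gamma$ forces every summand (those with $p_k > 0$, and at least one exists) to equal $\Gamma$, so $\Gamma$ is a unitary channel and therefore has Choi rank $1$. This contradicts $r' > 1$, so no such representation exists.

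The argument is short and I do not anticipate a genuine obstacle. The two points that need care are (i) the bookkeeping that linear independence of $\{K_t^\dagger K_s\}$ propagates to linear independence of $\{K_t\}$, so that $r'$ really is the Choi rank, and (ii) noting that the incoherence of the $Z_k$ is never used---any unitary summands would do---so the statement is in fact a consequence of $\Gamma$ being Choi-extremal; the incoherence is only exploited afterwards, when one exhibits explicit GIO channels satisfying this criterion (e.g.\ to obtain q-memory useful pure states for $l \ge 4$).
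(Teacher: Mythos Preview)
Your argument is correct. The paper does not supply its own proof of this lemma; it is quoted from the proof of Theorem~21 in the cited reference, where the reasoning is exactly the one you give: Choi's criterion identifies the linear-independence hypothesis with extremality of $\Gamma$ among all CPTP maps, and an extremal channel of Choi rank $r'>1$ cannot be a mixture of unitary channels. Your remark that the incoherence of the $Z_k$ is irrelevant to this particular step is also in line with the source.
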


Kraus operators of a channel $\Gamma_{J(V)}\in\cO_{\rm GIO}$
are given as follows.
\begin{lemma}\Label{lem:GIO Kraus2}
Given $l$ vectors $V :=(\ket{v_1},\dots,\ket{v_l})$ in ${\cal H}_B$
with the form $\ket{v_j}=\sum_{k=1}^{d_B} v_{j,k}\ket{\base{k}}$,
the operators $K_k={\rm diag}(\ket{\tilde a_k})$ for vectors 
$\ket{\tilde a_k}=\sum_{j=1}^lv_{j,k} \ket{j}$ with $k=1, \ldots, d_B$ 
form
Kraus operators $\{K_t\}_{t=1}^{d_B}$ of a channel $\Gamma_{J(V)}\in\cO_{\rm GIO}$. 
\end{lemma}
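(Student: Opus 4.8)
\textbf{Proof plan for Lemma~\ref{lem:GIO Kraus2}.}
The plan is to verify two things: first, that $\{K_k\}_{k=1}^{d_B}$ is a legitimate set of Kraus operators, i.e.\ $\sum_k K_k^\dagger K_k = I$ (complete positivity of the resulting operator-sum map being automatic); and second, that the induced channel $\rho\mapsto\sum_k K_k\rho K_k^\dagger$ coincides with $\Gamma_{J(V)}$, which has already been observed to lie in $\cO_{\rm GIO}$ since $J(V)=V^\dagger V$ is positive semidefinite with unit diagonal. The key preliminary observation is that $K_k=\mathrm{diag}(\ket{\tilde a_k})$ is just the diagonal matrix on the $l$-dimensional label space whose $j$-th diagonal entry is $v_{j,k}=\langle\base{k}|v_j\rangle$.

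For the first point, I would compute $K_k^\dagger K_k=\sum_{j=1}^l |v_{j,k}|^2\,\ketbra{j}{j}$, whence $\sum_{k=1}^{d_B}K_k^\dagger K_k=\sum_{j=1}^l\bigl(\sum_{k=1}^{d_B}|v_{j,k}|^2\bigr)\ketbra{j}{j}=\sum_{j=1}^l \|\ket{v_j}\|^2\,\ketbra{j}{j}$. This equals $I$ precisely because each $\ket{v_j}$ is normalized (the $\ket{v_\lambda}$ are the quantum states appearing in the decomposition of $\ket{\Psi}$), so $\rho\mapsto\sum_k K_k\rho K_k^\dagger$ is indeed CPTP.

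For the second point, I would compute matrix elements. Since each $K_k$ is diagonal, $\bra{m}\bigl(\sum_k K_k\rho K_k^\dagger\bigr)\ket{n}=\bigl(\sum_{k=1}^{d_B}v_{m,k}\overline{v_{n,k}}\bigr)\rho_{mn}$. Using $v_{j,k}=\langle\base{k}|v_j\rangle$ together with the completeness relation $\sum_k\ketbra{\base{k}}{\base{k}}=I_B$, the scalar prefactor equals $\sum_k\langle v_n|\base{k}\rangle\langle\base{k}|v_m\rangle=\langle v_n|v_m\rangle$, i.e.\ exactly the relevant entry of the Gram matrix $J(V)$. Hence $\sum_k K_k\rho K_k^\dagger=J(V)\odot\rho=\Gamma_{J(V)}(\rho)$ for all $\rho$, which is the claim; that $\Gamma_{J(V)}\in\cO_{\rm GIO}$ was already recorded.

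There is no deep obstacle: the lemma is a direct verification. The only points needing care are (a) remembering that trace preservation hinges on the $\ket{v_j}$ being unit vectors, and (b) keeping the complex-conjugation bookkeeping consistent between the convention fixed earlier for $J(V)$ (equivalently, the identification of $\langle v_n|v_m\rangle$ with the appropriate entry of $J(V)$) and the definition of the Hadamard product $\odot$, so that the transpose ambiguity is resolved compatibly with the normalization conventions of Section~\ref{S4-2}.
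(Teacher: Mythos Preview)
Your proposal is correct and follows essentially the same direct-verification approach as the paper. The only cosmetic difference is that the paper first computes $l\sum_t K_t\ketbra{+}{+}K_t^\dagger=\sum_t\ketbra{\tilde a_t}{\tilde a_t}=J(V)$ and then passes to general $\rho$ via the Hadamard-product identity, whereas you compute the matrix elements $(\sum_k K_k\rho K_k^\dagger)_{mn}$ directly; your argument is in fact slightly more complete, since you explicitly verify trace preservation (which the paper's proof leaves implicit).
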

\begin{proof}
Since the relation
\bal
l\sum_{t=1}^{d_B} K_t\ketbra{+}{+}K_t^\dagger = 
\sum_{t=1}^{d_B} \ketbra{\tilde a_t}{\tilde a_t} =
J(V)
\eal
holds, we have
\bal
\sum_{t=1}^{d_B} K_t \rho K_t^\dagger 
= \frac{1}{l}J(V) \odot\rho = \Gamma_{J(V)}(\rho).
\eal
\end{proof}

The above lemma shows that 
Kraus operators of any genuinely incoherent operation $\Gamma_{J(V)}\in\cO_{\rm GIO}$ are given as
diagonal operators.
More generally, the following lemma is known.
\begin{lemma}[\protect{\cite[Theorem~2]{deVicente}}]\Label{lem:GIO Kraus}
    A channel is GIO if and only if all 
    Kraus operators are diagonal.
\end{lemma}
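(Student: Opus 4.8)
The plan is to prove the two implications separately, with the ``only if'' direction resting on the fixed‑point characterization of $\cO_{\rm GIO}$ together with a rank argument, and the ``if'' direction being a one‑line computation of matrix elements. A useful preliminary remark is that any two Kraus representations of a given channel differ by an isometry (after padding with zero operators), and since every linear combination of operators diagonal in the preferred basis $\{\ket{\lambda}\}$ is again diagonal, the property ``all Kraus operators are diagonal'' holds in one Kraus representation of $\Gamma$ iff it holds in every Kraus representation. So it is enough to exhibit a single all‑diagonal decomposition, and it is enough to prove diagonality for one arbitrary fixed representation.

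For the ``if'' direction, take a Kraus representation with each $K_t=\sum_\lambda (k_t)_\lambda\ketbra{\lambda}{\lambda}$ diagonal and read off matrix elements: $[\Gamma(\rho)]_{\lambda\eta}=\big(\sum_t (k_t)_\lambda\overline{(k_t)_\eta}\big)\rho_{\lambda\eta}$, so $\Gamma(\rho)=A\odot\rho$ with $A_{\lambda\eta}:=\sum_t (k_t)_\lambda\overline{(k_t)_\eta}$. This $A$ is a Gram matrix, hence positive semidefinite, and trace preservation $\sum_t K_t^\dagger K_t=I$ forces $A_{\lambda\lambda}=\sum_t|(k_t)_\lambda|^2=1$ for all $\lambda$; by the Hadamard‑product characterization of $\cO_{\rm GIO}$ recalled above, $\Gamma\in\cO_{\rm GIO}$. (Equivalently and directly: for an incoherent $\sigma$, $A\odot\sigma$ has the same diagonal as $\sigma$, because $A_{\lambda\lambda}=1$, and vanishing off‑diagonal part, so $\Gamma(\sigma)=\sigma$.)

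For the ``only if'' direction, suppose $\Gamma\in\cO_{\rm GIO}$ and fix any Kraus representation $\{K_t\}$. Evaluating $\Gamma$ on the incoherent state $\ketbra{\lambda}{\lambda}$ and using $\Gamma(\ketbra{\lambda}{\lambda})=\ketbra{\lambda}{\lambda}$ gives $\sum_t (K_t\ket{\lambda})(K_t\ket{\lambda})^\dagger=\ketbra{\lambda}{\lambda}$. For positive operators summing to a rank‑one projector the support of each summand lies in $\mathrm{span}\{\ket{\lambda}\}$ (if $\ket{\phi}\perp\ket{\lambda}$ then $\sum_t|\braket{\phi}{K_t\lambda}|^2=0$, forcing $K_t\ket{\lambda}\perp\ket{\phi}$), so $K_t\ket{\lambda}\in\mathrm{span}\{\ket{\lambda}\}$ for every $t$. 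Since $\lambda$ was arbitrary, each $K_t$ maps every basis vector to a multiple of itself, i.e.\ $K_t$ is diagonal; by the preliminary remark this then holds in every Kraus representation. (Alternatively, starting from $\Gamma(\rho)=A\odot\rho$ with $A\succeq 0$, $A_{\lambda\lambda}=1$, one can just take a Gram decomposition $A=\sum_t\ketbra{a_t}{a_t}$ and set $K_t:=\sum_\lambda(a_t)_\lambda\ketbra{\lambda}{\lambda}$; a one‑line check gives $\sum_t K_t\rho K_t^\dagger=A\odot\rho$ and $\sum_t K_t^\dagger K_t=\sum_\lambda A_{\lambda\lambda}\ketbra{\lambda}{\lambda}=I$. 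This construction contains Lemma~\ref{lem:GIO Kraus2} as the special case $A=J(V)$.)

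There is essentially no substantive obstacle: the whole argument is bookkeeping with matrix entries, the Gram/Hadamard structure, and the elementary support inequality for sums of positive operators. The only point that deserves a sentence of care is the ``some representation vs.\ every representation'' quantification in the statement, which is handled by the observation that Kraus representations are related by isometries and that diagonality is preserved under linear combinations; once that is noted, both readings of the lemma coincide and the proof above covers them simultaneously.
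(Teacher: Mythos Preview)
The paper does not supply its own proof of this lemma; it is quoted verbatim as \cite[Theorem~2]{deVicente} and used as an external input. Your argument is correct and is essentially the standard one from that reference: the ``only if'' direction via $\sum_t K_t\ketbra{\lambda}{\lambda}K_t^\dagger=\ketbra{\lambda}{\lambda}$ forcing $K_t\ket{\lambda}\propto\ket{\lambda}$, and the ``if'' direction via the Hadamard-product form $\Gamma(\rho)=A\odot\rho$ with $A$ a Gram matrix having unit diagonal. Your remark that diagonality is Kraus-representation independent (because representations differ by an isometric mixing and diagonal matrices are closed under linear combinations) is a useful clarification that the cited statement leaves implicit.
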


Combining Lemmas~\ref{lem:GIO extremal condition} and \ref{lem:GIO Kraus2} with 
the statement (F5) implies the following. 
When 
the $d_B^2$ vectors $\{ ( \overline{v_{j,t}}  v_{j,s} )_{j=1}^l \}_{t,s=1}^{d_B}$ are linearly independent,
the genuinely incoherent operation $\Gamma_{J(V)}\in\cO_{\rm GIO}$
cannot be realized by a probabilistic incoherent unitary.
Then, a state $\ket{\Psi}$ constructed by the set $\{\ket{v_j}\}_{j=1}^l$ serves as a resource state, for which quantum memory is useful for all Bob's bases.
To satisfy this condition, ${d_B}^2$ needs to be smaller than $l$.

A protocol to find such a resource state is as follows. 
\begin{description}
    \item[1)] Given $l \leq \vert \hat G \vert$, we choose ${d_B}$ such that $1<{d_B}^2\leq l$.
    \item[2)] Pick $l$ states $\ket{v_j}=\sum_{k=1}^{d_B} v_{j,k}\ket{\base{k}}\in\mathbb{C}^{d_B}$ for $j=1,\dots,l$. 
    \item[3)] \Label{item:linear independence} 
    Check whether 
    the ${d_B}^2$ vectors $\{ (  \overline{v_{j,t}}  v_{j,s})_{j=1}^l \}_{t,s=1}^{d_B}$ are linearly independent. If not, go back to the previous step and try another set of $\{\ket{v_j}\}_j$.
    \item[4)] Choose an arbitrary subset of irreps $\hat G'\subset \hat G$ such that $\vert \hat G' \vert =l$. Choose an arbitrary state $\ket{\psi_\lambda}\in\mathcal{H}_\lambda$ for each $\lambda\in\hat G'$ and an arbitrary probability distribution $P_\Lambda(\lambda)$ over $\lambda\in\hat G'$ with $P_\Lambda(\lambda)>0$\ $\forall\lambda\in\hat G'$. 
    \item[5)]  Define $\ket{\Psi}
    =\sum_{j=1}^l\sqrt{P_\Lambda(\lambda_j)}\ket{\psi_{\lambda_j}}\otimes\ket{v_j}
    =\sum_{j=1}^l\sum_{k=1}^{d_B}\sqrt{P_\Lambda(\lambda_j)} v_{j,k} \ket{\psi_{\lambda_j}}\otimes\ket{\base{k}}
    $, where $\lambda_j$ refers to the $j$\,th irrep in $\hat G'$. 
\end{description}

The final state $\ket{\Psi}$ is a state for which quantum memory is useful for all Bob's bases.
For $1<{d_B}^2\leq l$, a randomly chosen $\{\ket{v_j}\}_{j=1}^l$ usually satisfies the linear independence condition in Step~3.  
As an analytical example, we find an instance for 
$\{\ket{v_j}\}_{j=1}^l$
for $l\geq 4$ and ${d_B}=2$ discussed in Ref.~\cite{deVicente}, which is defined as
\begin{align}
v_{j,1}
&:= \left\{
\begin{array}{ll}
\frac{1}{j} & \hbox{ when } j\leq 4\\
1 &  \hbox{ when } j> 4
\end{array}
\right.
\\
v_{j,2}
&:= \left\{
\begin{array}{ll}
i^j\sqrt{1-\frac{1}{j^2}} & \hbox{ when } j\leq 4\\
0 &  \hbox{ when } j> 4.
\end{array}
\right.
\end{align}
One can then check that four vectors 
$\{ ( \overline{v_{j,t}}  v_{j,s} )_{j=1}^l \}_{t,s=1}^2$ are linearly independent.
We then prepare 
$\{\ket{v_j}\}_{j=1}^l$ as
$v_{j,1} \ket{\base{1}}+v_{j,2} \ket{\base{2}}$.

The states
constructed by the above procedure Steps~1--5 provide situations where Bob wants to hold local quantum memory and make a collective measurement on his side before communication.
Notice that, as shown in Ref.~\cite{hayashiWang}, 
the capacity $C_{\rm c}(\rho_{AB})$ for the pure state case 
can be achieved by applying collective measurement across $n$ Bob's local systems of the preshared state.


\section{Case with maximally entangled state}\Label{S5B}
\subsection{Commutative group} 
\Label{S5B-1}
Similarly to Section \ref{S3-3},
we assume that the group $G$ is commutative.
For simplicity, we assume the multiplicity-free condition to the space 
${\cal H}_A$.
We use the same notations as the ones used in Section \ref{S3-3}.
We assume that
the preshared entangled state $\vert \Psi \rangle$ is 
a maximally entangled state $\vert \Phi \rangle$.
Thus, due to the relations \eqref{VG1} and \eqref{VG2},
the condition \eqref{eq:unitary condition} in Corollary \ref{Cor1}
holds.
In this case,
the condition \eqref{eq:unitary condition} in Corollary \ref{Cor1}
shows that
the maximally entangled state $\vert \Phi \rangle$
is q-memory useless
in a simpler way than Condition (A2) does.
This is also a special case of the examples presented in Section \ref{S3-2}.

Next, we proceed to the case when 
the multiplicity-free condition does not hold.
Using an orthogonal basis $\{\vert e_{j,\lambda}\rangle\}_{j=1}^{l(\lambda)}$ where $l(\lambda)\coloneqq \dim\mathcal{M}_\lambda$ and an orthogonal basis $\{\vert v_{j}\rangle\}_{j=1}^{r}$ with
$r=\sum_{\lambda=1}^l l(\lambda)$,
 the maximally entangled state $\vert \Phi \rangle$ is written as
$\frac{1}{\sqrt{r}} 
\sum_{\lambda=1}^l 
\vert\psi_\lambda\rangle 
\sum_{j=1}^{l(\lambda)} 
\vert e_{j,\lambda}\rangle
\vert v_{j+\sum_{\lambda'=1}^\lambda l(\lambda') }\rangle$.
We choose $\vert e_{B,k}\rangle$ as
\begin{align}
\vert e_{B,k}\rangle
:= \frac{1}{\sqrt{r}}\sum_{\lambda=1}^{r} 
e^{2\pi k j i/r }
\vert v_j \rangle.
\end{align}
Then, we have
\begin{align}
\vert \Phi \rangle
=&\frac{1}{\sqrt{r}} 
\sum_{\lambda=1}^l 
\vert\psi_\lambda\rangle 
\sum_{j=1}^{l(\lambda)} 
\vert e_{j,\lambda}\rangle
\frac{1}{\sqrt{r}}\sum_{k=1}^{r} e^{-2\pi k (j+\sum_{\lambda'=1}^\lambda l(\lambda')) i/r }
\vert e_{B,k}\rangle \nonumber \\
=&
\sum_{k=1}^{r} 
\frac{1}{\sqrt{r}} 
\sum_{\lambda=1}^l 
\frac{\sqrt{l(\lambda)}}{\sqrt{r}} 
e^{-2\pi k \sum_{\lambda'=1}^\lambda l(\lambda') i/r }
\vert\psi_\lambda\rangle 
\Big(\frac{1}{\sqrt{l(\lambda)}}
\sum_{j=1}^{l(\lambda)} 
e^{-2\pi k j i/r }
\vert e_{j,\lambda}\rangle \Big)
\vert e_{B,k}\rangle .
\end{align}
Hence, by choosing 
$\vert\psi_\lambda\rangle \Big(\frac{1}{\sqrt{l(\lambda)}}
\sum_{j=1}^{l(\lambda)} 
e^{-2\pi k j i/r }
\vert e_{j,\lambda}\rangle \Big)$
as $ \vert \psi_{\lambda,k} \rangle$,  
Condition (C1) holds in Theorem \ref{TH1P}.
That is, the quantum memory is not needed even in this case.

In particular, when the set $\hat{G}'$ is the whole set $\hat{G}$ of irreducible representations of $G$,
the entropy of
$\sum_{g \in G}\frac{1}{\vert G \vert} 
U_g  \vert \Phi\rangle \langle \Phi\vert U_g^\dagger$
is $\log \vert G \vert$.
This means that
the action of $G$ can be distinguished perfectly
even when Bob measures his memory in advance.
In other words,
the states $\{    U_g  \vert \Phi\rangle \}_{g \in G}$
can be distinguished perfectly
by one-way local measurements.
For example,
given Weyl-Heisenberg representation of 
$\mathbb{Z}_d \times \mathbb{Z}_d$,
the action of the commutative subgroup of 
$\mathbb{Z}_d \times \mathbb{Z}_d$
can be distinguished perfectly
by one-way local measurements.
As another example, 
given Weyl-Heisenberg representation of 
$\mathbb{F}_p^n \times \mathbb{F}_p^n$,
the action of the commutative subgroup of 
$\mathbb{F}_p^n \times \mathbb{F}_p^n$
can be distinguished perfectly by one-way local measurements.
In fact, this kind of state discrimination has 
been discussed in the context of local discrimination of generalized Bell states
in 
\cite{PhysRevLett.85.4972,
PhysRevLett.87.277902,
10.1063/1.1914731,
PhysRevA.74.032108,
PhysRevLett.92.177905,
PhysRevA.103.052429,
PhysRevA.105.032455,Yang_2022}.

While the classification of a commutative subgroup
in $\mathbb{Z}_d \times \mathbb{Z}_d$ is not easy \cite{PhysRevA.103.052429,
PhysRevA.105.032455,Yang_2022},
a commutative subgroup in $\mathbb{F}_p^n \times \mathbb{F}_p^n$
can be easily classified as follows.
Commutativity can be characterized by 
the orthogonality under the symplectic inner product.
The set of $m$-dimensional orthogonal subspace
can be considered as 
an orthogonal version of 
Grassmannian over finite fields, whose classification 
is well known \cite[Proposition 1.7.2]{Stanley}.
For the identification of 
$m$-dimensional orthogonal subspace, 
it is sufficient to choose $m$ 
independent commutative vectors in 
$\mathbb{F}_p^n \times \mathbb{F}_p^n$.
The choice of the first non-zero vector has $p^{2n}-1$ cases.
The choice of the second non-zero vector has $p^{2(n-1)}-1$ cases
due to the commutativity with the first vector.
The commutativity with the first vector is equivalent to 
the orthogonal property to the first vector
for the symplectic inner product.
Therefore, 
in the case of $m$-dimensional subgroups, 
we have 
$(p^{2n}-1)(p^{2(n-1)}-1)\cdots (p^{2(n-m+1)}-1)$ cases.
Since 
$(p^m-1)(p^{m-1}-1)\cdots (p-1)$ choices of 
$m$ independent commutative vectors 
correspond to the same $m$-dimensional subgroup, 
we have 
$\frac{(p^{2n}-1)(p^{2(n-1)}-1)\cdots (p^{2(n-m+1)}-1)}{(p^m-1)(p^{m-1}-1)\cdots (p-1)}$
commutative 
$m$-dimensional subgroup in total.
When $m=n$, this number is simplified to
$(p^{n}+1)(p^{n-1}+1)\cdots (p+1)$.

\subsection{General group}\Label{S5B-2}
Next, we observe how the above fact can be generalized to a general group
$G$.
When $G$ is not a commutative group,
the irreducible representation is not one-dimensional.
Hence, this case cannot be considered as a special case of the examples presented in Section \ref{S3-2}.

We have the following lemma
for a general group with respect to
the relation between Condition (C1) and
a maximally entangled state $\vert\Phi\rangle$
when the set $\hat{G}'$ of irreducible representations appearing in 
${\cal H}_A$ is given as $\{1, \ldots, l\}$.

\begin{lemma}
A maximally entangled state 
$\vert\Phi\rangle$ satisfies
Condition (C1) in Theorem \ref{TH1P}
if and only if 
the relation $\dim {\cal M}_{\lambda} \ge 
\dim {\cal H}_{\lambda}$ holds for $\lambda \in \hat{G}'$.
\end{lemma}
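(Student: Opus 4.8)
The plan is to analyze Condition (C1) from Theorem~\ref{TH1P} applied to a maximally entangled state and show it forces a dimension inequality on each irreducible component. First I would write the maximally entangled state on $\mathcal H_A\otimes\mathcal H_B$ in terms of the decomposition $\mathcal H_A=\bigoplus_{\lambda\in\hat G'}\mathcal H_\lambda\otimes\mathcal M_\lambda$. Since $\dim\mathcal H_B\geq\dim\mathcal H_A$ is required for a maximally entangled state with Schmidt rank $\dim\mathcal H_A=\sum_\lambda d_\lambda\dim\mathcal M_\lambda$, I would fix an orthonormal basis of $\mathcal H_A$ adapted to the decomposition and take $\ket\Phi$ to have equal Schmidt coefficients $1/\sqrt{\dim\mathcal H_A}$ against an orthonormal set in $\mathcal H_B$. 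The key point is that Condition (C1) demands a Schmidt-like reorganization of $\ket\Phi$ into the form $\sum_k\sqrt{P_K(k)}\sum_\lambda\sqrt{P_\Lambda(\lambda)}e^{i\theta_{\lambda,k}}\ket{\psi_{\lambda,k}}\otimes\ket{\base k}$ with $\ket{\psi_{\lambda,k}}\in\mathcal H_\lambda\otimes\mathcal M_\lambda$ satisfying $\tr_{\mathcal M_\lambda}\dm{\psi_{\lambda,k}}$ independent of $k$, and $P_K,P_\Lambda$ independent product distributions.

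For the ``if'' direction (assuming $\dim\mathcal M_\lambda\geq\dim\mathcal H_\lambda$ for all $\lambda$), I would give an explicit construction generalizing the one already presented in Section~\ref{S5B-1} for the multiplicity-free case: inside each $\mathcal H_\lambda\otimes\mathcal M_\lambda$, use the fact that $\dim\mathcal M_\lambda\geq\dim\mathcal H_\lambda$ to embed a full-Schmidt-rank state on $\mathcal H_\lambda$ for each Fourier index $k$, then stitch these together over $\lambda$ using roots of unity (as in $\ket{e_{B,k}}$) to produce an orthonormal basis $\{\ket{\base k}\}$ of $\mathcal H_B$ together with phases $\theta_{\lambda,k}$. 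The condition $\tr_{\mathcal M_\lambda}\dm{\psi_{\lambda,k}}$ being $k$-independent will hold because each $\ket{\psi_{\lambda,k}}$ will have the same (maximally mixed, or at least fixed) reduced state on $\mathcal H_\lambda$ by construction; this is exactly where $\dim\mathcal M_\lambda\geq\dim\mathcal H_\lambda$ is needed, since otherwise one cannot place a fixed full-rank reduced state on $\mathcal H_\lambda$ inside $\mathcal H_\lambda\otimes\mathcal M_\lambda$ for a pure vector whose $\mathcal M_\lambda$-components must be mutually consistent across the $d_B$ values of $k$.

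For the ``only if'' direction, I would argue by counting. Suppose Condition (C1) holds. Because $\ket\Phi$ is maximally entangled, the reduced state $\rho_A=\tr_B\dm\Phi$ is the maximally mixed state $\frac{1}{\dim\mathcal H_A}I_{\mathcal H_A}$, hence $\tr_{\mathcal M_\lambda}\tr_{\mathcal H_\lambda}$ of each block is maximally mixed and in particular $\tr_{\mathcal M_\lambda}\dm{\psi_{\lambda,k}}$ must be proportional to $I_{\mathcal H_\lambda}$, so it has rank $d_\lambda=\dim\mathcal H_\lambda$. But $\ket{\psi_{\lambda,k}}$ is a pure state on $\mathcal H_\lambda\otimes\mathcal M_\lambda$, so its reduced state on $\mathcal H_\lambda$ has rank at most $\dim\mathcal M_\lambda$; therefore $\dim\mathcal H_\lambda\leq\dim\mathcal M_\lambda$. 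I would also need to justify that $P_\Lambda(\lambda)>0$ for all $\lambda\in\hat G'$ (otherwise that $\lambda$ would not actually appear), which follows since a maximally entangled state has full-rank reduced state and thus charges every block, forcing $P_\Lambda(\lambda)=\frac{d_\lambda\dim\mathcal M_\lambda}{\dim\mathcal H_A}>0$.

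The main obstacle I anticipate is the bookkeeping in the ``if'' direction: one must simultaneously choose the basis $\{\ket{\base k}\}$ of $\mathcal H_B$, the product distributions $P_K$ and $P_\Lambda$, and the phases $\theta_{\lambda,k}$ so that \emph{all} of \eqref{eq:equality condition 1 pure} and \eqref{AC8} hold while reproducing the \emph{given} maximally entangled state (not merely some maximally entangled state) up to the freedom in choosing $\ket\Phi$. I would handle this by first noting that any two maximally entangled states on $\mathcal H_A\otimes\mathcal H_B$ differ by a local unitary on $\mathcal H_B$, and that applying such a unitary to $\mathcal H_B$ only relabels the basis $\{\ket{\base k}\}$, so it suffices to verify Condition (C1) for one convenient representative $\ket\Phi$ — which is exactly the explicit $r$-term construction above, with $P_K$ uniform on the $r=\sum_\lambda\dim\mathcal M_\lambda$-dimensional Fourier basis and $P_\Lambda(\lambda)=d_\lambda\dim\mathcal M_\lambda/\dim\mathcal H_A$. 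The ``only if'' direction, being a rank count, should be short once the maximal mixedness of $\rho_A$ is invoked.
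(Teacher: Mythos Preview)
Your proposal is correct and follows essentially the same approach as the paper: the ``only if'' part uses the rank argument via maximal mixedness of $\rho_A$ (so each $\tr_{\mathcal M_\lambda}\dm{\psi_{\lambda,k}}$ is the completely mixed state on $\mathcal H_\lambda$, forcing $\dim\mathcal M_\lambda\ge d_\lambda$), and the ``if'' part first reduces by a local unitary on $\mathcal H_B$ to a single representative and then gives an explicit Fourier-type construction. The only caveat is that your formula $r=\sum_\lambda\dim\mathcal M_\lambda$ for the Fourier basis is the commutative specialization ($d_\lambda=1$); for general $G$ the paper's construction instead uses the larger index set $(k_1,k_1',\ldots,k_l,k_l',\lambda')$ together with Weyl-type operators $Z^{k_\lambda}\otimes X^{k'_\lambda}$ on each block $\mathcal H_\lambda\otimes\mathcal M_\lambda$, which is precisely the bookkeeping you anticipated as the main obstacle.
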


This lemma guarantees the following.
When 
the preshared entangled state $\vert\Psi\rangle$ is a maximally entangled state 
$\vert\Phi\rangle$
and 
the relation $\dim {\cal M}_{\lambda} \ge 
\dim {\cal H}_{\lambda}$ holds for $\lambda \in \hat{G}'$,
the states $\{    U_g  \vert \Phi\rangle \}_{g \in G}$
can be distinguished perfectly
by one-way local measurements.

\begin{proof}
The ``only if'' part.
We assume that a maximally entangled state 
$\vert\Phi\rangle$ satisfies the condition \eqref{AC8}.
Then, the condition \eqref{AC8} implies 
$\tr_{{\cal M}_{\lambda}} \vert \psi_{\lambda,k}\rangle\langle \psi_{\lambda,k}\vert  $ is 
$\tr_{{\cal M}_{\lambda}}  \rho_{A}$.
Hence, 
$\tr_{{\cal M}_{\lambda}} \vert \psi_{\lambda,k}\rangle\langle \psi_{\lambda,k}\vert  $ is the completely mixed state.
Thus,
$\dim {\cal M}_{\lambda} \ge 
\dim {\cal H}_{\lambda}$.

The ``if'' part.
We assume that
$\dim {\cal M}_{\lambda} \ge 
\dim {\cal H}_{\lambda}$ for $\lambda \in \hat{G}'$.
When two entangled states have the same reduced density matrix on ${\cal H}_A$,
these two entangled states can be converted to each other via a
local unitary on ${\cal H}_B$.
It is sufficient to show the existence of the choice of $\{\vert \base{k}\rangle \}$
and $\{\vert \psi_{\lambda,k}\rangle \}$
to satisfy Condition (C1).

We denote the dimension of ${\cal M}_\lambda$ by $d_\lambda'$.
We choose 
an orthogonal basis $\{ \vert e_{\lambda,A,j}\rangle\}_{j=0}^{d_\lambda-1}$
of ${\cal H}_\lambda$,
and 
an orthogonal basis $\{ \vert e_{\lambda,A,j'}'\rangle\}_{j'=0}^{d_\lambda'-1}$
of ${\cal M}_\lambda$.
We define the operators
$Z:=
\sum_{j=0}^{d_\lambda-1}
e^{2\pi ij/d_\lambda}
\vert e_{\lambda,A,j}\rangle \langle e_{\lambda,A,j}\vert$
on ${\cal H}_\lambda$,
and
$X:=
\sum_{j'=0}^{d_\lambda'-1}
\vert e_{\lambda,A,j'+1}'\rangle \langle e_{\lambda,A,j'}'\vert$
on ${\cal M}_\lambda$, where
modulo $d_\lambda'$ is applied in the definition of $X$.
For $(k_\lambda,k_\lambda') \in \{0, 1,\ldots,d_{\lambda}-1\}
\times \{0, 1,\ldots,d_{\lambda}'-1\}$,
we define the vectors $\vert \psi_{\lambda,k,k'}\rangle
:=Z^k \otimes X^{k'}
\frac{1}{\sqrt{d_\lambda}}\sum_{j=0}^{d_\lambda-1}
\vert e_{\lambda,A,j}\rangle
\vert e_{\lambda,A,j}'\rangle$ in 
${\cal H}_\lambda \otimes {\cal M}_\lambda$, which forms
an orthogonal basis of ${\cal H}_\lambda \otimes {\cal M}_\lambda$.

For $(k_\lambda,k_\lambda') \in \{0, 1,\ldots,d_{\lambda}-1\}
\times \{0, 1,\ldots,d_{\lambda}'-1\} $ with $\lambda \in 
\hat{G}'=\{1,2, \ldots, l\}$,
we define the vector
$\vert e_{B, k_1,k_1', \ldots, k_l,k_l'},\lambda\rangle$
in ${\cal H}_B$.
Then, we define the vector 
\begin{align}
& \sum_{k_1=0}^{d_1-1}
\sum_{k_1'=0}^{d_1'-1}
\cdots
\sum_{k_l=0}^{d_l-1}
\sum_{k_l'=0}^{d_l'-1}
\sum_{\lambda'=1}^l
\sqrt{\frac{1}{
d_1d_1' \cdots d_l d_l' l}} \nonumber \\
&\quad \cdot
\sum_{\lambda=1}^l
\sqrt{
\frac{d_{\lambda} d_{\lambda}'}{\sum_{\lambda''=1}^{l} d_{\lambda''} d_{\lambda''}'}}
e^{2\pi \lambda \lambda' i/l}
\vert \psi_{\lambda,k_\lambda,k_\lambda'}\rangle
\vert e_{B, k_1,k_1', \ldots, k_l,k_l'},\lambda'\rangle,
\Label{MNG}
\end{align}
which satisfies Condition (C1).

The reduced density matrix of 
\begin{align*}
\sum_{\lambda'=1}^l
\sqrt{\frac{1}{l}}
\sum_{\lambda=1}^l
\sqrt{
\frac{d_{\lambda} d_{\lambda}'}{\sum_{\lambda''=1}^{l} d_{\lambda''} d_{\lambda''}'}}
e^{2\pi \lambda \lambda' i/l}
\vert \psi_{\lambda,k_\lambda,k_\lambda'}\rangle
\vert e_{B, k_1,k_1', \ldots, k_l,k_l'},\lambda'\rangle.
\end{align*}
on ${\cal H}_A$ is
\begin{align}
\sum_{\lambda=1}^l
\frac{d_{\lambda} d_{\lambda}'}{\sum_{\lambda''=1}^{l} d_{\lambda''} d_{\lambda''}'}
\vert \psi_{\lambda,k_\lambda,k_\lambda'}\rangle
\langle \psi_{\lambda,k_\lambda,k_\lambda'}\vert.
\end{align}
Therefore, 
the reduced density matrix of the vector \eqref{MNG} 
on ${\cal H}_A$ is
\begin{align}
\sum_{\lambda=1}^l
\sum_{k_\lambda=0}^{d_\lambda-1}
\sum_{k_\lambda'=0}^{d_\lambda'-1}
\frac{1}{\sum_{\lambda''=1}^{l} d_{\lambda''} d_{\lambda''}'}
\vert \psi_{\lambda,k_\lambda,k_\lambda'}\rangle
\langle \psi_{\lambda,k_\lambda,k_\lambda'}\vert,
\end{align}
which is the completely mixed state on ${\cal H}_A$.
Therefore, a maximally entangled state 
$\vert\Phi\rangle$ 
has the form to satisfy Condition (C1).
Hence, we obtain the ``if'' part.
\end{proof}

\section{Implications to quantum illumination}\Label{S5}
As one of the settings of physical significance, 
quantum illumination investigates
whether a low-reflective target object is present by shooting light and collecting the signal.
The problem of target detection can be reduced to 
the channel discrimination problem to identify whether the given channel is 
a replacement channel that prepares the thermal state or
a thermal attenuator channel.  
In this scenario, it has been shown that entanglement in the input state can enhance the discrimination performance in symmetric and asymmetric channel discrimination settings~\cite{Lloyd2008enhanced,Tan2008quantum,dePalma}. 

Here, we consider a variant of quantum illumination, 
where instead of detecting the thermal noise, we  
are to detect the existence of a certain noise channel\footnote{Our setting can also be considered as a variant of quantum reading~\cite{Pirandola2011quantum} in the sense that the reflectivity of the target is not small.}. 
Namely, we consider the asymmetric channel discrimination task
where the null hypothesis is error-free, i.e., 
$\Gamma_1=\id$, and the alternative hypothesis is to have a group twirling channel as noise,  i.e., $\Gamma_2=\mathcal{G}$. 
One can then ask whether entanglement in a given input state $\psi$ helps the hypothesis testing, i.e.,  whether one would like to hold a local quantum memory to utilize the entanglement in $\psi$.

This problem is formulated as follows.
Let $\eta_{j}:=\Gamma_{j}\otimes\id(\psi)$ be the output states with an input state $\psi$
for $j=1,2$. 
In general, one is allowed to make a collective POVM measurement $\{E_n,I-E_n\}$ on multiple copies $\eta_{j}^{\otimes n}$ of output states. 
We define the type-I error by $\alpha_n(E):=\tr((I-E)\,\eta_1^{\otimes n})$ and type-II error by $\beta_n(E):=\tr(E\,\eta_2^{\otimes n})$.
In the asymmetric channel discrimination task, we aim to minimize the type-II error under the condition that type-I error is upper bounded by a constant $\epsilon$.
The Stein's lemma~\cite{hiai1991proper,ogawa2000strong} tells that for all $0<\epsilon<1$, the error exponent of the type-II error is characterized by the relative entropy between two output states, i.e.,
\begin{align}
 &-\lim_{n\to\infty}\frac{1}{n}\log\inf_{0\leq E_n\leq I} \left\{\beta_n(E_n) : \alpha_n(E_n)\leq 1-\epsilon\right\} \nonumber \\
 =& D(\Gamma_1\otimes\id(\psi)\|\Gamma_2\otimes\id(\psi)).
 \Label{eq:stein}
\end{align}

We can address this question for the case of asymmetric channel discrimination by employing the framework established above. Since Stein's lemma assures that the performance of the hypothesis testing is characterized by the relative entropy as in \eqref{eq:stein}, entanglement in $\psi$ is helpful (one would like to retain local quantum memory) if and only if the equality in \eqref{XL1} with $\rho_{AB}=\psi$ holds.  
We can also carry over the characterization of the resource state in Theorem~\ref{pro1} and Corollary~\ref{cor:memory useless dimension} for the local memory to be useless.

As a typical example, we choose $G=\mathbb{Z}_d$ with a representation 
$U_g=\sum_{t=0}^{d-1}\ketbra{t+g\ {\rm mod}\ d}{t}$. 
This representation can be decomposed into $U_g = \bigoplus_{\lambda=0}^{d-1} U_{\lambda, g}$ where $U_{\lambda, g}:= e^{2\pi i g\lambda/d}\dm{+_\lambda}$ and $\ket{+_\lambda}:= \frac{1}{\sqrt{d}}\sum_{j=0}^{d-1} e^{2\pi ij \lambda/d}\ket{j}$.
Then, Theorem~\ref{pro1}, as well as a consequence from Sec.~\ref{S5B-1}, implies that the maximally entangled input $\ket{\psi}=\frac{1}{\sqrt{d}}\sum_{t=0}^{d-1}\ket{tt}=\frac{1}{\sqrt{d}}\sum_{\lambda=0}^{d-1}\ket{+_\lambda}\ket{+_{d-\lambda-1}}$ does not require the local quantum memory to achieve the optimal performance for the noise detection task. 
Indeed, the Gram matrix $J({\bf v})$ in Theorem~\ref{pro1} is the $d\times d$ identity matrix, as $\{\ket{+_\lambda}\}_{\lambda}$ constructs an orthonormal basis.  
Therefore, by taking $P_K(k) = 1/d$ and $\theta_{\lambda,k}=2\pi \lambda k/d$ in (F4) of Theorem~\ref{pro1}, we can check that \eqref{eq:probabilistic incoherent gram} is indeed satisfied. 
This makes a stark contrast to the case of the original quantum illumination, in which the two-mode squeezed state, which is the infinite-dimensional correspondence of the maximally entangled state, serves as the optimal input with the help of quantum memory~\cite{dePalma}.

\section{Discussion}\label{S7}
We have introduced the new concept of q-memory uselessness and usefulness of a preshared bipartite state in the dense-coding task based on a given group representation.
We have derived various conditions for q-memory uselessness for general mixed and pure states.
Using our general conditions, we have presented an example of a mixed entangled state that is q-memory useless.
In addition, we have revealed a notable relation between 
q-memory uselessness of pure states and the resource theory of coherence.
This relation is useful when the state is pure, the group is abelian, and the representation is multiplicity-free.
Under this condition, we have shown that 
any pure state is q-memory useless
when 
the given group representation consists of at most three irreducible representations.
We also have presented a systematic way to construct a pure q-memory useful state 
when the group representation consists of more than three irreducible representations.
Finally, we have discussed the relations between our framework and quantum illumination. We have considered a variant of quantum illumination that aims to distinguish the noiseless and the group twirling channels and employed our conditions to characterize the usefulness of quantum memory in enhancing the performance.

An interesting future direction is to find whether there exists a separable q-memory useful state.
The class of q-memory useless states represented by the form in \eqref{ZCXOT} is known to have zero quantum discord~\cite{modi2012classical}.  
This implies that q-memory useful separable states, if any, should have nonzero quantum discord.
Establishing the quantitative relation between quantum discord and q-memory usefulness may provide further insights into the role of quantum correlation in dense coding.

\backmatter

\bmhead{Acknowledgments}
M.H. is very grateful to Dr. Jiawei Wu for helpful discussions 
for the usefulness of quantum memory in private dense coding. 
We also thank an anonymous reviewer for suggesting a simpler proof of Proposition~\ref{LLA}.
R.T. acknowledges the support of the Lee Kuan Yew Postdoctoral Fellowship at Nanyang Technological University Singapore. 
M.H. is supported in part by
the National Natural Science Foundation of China (Grant
No. 62171212).

\bmhead{Data availability}
Data sharing is not applicable to this article as no datasets were generated
or analyzed during the current study.

\bmhead{Conflict of interest}
There are no competing interests.


\begin{appendices}

\section{Derivation from (C1) to (A2)}
We demonstrate how to derive from (C1) to (A2) in the pure state case.
That is, we check that a pure state $\rho_{AB}=\vert  \Psi\rangle\langle\Psi\vert  $ that satisfies \eqref{eq:equality condition 1 pure} and \eqref{AC8} indeed satisfies \eqref{eq:equality condition general} as follows.

Let $\ket{\psi_{\lambda,k}}=\sum_j \sqrt{c_{\lambda,k,j}}\ket{r_{\lambda, k, j}}_{\cal H_\lambda}\ket{m_{\lambda, k, j}}_{\mathcal{M}_\lambda}$ be a Schmidt decomposition of $\ket{\psi_{\lambda,k}}$, where $\left\{\ket{r_{\lambda, k ,i}}\right\}_i$ and $\left\{\ket{m_{\lambda, k, i}}\right\}_i$ are orthonormal bases on ${\cal H_\lambda}$ and $\mathcal{M}_\lambda$.
Then, the condition \eqref{AC8} ensures that  
\begin{align}
    \sum_j c_{\lambda,k,j} \dm{r_{\lambda, k, j}}= \sum_j c_{\lambda,k',j} \dm{r_{\lambda,k',j}},\ \forall k,k'.
\label{eq:condition reduced state}
\end{align}
Substituting this into the form  \eqref{eq:equality condition 1 pure} yields
\begin{align}
    \ket{\Psi} =& \sum_\lambda \sqrt{P_\Lambda(\lambda)} \sum_k \sqrt{P_{K}(k)}e^{i\theta_{\lambda,k}}\nonumber \\
&  \cdot  \sum_j \sqrt{c_{\lambda,k,j}}\ket{r_{\lambda, k, j}}_{\cal H_\lambda}
    \ket{m_{\lambda, k, j}}_{\mathcal{M}_\lambda}\ket{\base{k}}_B.
    \Label{eq:pure resource Schmidt}
\end{align}
Using this expression, we can write
\begin{align}
 {\cal B}(\rho_{AB}) 
    = \sum_{\lambda\lambda'k}\sqrt{P_\Lambda(\lambda)P_\Lambda(\lambda')} P_{K}(k)e^{i\theta_{\lambda,k}-i\theta_{\lambda',k}}\vert  \psi_{\lambda,k}\rangle\langle\psi_{\lambda',k}\vert  \otimes\vert  \base{k}\rangle\langle \base{k}\vert.
\end{align}

\begin{align}
    {\cal G}\circ{\cal B}(\rho_{AB}) 
    = \sum_{\lambda k}P_\Lambda(\lambda) P_{K}(k)\sum_j c_{\lambda,k,j} \frac{I}{d_{\cal H_\lambda}}\otimes \vert m_{\lambda, k, j}\rangle\langle m_{\lambda, k, j}\vert \otimes\vert \base{k}\rangle\langle \base{k}\vert 
\end{align}

\begin{align}
\sqrt{{\cal G}\circ{\cal B}(\rho_{AB})}^{-1} 
    = \sum_{\lambda k}\sqrt{P_\Lambda(\lambda) P_{K}(k)}^{-1}\sum_{j:c_{\lambda,k,j}\neq 0} \sqrt{\frac{d_{\cal H_\lambda}}{c_{\lambda,k,j}}}I\otimes \vert m_{\lambda, k, j}\rangle\langle m_{\lambda, k, j}\vert \otimes\vert \base{k}\rangle\langle \base{k}\vert .
\end{align}
We then have
\begin{equation}\begin{aligned}
&\sqrt{{\cal G}\circ{\cal B} (\rho_{AB})}^{-1}{\cal B}(\rho_{AB})\sqrt{{\cal G}\circ{\cal B} (\rho_{AB})}^{-1}\\
    &\quad= \sum_{\lambda\lambda'k} e^{i\theta_{\lambda,k}-i\theta_{\lambda',k}}d_{\cal H_\lambda}\vert \tilde \Phi_{\lambda, k}\rangle\langle\tilde\Phi_{\lambda',k}\vert \otimes\vert \base{k}\rangle\langle \base{k}\vert 
    \Label{eq:pure Petz 1}
\end{aligned}\end{equation}
where $\ket{\tilde \Phi_{\lambda,k}}\coloneqq\sum_{j:c_{\lambda,k,j}\neq 0}\ket{r_{\lambda, k, j}}\ket{m_{\lambda, k, j}}$.

We also have 
\begin{align}
    {\cal G}(\rho_{AB}) = \sum_{\lambda}P_\Lambda(\lambda) \frac{I}{d_{\cal H_\lambda}}\otimes\tr_{\mathcal{H}_\lambda}\vert \phi_{\lambda}\rangle\langle\phi_{\lambda}\vert  
\end{align}
and 
\begin{align}
    \sqrt{{\cal G}(\rho_{AB})} = \sum_{\lambda}\sqrt{P_\Lambda(\lambda)} \frac{I}{\sqrt{d_{\cal H_\lambda}}}\otimes\sqrt{\tr_{\mathcal{H}_\lambda}\vert \phi_{\lambda}\rangle\langle\phi_{\lambda}\vert}  
\end{align}
where $\ket{\phi_{\lambda}} \coloneqq \sum_k \sqrt{P_{K}(k)}e^{i\theta_{\lambda,k}}\sum_j \sqrt{c_{\lambda,k,j}}\ket{r_{\lambda, k, j}}_{\cal H_\lambda}\ket{m_{\lambda, k, j}}_{\mathcal{M}_\lambda}\ket{\base{k}}_B$.

The fact that 
\bal
\sqrt{\tr_{\mathcal{H}_\lambda}\vert \phi_{\lambda}\rangle\langle\phi_{\lambda}\vert} &= \sum_{kk'} \sqrt{P_{K}(k)P_K(k')}e^{i\theta_{\lambda,k}-i\theta_{\lambda,k'}}\sum_{jj'} \left(c_{\lambda,k,j} c_{\lambda,k',j'}\right)^{1/4}\\
&\qquad \braket{r_{\lambda,k',j'}}{r_{\lambda,k,j}}\ketbra{m_{\lambda, k, j}}{m_{\lambda,k',j'}}\otimes\ketbra{\base{k}}{\base{k'}}
\label{eq:square root}
\eal
can be seen as follows. 
The square of the right-hand side becomes
\begin{equation}\begin{aligned}
&\sum_{kk'k''} \sqrt{P_{K}(k)P_K(k'')}P_K(k')e^{i\theta_{\lambda,k}-i\theta_{\lambda,k''}}\sum_{jj'j''} \left(c_{\lambda,k,j} c_{\lambda,k'',j''}\right)^{1/4}\sqrt{c_{\lambda,k',j'}}\\
&\qquad \braket{r_{\lambda,k',j'}}{r_{\lambda,k,j}}\braket{r_{\lambda,k'',j''}}{r_{\lambda,k',j'}}\ketbra{m_{\lambda, k, j}}{m_{\lambda,k'',j''}}\otimes\ketbra{\base{k}}{\base{k''}}\\
&=\sum_{kk'k''} \sqrt{P_{K}(k)P_K(k'')}P_K(k')e^{i\theta_{\lambda,k}-i\theta_{\lambda,k''}}\sum_{jj''} \left(c_{\lambda,k,j} c_{\lambda,k'',j''}\right)^{1/4}\\
&\qquad \bra{r_{\lambda,k'',j''}}\left(\sum_{j'}\sqrt{c_{\lambda,k',j'}}\ket{r_{\lambda,k',j'}}\bra{r_{\lambda,k',j'}}\right)\ket{r_{\lambda,k,j}}\ketbra{m_{\lambda, k, j}}{m_{\lambda,k'',j''}}\otimes\ketbra{\base{k}}{\base{k''}}.
\label{eq:square of square root first}
\end{aligned}\end{equation}
Taking the matrix power in both sides of \eqref{eq:condition reduced state} results in
\bal
 \sum_j \left(c_{\lambda,k,j}\right)^a \dm{r_{\lambda, k, j}}= \sum_j \left(c_{\lambda,k',j}\right)^a \dm{r_{\lambda,k',j}},\ \forall k,k'
\label{eq:power}
\eal
for every real number $a$.
This implies 
\begin{equation}\begin{aligned}
 &\sum_{j'}\sqrt{c_{\lambda,k',j'}} \dm{r_{\lambda,k',j'}} \\
 &= \left(\sum_{j'} \left(c_{\lambda,k',j'}\right)^{1/4} \dm{r_{\lambda,k',j'}}\right)\left(\sum_{j'} \left(c_{\lambda,k',j'}\right)^{1/4} \dm{r_{\lambda,k',j'}}\right)\\
 &=\left(\sum_{j'} \left(c_{\lambda,k'',j'}\right)^{1/4} \dm{r_{\lambda,k'',j'}}\right)\left(\sum_{j'}\left(c_{\lambda,k,j'}\right)^{1/4} \dm{r_{\lambda,k,j'}}\right).
 \label{eq:decomposing square root}
\end{aligned}\end{equation}

Substituting \eqref{eq:decomposing square root} into \eqref{eq:square of square root first} gives 
\begin{equation}\begin{aligned}
&\sum_{kk'k''} \sqrt{P_{K}(k)P_K(k'')}P_K(k')e^{i\theta_{\lambda,k}-i\theta_{\lambda,k''}}\sum_{jj''} \sqrt{c_{\lambda,k,j} c_{\lambda,k'',j''}}\\
&\qquad \braket{r_{\lambda,k'',j''}}{r_{\lambda,k,j}}\ketbra{m_{\lambda, k, j}}{m_{\lambda,k'',j''}}\otimes\ketbra{\base{k}}{\base{k''}}\\
&=\sum_{kk''} \sqrt{P_{K}(k)P_K(k'')}e^{i\theta_{\lambda,k}-i\theta_{\lambda,k''}}\sum_{jj''} \sqrt{c_{\lambda,k,j} c_{\lambda,k'',j''}}\\
&\qquad \braket{r_{\lambda,k'',j''}}{r_{\lambda,k,j}}\ketbra{m_{\lambda, k, j}}{m_{\lambda,k'',j''}}\otimes\ketbra{\base{k}}{\base{k''}}\\
&=\tr_{\mathcal{H}_\lambda}\dm{\phi_\lambda},
\end{aligned}\end{equation}
which confirms \eqref{eq:square root}.

Combing this with \eqref{eq:pure Petz 1} and noting that 
\begin{equation}\begin{aligned}    I\otimes\sqrt{\tr_{\mathcal{H}_\lambda}\dm{\phi_\lambda}}\ket {\tilde\Phi_{\lambda,k}}\ket{\base{k}}&= \sum_{k'} \sqrt{P_{K}(k')P_K(k)}e^{i\theta_{\lambda,k'}-i\theta_{\lambda,k}}\sum_{jj'} \left(c_{\lambda,k,j} c_{\lambda,k',j'}\right)^{1/4}\\
&\qquad \braket{r_{\lambda,k,j}}{r_{\lambda,k',j'}}\ket{r_{\lambda,k,j}}\ket{m_{\lambda,k',j'}}\otimes\ket{\base{k'}},
\end{aligned}\end{equation}
we get
\begin{equation}\begin{aligned}
    &\sqrt{{\cal G}(\rho_{AB})}\sqrt{{\cal G}\circ{\cal B} (\rho_{AB})}^{-1}{\cal B}(\rho_{AB})\sqrt{{\cal G}\circ{\cal B} (\rho_{AB})}^{-1}\sqrt{{\cal G}(\rho_{AB})} \\
    &= \sum_{\lambda\lambda'k}\sqrt{P_\Lambda(\lambda)P_\Lambda(\lambda')} e^{i\theta_{\lambda,k}- i\theta_{\lambda',k}}\sum_{k'k''} \sqrt{P_{K}(k')P_K(k'')}P_K(k)e^{i\theta_{\lambda,k'}-i\theta_{\lambda,k} + i\theta_{\lambda',k}-i\theta_{\lambda',k''}}\\
    &\quad \sum_{jj'}\left(c_{\lambda,k,j} c_{\lambda,k',j'}\right)^{1/4}\braket{r_{\lambda,k,j}}{r_{\lambda,k',j'}}\sum_{lj''}\left(c_{\lambda',k,l} c_{\lambda',k'',j''}\right)^{1/4}\braket{r_{\lambda',k'',j''}}{r_{\lambda',k,l}}\\
    &\qquad\ketbra{r_{\lambda,k,j}}{r_{\lambda',k,l}}\otimes\ketbra{m_{\lambda,k',j'}}{m_{\lambda',k'',j''}}\otimes\ketbra{\base{k'}}{\base{k''}} \\
    &= \sum_{\lambda\lambda'}\sqrt{P_\Lambda(\lambda)P_\Lambda(\lambda')} \sum_{k'k''} \sqrt{P_{K}(k')P_K(k'')}e^{i\theta_{\lambda,k'}-i\theta_{\lambda',k''}}\\
    &\quad \sum_{j'j''}\sqrt{c_{\lambda,k',j'}}\sqrt{c_{\lambda',k'',j''}}\ketbra{r_{\lambda,k',j'}}{r_{\lambda',k'',j''}}\otimes\ketbra{m_{\lambda,k',j'}}{m_{\lambda',k'',j''}}\otimes\ketbra{\base{k'}}{\base{k''}} \\
    &=\vert \Psi\rangle\langle\Psi\vert ,
\end{aligned}\end{equation}
where in the second equality we used \eqref{eq:power}.
This confirms \eqref{eq:equality condition general}.
\end{appendices}

\bibliography{references}


\begin{thebibliography}{61}
\ifx \bisbn   \undefined \def \bisbn  #1{ISBN #1}\fi
\ifx \binits  \undefined \def \binits#1{#1}\fi
\ifx \bauthor  \undefined \def \bauthor#1{#1}\fi
\ifx \batitle  \undefined \def \batitle#1{#1}\fi
\ifx \bjtitle  \undefined \def \bjtitle#1{#1}\fi
\ifx \bvolume  \undefined \def \bvolume#1{\textbf{#1}}\fi
\ifx \byear  \undefined \def \byear#1{#1}\fi
\ifx \bissue  \undefined \def \bissue#1{#1}\fi
\ifx \bfpage  \undefined \def \bfpage#1{#1}\fi
\ifx \blpage  \undefined \def \blpage #1{#1}\fi
\ifx \burl  \undefined \def \burl#1{\textsf{#1}}\fi
\ifx \doiurl  \undefined \def \doiurl#1{\url{https://doi.org/#1}}\fi
\ifx \betal  \undefined \def \betal{\textit{et al.}}\fi
\ifx \binstitute  \undefined \def \binstitute#1{#1}\fi
\ifx \binstitutionaled  \undefined \def \binstitutionaled#1{#1}\fi
\ifx \bctitle  \undefined \def \bctitle#1{#1}\fi
\ifx \beditor  \undefined \def \beditor#1{#1}\fi
\ifx \bpublisher  \undefined \def \bpublisher#1{#1}\fi
\ifx \bbtitle  \undefined \def \bbtitle#1{#1}\fi
\ifx \bedition  \undefined \def \bedition#1{#1}\fi
\ifx \bseriesno  \undefined \def \bseriesno#1{#1}\fi
\ifx \blocation  \undefined \def \blocation#1{#1}\fi
\ifx \bsertitle  \undefined \def \bsertitle#1{#1}\fi
\ifx \bsnm \undefined \def \bsnm#1{#1}\fi
\ifx \bsuffix \undefined \def \bsuffix#1{#1}\fi
\ifx \bparticle \undefined \def \bparticle#1{#1}\fi
\ifx \barticle \undefined \def \barticle#1{#1}\fi
\bibcommenthead
\ifx \bconfdate \undefined \def \bconfdate #1{#1}\fi
\ifx \botherref \undefined \def \botherref #1{#1}\fi
\ifx \url \undefined \def \url#1{\textsf{#1}}\fi
\ifx \bchapter \undefined \def \bchapter#1{#1}\fi
\ifx \bbook \undefined \def \bbook#1{#1}\fi
\ifx \bcomment \undefined \def \bcomment#1{#1}\fi
\ifx \oauthor \undefined \def \oauthor#1{#1}\fi
\ifx \citeauthoryear \undefined \def \citeauthoryear#1{#1}\fi
\ifx \endbibitem  \undefined \def \endbibitem {}\fi
\ifx \bconflocation  \undefined \def \bconflocation#1{#1}\fi
\ifx \arxivurl  \undefined \def \arxivurl#1{\textsf{#1}}\fi
\csname PreBibitemsHook\endcsname

\bibitem{bennett1992communication}
\begin{barticle}
\bauthor{\bsnm{Bennett}, \binits{C.H.}},
\bauthor{\bsnm{Wiesner}, \binits{S.J.}}:
\batitle{Communication via one- and two-particle operators on
  einstein-podolsky-rosen states}.
\bjtitle{Phys. Rev. Lett.}
\bvolume{69},
\bfpage{2881}--\blpage{2884}
(\byear{1992}).
\doiurl{10.1103/PhysRevLett.69.2881}
\end{barticle}
\endbibitem

\bibitem{hiroshima2001optimal}
\begin{barticle}
\bauthor{\bsnm{Hiroshima}, \binits{T.}}:
\batitle{Optimal dense coding with mixed state entanglement}.
\bjtitle{J. Phys. A: Math. Gen.}
\bvolume{34}(\bissue{35}),
\bfpage{6907}
(\byear{2001}).
\doiurl{10.1088/0305-4470/34/35/316}
\end{barticle}
\endbibitem

\bibitem{bowen2001classical}
\begin{barticle}
\bauthor{\bsnm{Bowen}, \binits{G.}}:
\batitle{Classical information capacity of superdense coding}.
\bjtitle{Phys. Rev. A}
\bvolume{63},
\bfpage{022302}
(\byear{2001}).
\doiurl{10.1103/PhysRevA.63.022302}
\end{barticle}
\endbibitem

\bibitem{horodecki2001classical}
\begin{barticle}
\bauthor{\bsnm{Horodecki}, \binits{M.}},
\bauthor{\bsnm{Horodecki}, \binits{P.}},
\bauthor{\bsnm{Horodecki}, \binits{R.}},
\bauthor{\bsnm{Leung}, \binits{D.}},
\bauthor{\bsnm{Terhal}, \binits{B.}}:
\batitle{Classical capacity of a noiseless quantum channel assisted by noisy
  entanglement}.
\bjtitle{Quantum Inf. Comput.}
\bvolume{1},
\bfpage{70}
(\byear{2001})
\end{barticle}
\endbibitem

\bibitem{winter2002scalable}
\begin{barticle}
\bauthor{\bsnm{Winter}, \binits{A.}}:
\batitle{{Scalable programmable quantum gates and a new aspect of the
  additivity problem for the classical capacity of quantum channels}}.
\bjtitle{J. Math. Phys.}
\bvolume{43}(\bissue{9}),
\bfpage{4341}--\blpage{4352}
(\byear{2002})
{\href{https://arxiv.org/abs/https://pubs.aip.org/aip/jmp/article-pdf/43/9/4341/8171944/4341\_1\_online.pdf}{{https://pubs.aip.org/aip/jmp/article-pdf/43/9/4341/8171944/4341\_1\_online.pdf}}}.
\doiurl{10.1063/1.1498489}
\end{barticle}
\endbibitem

\bibitem{bruss2004distributed}
\begin{barticle}
\bauthor{\bsnm{Bru\ss{}}, \binits{D.}},
\bauthor{\bsnm{D'Ariano}, \binits{G.M.}},
\bauthor{\bsnm{Lewenstein}, \binits{M.}},
\bauthor{\bsnm{Macchiavello}, \binits{C.}},
\bauthor{\bsnm{Sen(De)}, \binits{A.}},
\bauthor{\bsnm{Sen}, \binits{U.}}:
\batitle{Distributed quantum dense coding}.
\bjtitle{Phys. Rev. Lett.}
\bvolume{93},
\bfpage{210501}
(\byear{2004}).
\doiurl{10.1103/PhysRevLett.93.210501}
\end{barticle}
\endbibitem

\bibitem{beran2008nonoptimality}
\begin{barticle}
\bauthor{\bsnm{Beran}, \binits{M.R.}},
\bauthor{\bsnm{Cohen}, \binits{S.M.}}:
\batitle{Nonoptimality of unitary encoding with quantum channels assisted by
  entanglement}.
\bjtitle{Phys. Rev. A}
\bvolume{78},
\bfpage{062337}
(\byear{2008}).
\doiurl{10.1103/PhysRevA.78.062337}
\end{barticle}
\endbibitem

\bibitem{horodecki2012quantum}
\begin{barticle}
\bauthor{\bsnm{Horodecki}, \binits{M.}},
\bauthor{\bsnm{Piani}, \binits{M.}}:
\batitle{On quantum advantage in dense coding}.
\bjtitle{J. Phys. A: Math. Theor.}
\bvolume{45}(\bissue{10}),
\bfpage{105306}
(\byear{2012}).
\doiurl{10.1088/1751-8113/45/10/105306}
\end{barticle}
\endbibitem

\bibitem{datta2015second}
\begin{barticle}
\bauthor{\bsnm{Datta}, \binits{N.}},
\bauthor{\bsnm{Leditzky}, \binits{F.}}:
\batitle{Second-order asymptotics for source coding, dense coding, and
  pure-state entanglement conversions}.
\bjtitle{IEEE Trans. Inf. Theory}
\bvolume{61}(\bissue{1}),
\bfpage{582}--\blpage{608}
(\byear{2015}).
\doiurl{10.1109/TIT.2014.2366994}
\end{barticle}
\endbibitem

\bibitem{das2019quantum}
\begin{barticle}
\bauthor{\bsnm{Das}, \binits{S.}},
\bauthor{\bsnm{Wilde}, \binits{M.M.}}:
\batitle{Quantum rebound capacity}.
\bjtitle{Phys. Rev. A}
\bvolume{100},
\bfpage{030302}
(\byear{2019}).
\doiurl{10.1103/PhysRevA.100.030302}
\end{barticle}
\endbibitem

\bibitem{laurenza2019dense}
\begin{barticle}
\bauthor{\bsnm{Laurenza}, \binits{R.}},
\bauthor{\bsnm{Lupo}, \binits{C.}},
\bauthor{\bsnm{Lloyd}, \binits{S.}},
\bauthor{\bsnm{Pirandola}, \binits{S.}}:
\batitle{Dense coding capacity of a quantum channel}.
\bjtitle{Phys. Rev. Res.}
\bvolume{2},
\bfpage{023023}
(\byear{2020}).
\doiurl{10.1103/PhysRevResearch.2.023023}
\end{barticle}
\endbibitem

\bibitem{wakakuwa2020superdense}
\begin{barticle}
\bauthor{\bsnm{Wakakuwa}, \binits{E.}}:
\batitle{Superdense coding in the resource theory of asymmetry}.
\bjtitle{Phys. Rev. A}
\bvolume{104},
\bfpage{042413}
(\byear{2021}).
\doiurl{10.1103/PhysRevA.104.042413}
\end{barticle}
\endbibitem

\bibitem{hayashiWang}
\begin{barticle}
\bauthor{\bsnm{Hayashi}, \binits{M.}},
\bauthor{\bsnm{Wang}, \binits{K.}}:
\batitle{Dense coding with locality restriction on decoders: Quantum encoders
  versus superquantum encoders}.
\bjtitle{PRX Quantum}
\bvolume{3},
\bfpage{030346}
(\byear{2022}).
\doiurl{10.1103/PRXQuantum.3.030346}
\end{barticle}
\endbibitem

\bibitem{Zhao2009long-lived}
\begin{barticle}
\bauthor{\bsnm{Zhao}, \binits{R.}},
\bauthor{\bsnm{Dudin}, \binits{Y.O.}},
\bauthor{\bsnm{Jenkins}, \binits{S.D.}},
\bauthor{\bsnm{Campbell}, \binits{C.J.}},
\bauthor{\bsnm{Matsukevich}, \binits{D.N.}},
\bauthor{\bsnm{Kennedy}, \binits{T.a.B.}},
\bauthor{\bsnm{Kuzmich}, \binits{A.}}:
\batitle{Long-lived quantum memory}.
\bjtitle{Nature Physics}
\bvolume{5}(\bissue{2}),
\bfpage{100}--\blpage{104}
(\byear{2009}).
\doiurl{10.1038/nphys1152}
\end{barticle}
\endbibitem

\bibitem{Khabat2016quantum}
\begin{barticle}
\bauthor{\bsnm{Heshami}, \binits{K.}},
\bauthor{\bsnm{England}, \binits{D.G.}},
\bauthor{\bsnm{Humphreys}, \binits{P.C.}},
\bauthor{\bsnm{Bustard}, \binits{P.J.}},
\bauthor{\bsnm{Acosta}, \binits{V.M.}},
\bauthor{\bsnm{Nunn}, \binits{J.}},
\bauthor{\bsnm{Sussman}, \binits{B.J.}}:
\batitle{Quantum memories: emerging applications and recent advances}.
\bjtitle{J. Mod. Opt.}
\bvolume{63}(\bissue{20}),
\bfpage{2005}--\blpage{2028}
(\byear{2016}).
\doiurl{10.1080/09500340.2016.1148212}
\end{barticle}
\endbibitem

\bibitem{korzekwa2019encoding}
\begin{barticle}
\bauthor{\bsnm{Korzekwa}, \binits{K.}},
\bauthor{\bsnm{Pucha^^c5^^82a}, \binits{Z.}},
\bauthor{\bsnm{Tomamichel}, \binits{M.}},
\bauthor{\bsnm{^^c5^^bbyczkowski}, \binits{K.}}:
\batitle{Encoding classical information into quantum resources}.
\bjtitle{IEEE Trans. Inf. Theory}
\bvolume{68}(\bissue{7}),
\bfpage{4518}--\blpage{4530}
(\byear{2022}).
\doiurl{10.1109/TIT.2022.3157440}
\end{barticle}
\endbibitem

\bibitem{delsarte1982algebraic}
\begin{barticle}
\bauthor{\bsnm{Delsarte}, \binits{P.}},
\bauthor{\bsnm{Piret}, \binits{P.}}:
\batitle{Algebraic constructions of shannon codes for regular channels}.
\bjtitle{IEEE Trans. Inf. Theory}
\bvolume{28}(\bissue{4}),
\bfpage{593}--\blpage{599}
(\byear{1982}).
\doiurl{10.1109/TIT.1982.1056525}
\end{barticle}
\endbibitem

\bibitem{hayashi2011exponential}
\begin{barticle}
\bauthor{\bsnm{Hayashi}, \binits{M.}}:
\batitle{Exponential decreasing rate of leaked information in universal random
  privacy amplification}.
\bjtitle{IEEE Trans. Inf. Theory}
\bvolume{57}(\bissue{6}),
\bfpage{3989}--\blpage{4001}
(\byear{2011}).
\doiurl{10.1109/TIT.2011.2110950}
\end{barticle}
\endbibitem

\bibitem{hayashi2015quantum}
\begin{barticle}
\bauthor{\bsnm{Hayashi}, \binits{M.}}:
\batitle{Quantum wiretap channel with non-uniform random number and its
  exponent and equivocation rate of leaked information}.
\bjtitle{IEEE Trans. Inf. Theory}
\bvolume{61}(\bissue{10}),
\bfpage{5595}--\blpage{5622}
(\byear{2015}).
\doiurl{10.1109/TIT.2015.2464215}
\end{barticle}
\endbibitem

\bibitem{dobrushin1963asymptotic}
\begin{barticle}
\bauthor{\bsnm{Dobrushin}, \binits{R.}}:
\batitle{Asymptotic optimality of group and systematic codes for some
  channels}.
\bjtitle{Theory of Probability \& Its Applications}
\bvolume{8}(\bissue{1}),
\bfpage{47}--\blpage{60}
(\byear{1963})
\end{barticle}
\endbibitem

\bibitem{elias1955coding}
\begin{barticle}
\bauthor{\bsnm{Elias}, \binits{P.}}:
\batitle{Coding for noisy channels}.
\bjtitle{IRE Conv. Rec.}
\bvolume{3},
\bfpage{37}--\blpage{46}
(\byear{1955})
\end{barticle}
\endbibitem

\bibitem{hayashi2020finite}
\begin{botherref}
\oauthor{\bsnm{Hayashi}, \binits{M.}},
\oauthor{\bsnm{Watanabe}, \binits{S.}}:
Finite-length analyses for source and channel coding on markov chains.
Entropy
\textbf{22}(4)
(2020).
\doiurl{10.3390/e22040460}
\end{botherref}
\endbibitem

\bibitem{Wu-Long-Hayashi}
\begin{barticle}
\bauthor{\bsnm{Wu}, \binits{J.}},
\bauthor{\bsnm{Long}, \binits{G.-L.}},
\bauthor{\bsnm{Hayashi}, \binits{M.}}:
\batitle{Quantum secure direct communication with private dense coding using a
  general preshared quantum state}.
\bjtitle{Phys. Rev. Appl.}
\bvolume{17},
\bfpage{064011}
(\byear{2022}).
\doiurl{10.1103/PhysRevApplied.17.064011}
\end{barticle}
\endbibitem

\bibitem{streltsov}
\begin{barticle}
\bauthor{\bsnm{Streltsov}, \binits{A.}},
\bauthor{\bsnm{Adesso}, \binits{G.}},
\bauthor{\bsnm{Plenio}, \binits{M.B.}}:
\batitle{Colloquium: Quantum coherence as a resource}.
\bjtitle{Rev. Mod. Phys.}
\bvolume{89},
\bfpage{041003}
(\byear{2017}).
\doiurl{10.1103/RevModPhys.89.041003}
\end{barticle}
\endbibitem

\bibitem{bartlett2007reference}
\begin{barticle}
\bauthor{\bsnm{Bartlett}, \binits{S.D.}},
\bauthor{\bsnm{Rudolph}, \binits{T.}},
\bauthor{\bsnm{Spekkens}, \binits{R.W.}}:
\batitle{Reference frames, superselection rules, and quantum information}.
\bjtitle{Rev. Mod. Phys.}
\bvolume{79},
\bfpage{555}--\blpage{609}
(\byear{2007}).
\doiurl{10.1103/RevModPhys.79.555}
\end{barticle}
\endbibitem

\bibitem{gour2008resource}
\begin{barticle}
\bauthor{\bsnm{Gour}, \binits{G.}},
\bauthor{\bsnm{Spekkens}, \binits{R.W.}}:
\batitle{The resource theory of quantum reference frames: manipulations and
  monotones}.
\bjtitle{New J. Phys.}
\bvolume{10}(\bissue{3}),
\bfpage{033023}
(\byear{2008}).
\doiurl{10.1088/1367-2630/10/3/033023}
\end{barticle}
\endbibitem

\bibitem{gour2009measuring}
\begin{barticle}
\bauthor{\bsnm{Gour}, \binits{G.}},
\bauthor{\bsnm{Marvian}, \binits{I.}},
\bauthor{\bsnm{Spekkens}, \binits{R.W.}}:
\batitle{Measuring the quality of a quantum reference frame: The relative
  entropy of frameness}.
\bjtitle{Phys. Rev. A}
\bvolume{80},
\bfpage{012307}
(\byear{2009}).
\doiurl{10.1103/PhysRevA.80.012307}
\end{barticle}
\endbibitem

\bibitem{bartlett2009quantum}
\begin{barticle}
\bauthor{\bsnm{Bartlett}, \binits{S.D.}},
\bauthor{\bsnm{Rudolph}, \binits{T.}},
\bauthor{\bsnm{Spekkens}, \binits{R.W.}},
\bauthor{\bsnm{Turner}, \binits{P.S.}}:
\batitle{Quantum communication using a bounded-size quantum reference frame}.
\bjtitle{New J. Phys.}
\bvolume{11}(\bissue{6}),
\bfpage{063013}
(\byear{2009}).
\doiurl{10.1088/1367-2630/11/6/063013}
\end{barticle}
\endbibitem

\bibitem{marvian2012symmetry}
\begin{botherref}
\oauthor{\bsnm{Marvian~Mashhad}, \binits{I.}}:
Symmetry, asymmetry and quantum information.
PhD thesis
(2012)
\end{botherref}
\endbibitem

\bibitem{marvian2013theory}
\begin{barticle}
\bauthor{\bsnm{Marvian}, \binits{I.}},
\bauthor{\bsnm{Spekkens}, \binits{R.W.}}:
\batitle{The theory of manipulations of pure state asymmetry: I. basic tools,
  equivalence classes and single copy transformations}.
\bjtitle{New J. Phys.}
\bvolume{15}(\bissue{3}),
\bfpage{033001}
(\byear{2013}).
\doiurl{10.1088/1367-2630/15/3/033001}
\end{barticle}
\endbibitem

\bibitem{marvian2014asymmetry}
\begin{barticle}
\bauthor{\bsnm{Marvian}, \binits{I.}},
\bauthor{\bsnm{Spekkens}, \binits{R.W.}}:
\batitle{Asymmetry properties of pure quantum states}.
\bjtitle{Phys. Rev. A}
\bvolume{90},
\bfpage{014102}
(\byear{2014}).
\doiurl{10.1103/PhysRevA.90.014102}
\end{barticle}
\endbibitem

\bibitem{marvian2014extending}
\begin{barticle}
\bauthor{\bsnm{Marvian}, \binits{I.}},
\bauthor{\bsnm{Spekkens}, \binits{R.W.}}:
\batitle{Extending noether’s theorem by quantifying the asymmetry of quantum
  states}.
\bjtitle{Nat. Commun.}
\bvolume{5}(\bissue{1}),
\bfpage{1}--\blpage{8}
(\byear{2014}).
\doiurl{10.1038/ncomms4821}
\end{barticle}
\endbibitem

\bibitem{marvian2014modes}
\begin{barticle}
\bauthor{\bsnm{Marvian}, \binits{I.}},
\bauthor{\bsnm{Spekkens}, \binits{R.W.}}:
\batitle{Modes of asymmetry: The application of harmonic analysis to symmetric
  quantum dynamics and quantum reference frames}.
\bjtitle{Phys. Rev. A}
\bvolume{90},
\bfpage{062110}
(\byear{2014}).
\doiurl{10.1103/PhysRevA.90.062110}
\end{barticle}
\endbibitem

\bibitem{wakakuwa2017symmetrizing}
\begin{barticle}
\bauthor{\bsnm{Wakakuwa}, \binits{E.}}:
\batitle{Symmetrizing cost of quantum states}.
\bjtitle{Phys. Rev. A}
\bvolume{95},
\bfpage{032328}
(\byear{2017}).
\doiurl{10.1103/PhysRevA.95.032328}
\end{barticle}
\endbibitem

\bibitem{hiai1991proper}
\begin{barticle}
\bauthor{\bsnm{Hiai}, \binits{F.}},
\bauthor{\bsnm{Petz}, \binits{D.}}:
\batitle{The proper formula for relative entropy and its asymptotics in quantum
  probability}.
\bjtitle{Commun. Math. Phys.}
\bvolume{143}(\bissue{1}),
\bfpage{99}--\blpage{114}
(\byear{1991}).
\doiurl{10.1007/BF02100287}
\end{barticle}
\endbibitem

\bibitem{hayashi2017group}
\begin{bbook}
\bauthor{\bsnm{Hayashi}, \binits{M.}}:
\bbtitle{A Group Theoretic Approach to Quantum Information}.
\bpublisher{Springer},
\blocation{Switzerland}
(\byear{2017}).
\burl{https://link.springer.com/book/10.1007/978-3-319-45241-8}
\end{bbook}
\endbibitem

\bibitem{Lloyd2008enhanced}
\begin{barticle}
\bauthor{\bsnm{Lloyd}, \binits{S.}}:
\batitle{Enhanced sensitivity of photodetection via quantum illumination}.
\bjtitle{Science}
\bvolume{321}(\bissue{5895}),
\bfpage{1463}--\blpage{1465}
(\byear{2008})
{\href{https://arxiv.org/abs/https://www.science.org/doi/pdf/10.1126/science.1160627}{{https://www.science.org/doi/pdf/10.1126/science.1160627}}}.
\doiurl{10.1126/science.1160627}
\end{barticle}
\endbibitem

\bibitem{Tan2008quantum}
\begin{barticle}
\bauthor{\bsnm{Tan}, \binits{S.-H.}},
\bauthor{\bsnm{Erkmen}, \binits{B.I.}},
\bauthor{\bsnm{Giovannetti}, \binits{V.}},
\bauthor{\bsnm{Guha}, \binits{S.}},
\bauthor{\bsnm{Lloyd}, \binits{S.}},
\bauthor{\bsnm{Maccone}, \binits{L.}},
\bauthor{\bsnm{Pirandola}, \binits{S.}},
\bauthor{\bsnm{Shapiro}, \binits{J.H.}}:
\batitle{Quantum illumination with gaussian states}.
\bjtitle{Phys. Rev. Lett.}
\bvolume{101},
\bfpage{253601}
(\byear{2008}).
\doiurl{10.1103/PhysRevLett.101.253601}
\end{barticle}
\endbibitem

\bibitem{dePalma}
\begin{barticle}
\bauthor{\bsnm{De~Palma}, \binits{G.}},
\bauthor{\bsnm{Borregaard}, \binits{J.}}:
\batitle{Minimum error probability of quantum illumination}.
\bjtitle{Phys. Rev. A}
\bvolume{98},
\bfpage{012101}
(\byear{2018}).
\doiurl{10.1103/PhysRevA.98.012101}
\end{barticle}
\endbibitem

\bibitem{deVicente}
\begin{barticle}
\bauthor{\bparticle{de} \bsnm{Vicente}, \binits{J.I.}},
\bauthor{\bsnm{Streltsov}, \binits{A.}}:
\batitle{Genuine quantum coherence}.
\bjtitle{J. Phys. A: Math. Theor.}
\bvolume{50}(\bissue{4}),
\bfpage{045301}
(\byear{2016}).
\doiurl{10.1088/1751-8121/50/4/045301}
\end{barticle}
\endbibitem

\bibitem{Holevo1998capacity}
\begin{barticle}
\bauthor{\bsnm{Holevo}, \binits{A.S.}}:
\batitle{The capacity of the quantum channel with general signal states}.
\bjtitle{IEEE Trans. Inf. Theory}
\bvolume{44}(\bissue{1}),
\bfpage{269}--\blpage{273}
(\byear{1998}).
\doiurl{10.1109/18.651037}
\end{barticle}
\endbibitem

\bibitem{schumacher1997sending}
\begin{barticle}
\bauthor{\bsnm{Schumacher}, \binits{B.}},
\bauthor{\bsnm{Westmoreland}, \binits{M.D.}}:
\batitle{Sending classical information via noisy quantum channels}.
\bjtitle{Phys. Rev. A}
\bvolume{56},
\bfpage{131}--\blpage{138}
(\byear{1997}).
\doiurl{10.1103/PhysRevA.56.131}
\end{barticle}
\endbibitem

\bibitem{Holevo}
\begin{barticle}
\bauthor{\bsnm{Holevo}, \binits{A.S.}}:
\batitle{The capacity of the quantum channel with general signal states}.
\bjtitle{IEEE Trans. Inf. Theory}
\bvolume{44}(\bissue{1}),
\bfpage{269}--\blpage{273}
(\byear{1998}).
\doiurl{10.1109/18.651037}
\end{barticle}
\endbibitem

\bibitem{Schumacher}
\begin{barticle}
\bauthor{\bsnm{Schumacher}, \binits{B.}},
\bauthor{\bsnm{Westmoreland}, \binits{M.D.}}:
\batitle{Sending classical information via noisy quantum channels}.
\bjtitle{Phys. Rev. A}
\bvolume{56},
\bfpage{131}--\blpage{138}
(\byear{1997}).
\doiurl{10.1103/PhysRevA.56.131}
\end{barticle}
\endbibitem

\bibitem{Deng2003}
\begin{barticle}
\bauthor{\bsnm{Deng}, \binits{F.-G.}},
\bauthor{\bsnm{Long}, \binits{G.L.}},
\bauthor{\bsnm{Liu}, \binits{X.-S.}}:
\batitle{Two-step quantum direct communication protocol using the
  einstein-podolsky-rosen pair block}.
\bjtitle{Phys. Rev. A}
\bvolume{68},
\bfpage{042317}
(\byear{2003}).
\doiurl{10.1103/PhysRevA.68.042317}
\end{barticle}
\endbibitem

\bibitem{Petz}
\begin{barticle}
\bauthor{\bsnm{Petz}, \binits{D.}}:
\batitle{Sufficient subalgebras and the relative entropy of states of a von
  {Neumann} algebra}.
\bjtitle{Commun. Math. Phys.}
\bvolume{105}(\bissue{1}),
\bfpage{123}
(\byear{1986}).
\doiurl{10.1007/BF01212345}
\end{barticle}
\endbibitem

\bibitem{Hayden}
\begin{barticle}
\bauthor{\bsnm{Hayden}, \binits{P.}},
\bauthor{\bsnm{Jozsa}, \binits{R.}},
\bauthor{\bsnm{Petz}, \binits{D.}},
\bauthor{\bsnm{Winter}, \binits{A.}}:
\batitle{Structure of {States} {Which} {Satisfy} {Strong} {Subadditivity} of
  {Quantum} {Entropy} with {Equality}}.
\bjtitle{Commun Math. Phys}
\bvolume{246}(\bissue{2}),
\bfpage{359}
(\byear{2004}).
\doiurl{10.1007/s00220-004-1049-z}
\end{barticle}
\endbibitem

\bibitem{Rains1999bound}
\begin{barticle}
\bauthor{\bsnm{Rains}, \binits{E.M.}}:
\batitle{Bound on distillable entanglement}.
\bjtitle{Phys. Rev. A}
\bvolume{60},
\bfpage{179}--\blpage{184}
(\byear{1999}).
\doiurl{10.1103/PhysRevA.60.179}
\end{barticle}
\endbibitem

\bibitem{Devetak2005capacity}
\begin{barticle}
\bauthor{\bsnm{Devetak}, \binits{I.}},
\bauthor{\bsnm{Shor}, \binits{P.W.}}:
\batitle{The {Capacity} of a {Quantum} {Channel} for {Simultaneous}
  {Transmission} of {Classical} and {Quantum} {Information}}.
\bjtitle{Commun. Math. Phys.}
\bvolume{256}(\bissue{2}),
\bfpage{287}--\blpage{303}
(\byear{2005}).
\doiurl{10.1007/s00220-005-1317-6}
\end{barticle}
\endbibitem

\bibitem{PhysRevLett.85.4972}
\begin{barticle}
\bauthor{\bsnm{Walgate}, \binits{J.}},
\bauthor{\bsnm{Short}, \binits{A.J.}},
\bauthor{\bsnm{Hardy}, \binits{L.}},
\bauthor{\bsnm{Vedral}, \binits{V.}}:
\batitle{Local distinguishability of multipartite orthogonal quantum states}.
\bjtitle{Phys. Rev. Lett.}
\bvolume{85},
\bfpage{4972}--\blpage{4975}
(\byear{2000}).
\doiurl{10.1103/PhysRevLett.85.4972}
\end{barticle}
\endbibitem

\bibitem{PhysRevLett.87.277902}
\begin{barticle}
\bauthor{\bsnm{Ghosh}, \binits{S.}},
\bauthor{\bsnm{Kar}, \binits{G.}},
\bauthor{\bsnm{Roy}, \binits{A.}},
\bauthor{\bsnm{Sen(De)}, \binits{A.}},
\bauthor{\bsnm{Sen}, \binits{U.}}:
\batitle{Distinguishability of bell states}.
\bjtitle{Phys. Rev. Lett.}
\bvolume{87},
\bfpage{277902}
(\byear{2001}).
\doiurl{10.1103/PhysRevLett.87.277902}
\end{barticle}
\endbibitem

\bibitem{10.1063/1.1914731}
\begin{botherref}
\oauthor{\bsnm{Nathanson}, \binits{M.}}:
{Distinguishing bipartitite orthogonal states using LOCC: Best and worst
  cases}.
J. Math. Phys.
\textbf{46}(6)
(2005)
{\href{https://arxiv.org/abs/https://pubs.aip.org/aip/jmp/article-pdf/doi/10.1063/1.1914731/14866589/062103\_1\_online.pdf}{{https://pubs.aip.org/aip/jmp/article-pdf/doi/10.1063/1.1914731/14866589/062103\_1\_online.pdf}}}.
\doiurl{10.1063/1.1914731}.
062103
\end{botherref}
\endbibitem

\bibitem{PhysRevA.74.032108}
\begin{barticle}
\bauthor{\bsnm{Owari}, \binits{M.}},
\bauthor{\bsnm{Hayashi}, \binits{M.}}:
\batitle{Local copying and local discrimination as a study for nonlocality of a
  set of states}.
\bjtitle{Phys. Rev. A}
\bvolume{74},
\bfpage{032108}
(\byear{2006}).
\doiurl{10.1103/PhysRevA.74.032108}
\end{barticle}
\endbibitem

\bibitem{PhysRevLett.92.177905}
\begin{barticle}
\bauthor{\bsnm{Fan}, \binits{H.}}:
\batitle{Distinguishability and indistinguishability by local operations and
  classical communication}.
\bjtitle{Phys. Rev. Lett.}
\bvolume{92},
\bfpage{177905}
(\byear{2004}).
\doiurl{10.1103/PhysRevLett.92.177905}
\end{barticle}
\endbibitem

\bibitem{PhysRevA.103.052429}
\begin{barticle}
\bauthor{\bsnm{Hashimoto}, \binits{T.}},
\bauthor{\bsnm{Horibe}, \binits{M.}},
\bauthor{\bsnm{Hayashi}, \binits{A.}}:
\batitle{Simple criterion for local distinguishability of generalized bell
  states in prime dimension}.
\bjtitle{Phys. Rev. A}
\bvolume{103},
\bfpage{052429}
(\byear{2021}).
\doiurl{10.1103/PhysRevA.103.052429}
\end{barticle}
\endbibitem

\bibitem{PhysRevA.105.032455}
\begin{barticle}
\bauthor{\bsnm{Li}, \binits{M.-S.}},
\bauthor{\bsnm{Shi}, \binits{F.}},
\bauthor{\bsnm{Wang}, \binits{Y.-L.}}:
\batitle{Local discrimination of generalized bell states via commutativity}.
\bjtitle{Phys. Rev. A}
\bvolume{105},
\bfpage{032455}
(\byear{2022}).
\doiurl{10.1103/PhysRevA.105.032455}
\end{barticle}
\endbibitem

\bibitem{Yang_2022}
\begin{barticle}
\bauthor{\bsnm{Yang}, \binits{Y.-H.}},
\bauthor{\bsnm{Yan}, \binits{R.-Y.}},
\bauthor{\bsnm{Wang}, \binits{X.-L.}},
\bauthor{\bsnm{Yuan}, \binits{J.-T.}},
\bauthor{\bsnm{Zuo}, \binits{H.-J.}}:
\batitle{Novel method for one-way local distinguishability of generalized bell
  states in arbitrary dimension}.
\bjtitle{J. Phys. A: Math. Theor.}
\bvolume{55}(\bissue{1}),
\bfpage{015301}
(\byear{2021}).
\doiurl{10.1088/1751-8121/ac38ed}
\end{barticle}
\endbibitem

\bibitem{Stanley}
\begin{bbook}
\bauthor{\bsnm{Stanley}, \binits{R.P.}}:
\bbtitle{Enumerative Combinatorics: Volume 1 Second Edition}.
\bpublisher{Cambridge University Press},
\blocation{Cambridge}
(\byear{2011})
\end{bbook}
\endbibitem

\bibitem{Pirandola2011quantum}
\begin{barticle}
\bauthor{\bsnm{Pirandola}, \binits{S.}}:
\batitle{Quantum reading of a classical digital memory}.
\bjtitle{Phys. Rev. Lett.}
\bvolume{106},
\bfpage{090504}
(\byear{2011}).
\doiurl{10.1103/PhysRevLett.106.090504}
\end{barticle}
\endbibitem

\bibitem{ogawa2000strong}
\begin{barticle}
\bauthor{\bsnm{Ogawa}, \binits{T.}},
\bauthor{\bsnm{Nagaoka}, \binits{H.}}:
\batitle{Strong converse and stein's lemma in quantum hypothesis testing}.
\bjtitle{IEEE Trans. Inf. Theory}
\bvolume{46}(\bissue{7}),
\bfpage{2428}--\blpage{2433}
(\byear{2000}).
\doiurl{10.1109/18.887855}
\end{barticle}
\endbibitem

\bibitem{modi2012classical}
\begin{barticle}
\bauthor{\bsnm{Modi}, \binits{K.}},
\bauthor{\bsnm{Brodutch}, \binits{A.}},
\bauthor{\bsnm{Cable}, \binits{H.}},
\bauthor{\bsnm{Paterek}, \binits{T.}},
\bauthor{\bsnm{Vedral}, \binits{V.}}:
\batitle{The classical-quantum boundary for correlations: Discord and related
  measures}.
\bjtitle{Rev. Mod. Phys.}
\bvolume{84},
\bfpage{1655}--\blpage{1707}
(\byear{2012}).
\doiurl{10.1103/RevModPhys.84.1655}
\end{barticle}
\endbibitem

\end{thebibliography}

\end{document}